\newcommand{\openone}{\leavevmode\hbox{\small1\normalsize\kern-.33em1}}
\def\UrlSpecials{\do\~{\kern -.15em\lower .7ex\hbox{~}\kern .04em}} \catcode`~=13 
\newcommand{\nn}{\nonumber}
\newcommand{\calA}{\mathcal{A}}
\newcommand{\calF}{\mathcal{F}}
\newcommand{\calP}{\mathcal{P}}
\newcommand{\calS}{\mathcal{S}}
\newcommand{\calT}{\mathcal{T}}
\newcommand{\calX}{\mathcal{X}}
\newcommand{\calY}{\mathcal{Y}}
\newcommand{\ba}{\mathbf{a}}
\newcommand{\bs}{\mathbf{s}}
\newcommand{\bS}{\mathbf{S}}
\newcommand{\bw}{\mathbf{w}}
\newcommand{\bW}{\mathbf{W}}
\newcommand{\bx}{\mathbf{x}}
\newcommand{\bX}{\mathbf{X}}
\newcommand{\bY}{\mathbf{Y}}
\newcommand{\rma}{\mathrm{a}}
\newcommand{\rmb}{\mathrm{b}}
\newcommand{\rmd}{\mathrm{d}}
\newcommand{\rme}{\mathrm{e}}
\newcommand{\rmG}{\mathrm{G}}
\newcommand{\rmP}{\mathrm{P}}
\newcommand{\rmq}{\mathrm{q}}
\newcommand{\rmV}{\mathrm{V}}
\newcommand{\bbN}{\mathbb{N}}
\newcommand{\bbR}{\mathbb{R}}
\DeclareMathAlphabet{\mathbsf}{OT1}{cmss}{bx}{n}
\DeclareMathAlphabet{\mathssf}{OT1}{cmss}{m}{sl}
\DeclareSymbolFont{bsfletters}{OT1}{cmss}{bx}{n}  
\DeclareSymbolFont{ssfletters}{OT1}{cmss}{m}{n}
\DeclareMathSymbol{\bsfGamma}{0}{bsfletters}{'000}
\DeclareMathSymbol{\ssfGamma}{0}{ssfletters}{'000}
\DeclareMathSymbol{\bsfDelta}{0}{bsfletters}{'001}
\DeclareMathSymbol{\ssfDelta}{0}{ssfletters}{'001}
\DeclareMathSymbol{\bsfTheta}{0}{bsfletters}{'002}
\DeclareMathSymbol{\ssfTheta}{0}{ssfletters}{'002}
\DeclareMathSymbol{\bsfLambda}{0}{bsfletters}{'003}
\DeclareMathSymbol{\ssfLambda}{0}{ssfletters}{'003}
\DeclareMathSymbol{\bsfXi}{0}{bsfletters}{'004}
\DeclareMathSymbol{\ssfXi}{0}{ssfletters}{'004}
\DeclareMathSymbol{\bsfPi}{0}{bsfletters}{'005}
\DeclareMathSymbol{\ssfPi}{0}{ssfletters}{'005}
\DeclareMathSymbol{\bsfSigma}{0}{bsfletters}{'006}
\DeclareMathSymbol{\ssfSigma}{0}{ssfletters}{'006}
\DeclareMathSymbol{\bsfUpsilon}{0}{bsfletters}{'007}
\DeclareMathSymbol{\ssfUpsilon}{0}{ssfletters}{'007}
\DeclareMathSymbol{\bsfPhi}{0}{bsfletters}{'010}
\DeclareMathSymbol{\ssfPhi}{0}{ssfletters}{'010}
\DeclareMathSymbol{\bsfPsi}{0}{bsfletters}{'011}
\DeclareMathSymbol{\ssfPsi}{0}{ssfletters}{'011}
\DeclareMathSymbol{\bsfOmega}{0}{bsfletters}{'012}
\DeclareMathSymbol{\ssfOmega}{0}{ssfletters}{'012}
\newcommand{\tili}{\tilde{i}}
\newcommand{\tilM}{\tilde{M}}
\newcommand{\tilP}{\tilde{P}}
\newcommand{\hats}{\hat{s}}
\newcommand{\hatS}{\hat{S}}
\newcommand{\hatt}{\hat{t}}
\newcommand{\hatw}{\hat{w}}
\newcommand{\hatW}{\hat{W}}
\newcommand{\barx}{\bar{x}}
\newcommand{\barP}{\bar{P}}
\newcommand{\barX}{\bar{X}}
\DeclareMathOperator*{\argmax}{arg\,max}
\newtheorem{theorem}{Theorem}
\newtheorem{corollary}{Corollary}
\newtheorem{definition}{Definition}
\newcommand{\bbo}{\mathbbm{1}}
\begin{document}

\title{Resolution Limits for the Noisy Non-Adaptive 20 Questions Problem}
\author{Lin Zhou and Alfred Hero \\

\thanks{A preliminary version of this paper was presented at ISIT 2020. This work was supported in part by the National Key Research and Development Program of China under Grant 2020YFB1804800 and in part by ARO grant W911NF-15-1-0479.}
\thanks{Lin Zhou is with the School of Cyber Science and Technology, Beihang University, Beijing 100191, China, and also with the Beijing Laboratory for General Aviation Technology, Beihang University, Beijing 100191, China (Email: lzhou@buaa.edu.cn). He was with the Department of Electrical Engineering and Computer Science, University of Michigan, Ann Arbor, MI, USA, 48109-2122.}
\thanks{Alfred Hero is with the Department of Electrical Engineering and Computer Science, University of Michigan, Ann Arbor, MI, USA, 48109-2122 (Email: hero@eecs.umich.edu.).}
}
\maketitle

\begin{abstract}
We establish fundamental limits on estimation accuracy for the noisy 20 questions problem with measurement-dependent noise and introduce optimal non-adaptive procedures that achieve these limits. The minimal achievable resolution is defined as the absolute difference between the estimated and the true locations of a target over a unit cube, given a finite number of queries constrained by the excess-resolution probability. Inspired by the relationship between the 20 questions problem and the channel coding problem, we derive non-asymptotic bounds on the minimal achievable resolution to estimate the target location. Furthermore, applying the Berry--Esseen theorem to our non-asymptotic bounds, we obtain a second-order asymptotic approximation to the achievable resolution of optimal non-adaptive query procedures with a finite number of queries subject to the excess-resolution probability constraint. We specialize our second-order results to measurement-dependent versions of several channel models including the binary symmetric, the binary erasure and the binary Z- channels. As a complement, we establish a second-order asymptotic achievability bound for adaptive querying and use this to bound the benefit of adaptive querying.
\end{abstract}

\begin{IEEEkeywords}
20 Questions, Resolution, Non-adaptive, Adaptive, Second-order asymptotics, Finite blocklength analysis, Multidimensional target, Sorted posterior matching, Probably approximately correct learning
\end{IEEEkeywords}

\section{Introduction}
The noisy 20 questions problem (cf. \cite{renyi1961problem,burnashev1974interval,ulam1991adventures,pelc2002searching,jedynak2012twenty,chung2018unequal,lalitha2018improved}) arises when one aims to accurately estimate an arbitrarily distributed random variable $S$ by successively querying an oracle and using noisy responses to form an estimate $\hatS$. A central goal in this problem is to find optimal query strategies that yield a good estimate $\hatS$ of the unknown target $S$.

Depending on the query design framework, the 20 questions problem can either be adaptive or non-adaptive. In adaptive query procedures, the design of a subsequent query depends on all previous queries and noisy responses to these queries from the oracle. In non-adaptive query procedures, all the queries are designed independently in advance. For example, the bisection policy~\cite[Section 4.1]{jedynak2012twenty} is an adaptive query procedure and the dyadic policy~\cite[Section 4.2]{jedynak2012twenty} is a non-adaptive query procedure. Compared with adaptive query procedures, non-adaptive query procedures have the advantage of lower computation cost, parallelizability and no need for feedback. Depending on whether or not the noisy channel used to corrupt the noiseless responses depends on the queries, the noisy 20 questions problem is classified into two categories: querying with measurement-independent noise (e.g.,~\cite{jedynak2012twenty,chung2018unequal}); and querying with measurement-dependent noise (e.g.,~\cite{kaspi2018searching,lalitha2018improved}). As argued in \cite{kaspi2018searching}, measurement-dependent noise can be a better model in many practical applications. For example, for target localization with a sensor network, the noisy response to each query can depend on the size of the query region. Another example is in human query systems where personal biases about the target state may affect the response.

In earlier works on the noisy 20 questions problem, e.g.,~\cite{jedynak2012twenty,tsiligkaridis2014collaborative,tsiligkaridis2015decentralized}, the queries were designed to minimize the entropy of the posterior distribution of the target variable $S$. As pointed out in later works, e.g., \cite{chung2018unequal,chiu2016sequential,kaspi2018searching,lalitha2018improved,chung2017bounds}, other accuracy measures, such as the resolution and the quadratic loss are often better criteria, where the resolution is defined as the absolute difference between $S$ and its estimate $\hatS$, $|\hatS-S|$, and the quadratic loss is $(\hatS-S)^2$. In particular, in estimation problems, if one aims to minimize the differential entropy of the posterior uncertainty of the target variable, then any two queries which can reduce the entropy by the same amount are deemed equally important, even if one query achieves higher estimation accuracy. For example, one query might ask about the most significant bit of the binary expansion of the target variable while the other query might ask about a much less significant bit. These two queries induce equal reductions in the entropy of the posterior distribution for a uniformly distributed target variable. By using the resolution or the quadratic loss, which are directly related with the estimation error, to drive the query design, such ambiguity is avoided. Relations between resolution and entropy were quantified by the bounds in \cite[Theorem 1]{chung2017bounds}.

\subsection{Our Contributions}
Motivated by the scenario of limited resources, computation and response time, we obtain new results on the non-asymptotic tradeoff among the number of queries $n$, the achievable resolution $\delta$ and the excess-resolution probability $\varepsilon$ of optimal adaptive and non-adaptive query procedures for the following noisy 20 questions problem: 
estimation of the location of a target random vector $\bS=(S_1,\ldots,S_d)$ with arbitrary distribution on the unit cube of dimension $d$, i.e., $[0,1]^d$. For the case of adaptive query procedures, we derive an achievable second-order asymptotic bound on the optimal resolution, showing achievability by using an adaptive query procedure based on the variable length feedback code in \cite[Definition 1]{polyanskiy2011feedback}. We define the benefit of adaptivity, called adaptivity gain, as the logarithm of the ratio between achievable resolutions of optimal non-adaptive and adaptive query procedures. This benefit of adaptivity can be attributed to the more informative number of bits extracted by optimal adaptive querying in the binary expansion of each dimension of the target variable. We numerically evaluate a lower bound on the adaptivity gain for measurement-dependent versions of binary symmetric, binary erasure and binary Z- channels.

The main focus of this paper is on non-adaptive query procedures. Our contributions for the case of non-adaptive querying are as follows. Firstly, we derive non-asymptotic resolution bounds of optimal non-adaptive query procedures for arbitrary number of queries $n$ and any excess-resolution probability $\varepsilon$. To do so, similarly to \cite{kaspi2018searching}, we exploit the connection between the 20 questions problem and the channel coding problem. This allows us to borrow ideas from finite blocklength analyses for channel coding~\cite{polyanskiy2010finite} (see also \cite{TanBook}). In particular, we adopt the change-of-measure technique of \cite{csiszar2011information} in the achievability proof to handle the case of measurement-dependent noise.

Secondly, applying the Berry-Esseen theorem, under mild conditions on the measurement-dependent noise, we obtain a second-order asymptotic approximation to the achievable resolution of optimal non-adaptive query procedures with finite number of queries. A key implication of our result states that searching separately for each dimension is actually suboptimal in the regime of a finitely many queries while such a query scheme is optimal in the regime of an infinite number of queries (see the final remark for Theorem \ref{result:second} and the numerical example in Figure \ref{sim_non_adap_sep}). As a corollary, we establish a phase transition for optimal non-adaptive query procedures. This implies that, if one is allowed to make an infinite number of optimal non-adaptive queries, regardless of the excess-resolution probability, the asymptotic average number of bits (in the binary expansion of each dimension of the target variable) extracted per query remains the same. 

We specialize our second-order analyses to three measurement-dependent channel models: the binary symmetric, the binary erasure and the binary Z- channels. Similarly to our proofs for measurement-dependent channels, the second-order asymptotic approximation to the achievable resolution of optimal non-adaptive query procedures for measurement-independent channels is obtained. 

\subsection{Comparison to Previous Work}
\label{sec:comp}

Here we compare the contributions of our paper to related work in the literature~\cite{kaspi2018searching,chiu2016sequential}. First of all, our results hold for arbitrary discrete channels, while the results in \cite{kaspi2018searching,chiu2016sequential} were only established for a measurement-dependent binary symmetric channel. Furthermore, we consider a multidimensional target while the the results in \cite{kaspi2018searching,chiu2016sequential} were only established for a one-dimensional target. In the following, we compare our results, specialized to a single one-dimensional target, with the results in \cite{kaspi2018searching,chiu2016sequential}.

In terms of our results on resolution of non-adaptive query schemes, the most closely related work is \cite{kaspi2018searching}. The authors in \cite{kaspi2018searching} derived first-order asymptotic characterizations of the resolution decay rate when the excess-resolution probability vanishes for a measurement-dependent binary symmetric channel. Our results in Theorem \ref{result:second} extends \cite[Theorem 1]{kaspi2018searching} in several directions. First, Theorem \ref{result:second} is a  second-order asymptotic result which provides an approximation to the performance of optimal query procedures employing a finite number of queries, while \cite[Theorem 1]{kaspi2018searching} gives a first-order asymptotic result when the number of queries tends to infinity. Second, our results hold for any measurement-dependent channel satisfying a mild condition while \cite[Theorem 1]{kaspi2018searching} only considers the measurement-dependent binary symmetric channel (cf. Definition \ref{def:mdBSC}). Furthermore, our results apply methods recently developed for finite blocklength information theory. This results in the first non-asymptotic bounds (cf. Theorems \ref{ach:fbl} and \ref{fbl:converse}) for non-adaptive query schemes for 20 questions search. These bounds extend the analysis of \cite{kaspi2018searching}, in which the derived lower bound on the decay rate of the excess-resolution probability is only tight in the asymptotic limit of large $n$ (e.g., infinite number of queries). Other works concerning non-adaptive query procedures~\cite{jedynak2012twenty,pelc2002searching,chung2018unequal} consider either different performance criteria or different models and thus not comparable to our work.

For resolution limits of adaptive querying, the most closely related publications are \cite{chiu2016sequential,kaspi2018searching,lalitha2018improved}. The authors in \cite{lalitha2018improved} considered the case where the noise is measurement-dependent Gaussian noise. However, the setting in \cite{lalitha2018improved} is different from ours. In \cite{kaspi2018searching}, the authors considered two different adaptive query procedures using ideas due to Forney \cite{forney1968exponential} and Yamamoto-Itoh~\cite{yamamoto1979asymptotic} respectively. The authors of \cite{kaspi2018searching} derived a lower bound on the exponent of the excess-resolution probability for both procedures and showed that the performance of the three-stage adaptive query procedure based on Yamamoto-Itoh~\cite{yamamoto1979asymptotic} has better performance. Furthermore, in \cite{kaspi2018searching}, an asymptotic upper bound on the average number of queries is derived given a particular target resolution and excess-resolution probability~\cite[Theorem 2]{kaspi2018searching}. In \cite{chiu2016sequential}, the authors proposed an adaptive query procedure using sorted posterior matching and derived a non-asymptotic upper bound on the average number of queries subject to a constraint on the excess-resolution probability with respect to a given resolution. In contrast, we present an adaptive query procedure using the ideas in \cite{polyanskiy2011feedback} on finite blocklength analysis for channel coding with feedback and we derive a non-asymptotic upper bound on the excess-resolution probability with respect to a given resolution subject to a constraint on the average number of queries (cf. Theorem \ref{fbl:ach:adaptive}).

It would be interesting to compare our non-asymptotic achievability bound in Theorem \ref{fbl:ach:adaptive} to the results in \cite[Theorem 2]{kaspi2018searching}, \cite{chiu2016sequential}. However, since the respective results are derived under different theoretical assumptions, an analytical comparison is challenging. Note that Theorem \ref{fbl:ach:adaptive} addresses the decay rate of the achievable resolution subject to constraints on the average number of queries and an excess-resolution probability. In contrast, the results in \cite{kaspi2018searching,chiu2016sequential} address an upper bound on the average number of queries subject to a given resolution and an excess-resolution probability constraint. It is difficult to transform our results to an upper bound on the average number of queries or to transform their results to a lower bound on the decay rate of achievable resolution.  comparison of our result with the performance of state-of-the-art algorithms will be presented in a future paper.

\section{Problem Formulation}
\subsection*{Notation}
Random variables and their realizations are denoted by upper case variables (e.g.,  $X$) and lower case variables (e.g.,  $x$), respectively. All sets are denoted in calligraphic font (e.g.,  $\mathcal{X}$). Let $X^n:=(X_1,\ldots,X_n)$ be a random vector of length $n$. We use $\Phi^{-1}(\cdot)$ to denote the inverse of the cumulative distribution function (cdf) of the standard Gaussian. We use $\bbR$, $\bbR_+$ and $\bbN$ to denote the sets of real numbers, positive real numbers and integers respectively. Given any two integers $(m,n)\in\bbN^2$, we use $[m:n]$ to denote the set of integers $\{m,m+1,\ldots,n\}$ and use $[m]$ to denote $[1:m]$. Given any $(m,n)\in\bbN^2$, for any $m$ by $n$ matrix $\ba=\{a_{i,j}\}_{i\in[m],j\in[n]}$, the infinity norm is defined as $\|\ba\|_{\infty}:=\max_{i\in[m],j\in[n]}|a_{i,j}|$. The set of all probability distributions on a finite set $\calX$ is denoted as $\calP(\calX)$ and the set of all conditional probability distributions from $\calX$ to $\calY$ is denoted as $\calP(\calY|\calX)$. Furthermore, we use $\calF(\calS)$ to denote the set of all probability density functions on a set $\calS$. All logarithms are base $e$. Finally, we use $\bbo()$ to denote the indicator function.

\subsection{Noisy 20 Questions Problem On the Unit Cube}
Consider an arbitrary integer $d\in\bbN$. Let $\bS=(S_1,\ldots,S_d)$ be a continuous random vector defined on the unit cube of dimensional $d$ (i.e., $[0,1]^d$) with arbitrary probability density function (pdf) $f_{\bS}$. Note that any searching problem over a bounded $d$-dimensional region is equivalent to a searching problem over the unit cube of dimension $d$ with normalization in each dimension. 

In the estimation problem formulated under the framework of noisy 20 questions, a player aims to accurately estimate the target random variable $\bS$ by posing a sequence of queries $\calA^n=(\calA_1,\ldots,\calA_n)\subseteq[0,1]^{nd}$ to an oracle knowing $\bS$. After receiving the queries, the oracle finds binary answers $\{X_i=\bbo(\bS\in\calA_i)\}_{i\in[n]}$ and passes these answers through a measurement-dependent channel with transition matrix $P_{Y^n|X^n}^{\calA^n}\in\calP(\calY^n|\{0,1\}^n)$ yielding noisy responses $Y^n=(Y_1,\ldots,Y_n)$. Given the noisy responses $Y^n$, the player uses a decoding function $g:\calY^n\to[0,1]^d$ to obtain an estimate $\hat{\bS}=(\hatS_1,\ldots,\hatS_d)$ of the target variable $\bS=(S_1,\ldots,S_d)$. Throughout the paper, we assume that the alphabet $\calY$ for the noisy response is finite.

A query procedure for the noisy 20 questions problem consists of the Lebesgue measurable query sets $\calA^n\subseteq[0,1]^{nd}$ and a decoder $g:\calY^n\to[0,1]^d$. In general, these procedures can be classified into two categories: non-adaptive and adaptive querying. In a non-adaptive query procedure, the player needs to first determine the number of queries $n$ and then design all the queries $\calA^n$ simultaneously. In contrast, in an adaptive query procedure, the design of queries is done sequentially and the number of queries is a variable. In particular, when designing the $i$-th query, the player can use the previous queries and the noisy responses from the oracle to these queries, i.e., $\{\calA_j,Y_j\}_{j\in[i-1]}$, to formulate the next query $\calA_i$. Furthermore, the player needs to choose a stopping criterion, which may be random, determining the number of queries to make. 

We illustrate the difference between non-adaptive and adaptive query procedures in Figure \ref{illustrate:procedures}. In subsequent sections, we clarify the notion of the measurement-dependent channel with concrete examples and present specific definitions of non-adaptive and adaptive query procedures.
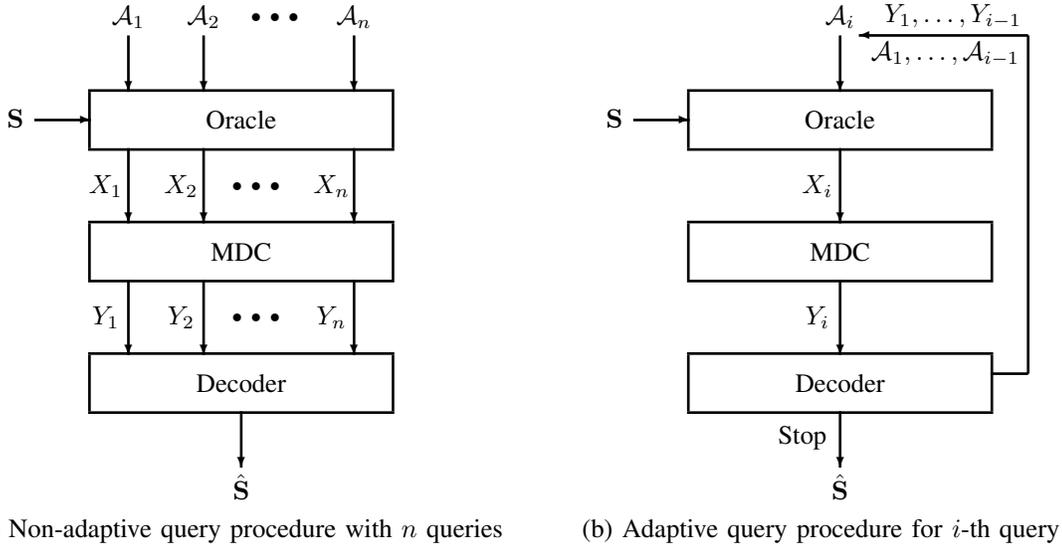
\begin{figure}[tb]
\centering
\setlength{\unitlength}{0.5cm}
\begin{tabular}{cc}
\scalebox{1}{
\begin{picture}(8,12.5)
\linethickness{1pt}
\put(1,12.5){\makebox(0,0){$\calA_1$}}
\put(3,12.5){\makebox(0,0){$\calA_2$}}
\put(4.5,12.5){\makebox(0,0){\circle*{0.2}}}
\put(5,12.5){\makebox(0,0){\circle*{0.2}}}
\put(5.5,12.5){\makebox(0,0){\circle*{0.2}}}
\put(7,12.5){\makebox(0,0){$\calA_n$}}
\put(1,12){\vector(0,-1){1.5}}
\put(3,12){\vector(0,-1){1.5}}
\put(7,12){\vector(0,-1){1.5}}
\put(-2,9.8){\makebox(0,0){$\bS$}}
\put(-1.5,9.75){\vector(1,0){1.5}}
\put(0,9){\framebox(8,1.5){Oracle}}
\put(1,9){\vector(0,-1){2}}
\put(3,9){\vector(0,-1){2}}
\put(7,9){\vector(0,-1){2}}
\put(0.4,8){\makebox(0,0){$X_1$}}
\put(2.4,8){\makebox(0,0){$X_2$}}
\put(4,8){\makebox(0,0){\circle*{0.2}}}
\put(4.5,8){\makebox(0,0){\circle*{0.2}}}
\put(5,8){\makebox(0,0){\circle*{0.2}}}
\put(6.4,8){\makebox(0,0){$X_n$}}
\put(0,5.5){\framebox(8,1.5){MDC}}
\put(1,5.5){\vector(0,-1){2}}
\put(3,5.5){\vector(0,-1){2}}
\put(7,5.5){\vector(0,-1){2}}
\put(0.4,4.5){\makebox(0,0){$Y_1$}}
\put(2.4,4.5){\makebox(0,0){$Y_2$}}
\put(4,4.5){\makebox(0,0){\circle*{0.2}}}
\put(4.5,4.5){\makebox(0,0){\circle*{0.2}}}
\put(5,4.5){\makebox(0,0){\circle*{0.2}}}
\put(6.4,4.5){\makebox(0,0){$Y_n$}}
\put(0,2){\framebox(8,1.5){Decoder}}
\put(4,2){\vector(0,-1){1.5}}
\put(4,0){\makebox(0,0){$\hat{\bS}$}}
\end{picture}}
&
\hspace{.4in}
\scalebox{1}{
\begin{picture}(8,12.5)
\linethickness{1pt}
\put(4,12.5){\makebox(0,0){$\calA_i$}}
\put(4,12){\vector(0,-1){1.5}}
\put(-2,9.8){\makebox(0,0){$\bS$}}
\put(-1.5,9.75){\vector(1,0){1.5}}
\put(0,9){\framebox(8,1.5){Oracle}}
\put(4,9){\vector(0,-1){2}}
\put(3.4,8){\makebox(0,0){$X_i$}}
\put(0,5.5){\framebox(8,1.5){MDC}}
\put(4,5.5){\vector(0,-1){2}}
\put(3.4,4.5){\makebox(0,0){$Y_i$}}
\put(9,3){\line(0,1){9}}
\put(9,12){\vector(-1,0){4.5}}
\put(7,12.5){\makebox(0,0){$Y_1,\ldots,Y_{i-1}$}}
\put(6.8,11.5){\makebox(0,0){$\calA_1,\ldots,\calA_{i-1}$}}
\put(0,2){\framebox(8,1.5){Decoder}}
\put(8,3){\line(1,0){1}}
\put(4,2){\vector(0,-1){1.5}}
\put(3,1.3){\makebox(0,0){Stop}}
\put(4,0){\makebox(0,0){$\hat{\bS}$}}
\end{picture}}
\vspace{.1in}
\\
(a) Non-adaptive query procedure with $n$ queries&\hspace{.2in} (b) Adaptive query procedure for $i$-th query
\end{tabular}
\caption{Illustration of query procedures for the noisy 20 questions problem with measurement-dependent channel (MDC). In the non-adaptive case (a), a target slate $\bS$ is known to the oracle who responds to a block of queries $\calA_1,\ldots,\calA_n$ and provides binary responses $X_1,\ldots,X_n$, respectively. These responses are corrupted by a measurement-dependent channel (MDC) that outputs symbols $Y_1,\ldots,Y_n$, which are used by the decoder to produce estimate $\hat{\bS}$. In the adaptive case (b), the queries are posed sequentially and decoder needs to determine when to stop the query procedure.
}
\label{illustrate:procedures}
\end{figure}

\subsection{The Measurement-Dependent Channel}
In this subsection, we describe succinctly the measurement-dependent channel scenario~\cite{kaspi2018searching}, also known as a channel with state~\cite[Chapter 7]{el2011network}. Given a sequence of queries $\calA^n\subseteq[0,1]^{nd}$, the channel from the oracle to the player is a memoryless channel whose transition probabilities are functions of the queries. Specifically, for any $(x^n,y^n)\in\{0,1\}^n\times\calY^n$,
\begin{align}
P_{Y^n|X^n}^{\calA^n}(y^n|x^n)
&=\prod_{i\in[n]}P_{Y|X}^{\calA_i}(y_i|x_i),
\end{align}
where $P_{Y|X}^{\calA_i}$ denotes the transition probability of the channel which depends on the $i$-th query $\calA_i$. Given any query $\calA\subseteq[0,1]^d$, define the volume $|\calA|$ of $\calA$ as its Lebesgue measure, i.e., $|\calA|=\int_{t\in\calA}\rmd t$. Throughout the paper, we consider only Lebesgue measurable query sets and 
assume that the measurement-dependent channel $P_{Y|X}^{\calA}$ depends on the query $\calA$ only through its size. Thus, $P_{Y|X}^{\calA}$ is equivalent to a channel with state $P_{Y|X}^q$ where the state $q=|\calA|\in[0,1]$.

For any $q\in[0,1]$, any $\xi\in(0,\min(q,1-q))$ and any subsets $\calA$, $\calA^+$ and $\calA^-$ of $[0,1]$ with sizes $|\calA|=q$, $|\calA^+|=q+\xi$ and $|\calA^-|=q-\xi$, we assume the measurement-dependent channel is continuous in the sense that there exists a constant $c(q)$ depending on $q$ only such that
\begin{align}
\max\left\{\left\|\log\frac{P_{Y|X}^{\calA}}{P_{Y|X}^{{\calA^+}}}\right\|_{\infty},\left\|\log\frac{P_{Y|X}^\calA}{P_{Y|X}^{\calA^-}}\right\|_{\infty}\right\}\leq c(q)\xi\label{assump:continuouschannel},
\end{align}
where the infinity norm is defined as $\|\ba\|_{\infty}:=\max_{i\in[m],j\in[n]}|a_{i,j}|$ for any matrix $\ba=\{a_{i,j}\}_{i\in[m],j\in[n]}$.

Some examples of measurement-dependent channels satisfying the continuous constraint in \eqref{assump:continuouschannel} are as follows.
\begin{definition}
\label{def:mdBSC}
Given any $\calA\subseteq[0,1]$, a channel $P_{Y|X}^{\calA}$ is said to be a measurement-dependent Binary Symmetric Channel (BSC) with parameter $\nu\in[0,1]$ if $\calX=\calY=\{0,1\}$ and 
\begin{align}
P_{Y|X}^{\calA}(y|x)=(\nu|\calA|)^{\bbo(y\neq x)}(1-\nu|\calA|)^{\bbo(y=x)},~\forall~(x,y)\in\{0,1\}^2.
\end{align}
\end{definition}
This definition generalizes \cite[Theorem 1]{kaspi2018searching}, where the authors considered a measurement-dependent BSC with parameter $\nu=1$. Note that the binary output bit of a measurement-dependent BSC with parameter $\nu$ is flipped with probability $\nu|\calA|$.

\begin{definition}
Given any $\calA\subseteq[0,1]$, a measurement-dependent channel $P_{Y|X}^{\calA}$ is said to be a measurement-dependent Binary Erasure Channel (BEC) with parameter $\tau\in[0,1]$ if $\calX=\{0,1\}$, $\calY=\{0,1,\rme\}$ and
\begin{align}
P_{Y|X}^{\calA}(y|x)=(1-\tau|\calA|)^{\bbo(y=x)}(\tau|\calA|)^{\bbo(y=\rme)}
\end{align}
\end{definition}
Note that the binary output bit of a measurement-dependent BEC with parameter $\tau$ is erased with probability $\tau|\calA|$.

\begin{definition}
\label{def:mdZ}
Given any $\calA\subseteq[0,1]$, a measurement-dependent channel $P_{Y|X}^{\calA}$ is said to be a measurement-dependent Z-channel with parameter $\zeta\in[0,1]$ if $\calX=\{0,1\}$, $\calY=\{0,1\}$ and
\begin{align}
P_{Y|X}^{\calA}(y|x)=(1-\zeta|\calA|)^{\bbo(y=x=1)}(\zeta|\calA|)^{\bbo(y=0,x=1)}(0)^{\bbo(y=1,x=0)}.
\end{align}
\end{definition}
Note that the binary output bit of a measurement-dependent Z-channel is flipped with probability $\zeta|\calA|$ if the input is $x=1$.

Each of these measurement-dependent channels will be considered in the sequel.

In contrast, in a measurement-independent channel, the noisy channel that corrupts the noiseless response remains the same regardless of the query, i.e., $P_{Y|X}^\calA=P_{Y|X}$ for any $\calA\subseteq[0,1]$, where $P_{Y|X}\in\calP(\calY|\calX)$ is a given channel with the input alphabet $\calX=\{0,1\}$ and the finite output alphabet $\calY$.

\subsection{Non-Adaptive Query Procedures}
A non-adaptive query procedure with resolution $\delta$ and excess-resolution constraint $\varepsilon$ is defined as follows.

\begin{definition}
\label{def:procedure}
Given any $(n,d)\in\bbN^2$, $\delta\in\bbR_+$ and $\varepsilon\in[0,1]$, an $(n,d,\delta,\varepsilon)$-non-adaptive query procedure for the noisy 20 questions consists of 
\begin{itemize}
\item $n$ queries $(\calA_1,\ldots,\calA_n)$ where each $\calA_i\subseteq[0,1]^d$,
\item and a decoder $g:\calY^n\to[0,1]^d$
\end{itemize}
such that the excess-resolution probability satisfies
\begin{align}
\rmP_\rme(n,d,\delta)&:=\sup_{f_{\bS}\in\calF([0,1]^d)}\Pr\{\exists~i\in[d]:~|\hatS_i-S_i|>\delta\}\leq \varepsilon\label{def:excessresolution2},
\end{align}
where $\hatS_i$ is the estimate of $i$-th element of the $d$-dimensional target $\bS$ using the decoder $g$, i.e., $g(Y^n)=(\hatS_1,\ldots,\hatS_d)$.
\end{definition} 

In Algorithm \ref{procedure:nonadapt}, we provide a non-adaptive query procedure which is used in our achievability proof. The procedure is parametrized by two parameters $M$ and $p$, where $\frac{1}{M}$ is the target resolution and $p$ is the design parameter. The definition of the \emph{excess-resolution probability} with respect to $\delta$ is inspired by rate-distortion theory~\cite{berger1971rate,kostina2013lossy}. Our formulation generalizes that of \cite{kaspi2018searching} where the authors constrained the target-dependent maximum excess-resolution probability for the case of $d=1$, i.e., $\sup_{s_1\in[0,1]}\Pr\{|\hatS_1-s_1|>\delta\}$.

In practical applications, the number of queries are often limited to minimize total cost of queries and maintain low latency. We are interested in the establishing a non-asymptotic fundamental limit to achievable resolution $\delta$:
\begin{align}
\delta^*(n,d,\varepsilon)
&:=\inf\big\{\delta\in[0,1]:\exists\mathrm{~an~}(n,d,\delta,\varepsilon)\mathrm{-non}\mathrm{-adaptive}\mathrm{~query}\mathrm{~procedure}\big\}\label{def:delta*}.
\end{align}
Note that $\delta^*(n,d,\varepsilon)$ denotes the minimal resolution one can achieve with probability at least $1-\varepsilon$ using a non-adaptive query procedure with $n$ queries. In other words, $\delta^*(n,d,\varepsilon)$ is the achievable resolution of optimal non-adaptive query procedures tolerating an excess-resolution probability of $\varepsilon\in[0,1]$. Dual to \eqref{def:delta*} is the sample complexity, determined by the minimal number of queries required to achieve a resolution $\delta$ with probability at least $1-\varepsilon$, i.e.,
\begin{align}
n^*(d,\delta,\varepsilon):=\inf\big\{n\in\bbN:\exists\mathrm{~an~}(n,d,\delta,\varepsilon)\mathrm{-non}\mathrm{-adaptive}\mathrm{-query}\mathrm{-procedure}\big\}\label{def:n*}.
\end{align}
One can easily verify that for any $(\delta,\varepsilon)\in\bbR_+\times[0,1]$,
\begin{align}
n^*(d,\delta,\varepsilon)
&=\inf\{n:\delta^*(n,d,\varepsilon)\leq \delta\}\label{def:sc_non}.
\end{align}
Thus, it suffices to derive the fundamental limit $\delta^*(n,d,\varepsilon)$.

Note that in Definition \ref{def:procedure}, the probability $\Pr\{\exists~i\in[d]:~|\hatS_i-S_i|>\delta\}$ is equivalent to $\Pr\{\max_{i\in[d]}|\hatS_i-S_i|>\delta\}$. As contrasted to the $L_\infty$ norm, i.e., $\|\hat{\bS}-\bS\|_\infty=\max_{i\in[d]}|\hatS_i-S_i|$, we could have considered the $L_2$ norm, i.e., $\|\hat{\bS}-\bS\|_2=\sqrt{\sum_{i\in[d]}(\hatS_i-S_i)^2}$, for which a corresponding fundamental limit $\delta^*_{L_2}(n,d,\varepsilon)$ could be obtained. In fact, we have $\delta^*(n,d,\varepsilon)\leq \delta^*_{L_2}(n,d,\varepsilon)\leq \sqrt{d}\delta^*(n,d,\varepsilon)$. This is because for any location vector $\bS=(S_1,\ldots,S_d)\in[0,1]^d$ and any estimated vector $\hat{\bS}=(\hatS_1,\ldots,\hatS_d)\in[0,1]^d$, $\frac{\|\hat{\bS}-\bS\|_2}{\sqrt{d}}\leq\|\hat{\bS}-\bS\|_\infty\leq \|\hat{\bS}-\bS\|_2$ and therefore
\begin{align}
\Pr\left\{\|\hat{\bS}-\bS\|_2>\sqrt{d}\delta\right\}\leq \Pr\{\|\hat{\bS}-\bS\|_\infty>\delta\}\leq \Pr\{\|\hat{\bS}-\bS\|_2>\delta\}.
\end{align}
On the one hand, if a query procedure is $(n,d,\delta,\varepsilon)$-achievable under the $L_\infty$ norm criterion, then since $\Pr\{\|\hat{\bS}-\bS\|_2>\sqrt{d}\delta\}\leq \Pr\{\|\hat{\bS}-\bS\|_\infty>\delta\}$, the query procedure is $(n,d,\sqrt{d}\delta,\varepsilon)$-achievable under the $L_2$ norm criterion and thus $\delta^*_{L_2}(n,d,\delta,\varepsilon)\leq \sqrt{d}\delta^*(n,d,\varepsilon)$. On the other hand, if a query procedure is $(n,d,\delta,\varepsilon)$-achievable under the $L_2$ norm criterion, then since $\Pr\{\|\hat{\bS}-\bS\|_\infty>\delta\}\leq \Pr\{\|\hat{\bS}-\bS\|_2>\delta\}$, the query procedure is $(n,d,\delta,\varepsilon)$-achievable under the $L_\infty$ norm criterion and thus $\delta^*(n,d,\varepsilon)\leq\delta^*_{L_2}(n,d,\varepsilon)$.

\subsection{Adaptive Query Procedures}

An adaptive query procedure with resolution $\delta$ and excess-resolution constraint $\varepsilon$ is defined as follows.
\begin{definition}
\label{def:adaptive:procedure}
Given any $(l,d,\delta,\varepsilon)\in\bbR_+\times\bbN\times\bbR_+\times[0,1]$, an $(l,d,\delta,\varepsilon)$-adaptive query procedure for the noisy 20 questions problem consists of
\begin{itemize}
\item a sequence of adaptive queries where for each $i\in\bbN$, the design of query $\calA_i\subseteq[0,1]^d$ is based all previous queries $\{\calA_j\}_{j\in[i-1]}$ and the noisy responses $Y^{i-1}$ from the oracle
\item a sequence of decoding functions $g_i:\calY^i\to[0,1]^d$ for $i\in\bbN$ 
\item a random stopping time $\tau$ depending on noisy responses $\{Y_i\}_{i\in\bbN}$ such that under any pdf $f_{\bS}$ of the target random variable $\bS$, the average number of queries satisfies
\begin{align}
\mathbb{E}[\tau]\leq l,
\end{align}
\end{itemize}
such that the excess-resolution probability satisfies
\begin{align}
\rmP_{\rme,\rma}(l,d,\delta):=\sup_{f_{\bS}\in\calF([0,1]^d)}\Pr\{\exists~i\in[d]:~|\hatS_i-S_i|>\delta\}\leq \varepsilon\label{def:excessresolution},
\end{align}
where $\hatS_i$ is the estimate of $i$-th element of the target $\bS$ using the decoder $g$ at time $\tau$, i.e., $g(Y^\tau)=(\hatS_1,\ldots,\hatS_d)$.
\end{definition}

Similar to \eqref{def:delta*}, given any $(l,d,\varepsilon)\in\bbR_+\times\bbN\times[0,1)$, we can define the fundamental resolution limit for adaptive querying as follows: 
\begin{align}
\delta_\rma^*(l,d,\varepsilon)
&:=\inf\{\delta\in\bbR_+:~\exists~\mathrm{an}~(l,d,\delta,\varepsilon)\mathrm{-adaptive}~\mathrm{query~procedure}\}\label{def:delta*:adaptive},
\end{align}
with analogous definition of mean sample complexity (cf. \eqref{def:sc_non})
\begin{align}
l^*(d,\delta,\varepsilon):=\inf\{l\in\bbR_+:~\exists~\mathrm{an}~(l,d,\delta,\varepsilon)\mathrm{-adaptive}~\mathrm{query~procedure}\}.
\end{align}

\section{Main Results for Non-Adaptive Query Procedures}

\subsection{Non-Asymptotic Bounds}
We first present an upper bound on the error probability of optimal non-adaptive query procedures. Given any $(p,q)\in[0,1]^2$, let $P_Y^{p,q}$ be the marginal distribution on $\calY$ induced by the Bernoulli distribution $P_X=\mathrm{Bern}(p)$ and the measurement-dependent channel $P_{Y|X}^{q}$. Furthermore, define the following information density
\begin{align}
\imath_{p,q}(x;y)&:=\log\frac{P_{Y|X}^q(y|x)}{P_Y^{p,q}(y)},~\forall~(x,y)\in\calX\times\calY\label{def:ipqxy}.
\end{align}
Correspondingly, for any $(x^n,y^n)\in\calX^n\times\calY^n$, we define
\begin{align}
\imath_p(x^n;y^n)
&:=\sum_{i\in[n]}\imath_{p,p}(x_i;y_i)\label{def:ixnyn}
\end{align}
as the mutual information density between $x^n$ and $y^n$.

\begin{algorithm}[bt]
\caption{Non-adaptive query procedure for searching for a multidimensional target over the unit cube}
\label{procedure:nonadapt}
\begin{algorithmic}
\REQUIRE The number of queries $n\in\bbN$, the dimension $d\in\bbN$ and two parameters $(M,p)\in\bbN\times(0,1)$
\ENSURE An estimate $(\hats_1,\ldots,\hats_d)\in[0,1]^d$ of a $d$-dimensional target variable $(s_1,\ldots,s_d)\in[0,1]^d$
\STATE Partition the unit cube of dimension $d$ (i.e., $[0,1]^d$) into $M^d$ equal-sized disjoint cubes $\{\calS_{i_1,\ldots,i_d}\}_{(i_1,\ldots,i_d)\in[M]^d}$.
\STATE Generate $M^d$ binary vectors $\{x^n(i_1,\ldots,i_d)\}_{(i_1,\ldots,i_d)\in[M]^d}$ where each binary vector is generated i.i.d. from a Bernoulli distribution with parameter $p$.
\STATE $t \leftarrow 1$.
\WHILE{$t\leq n$}
\STATE Form the $t$-th query as
\begin{align*}
\calA_t:=\bigcup_{(i_1,\ldots,i_d)\in[M]^d:x_t(i_1,\ldots,i_d)=1}\calS_{i_1,\ldots,i_d}.
\end{align*}
\STATE Obtain a noisy response $y_t$ from the oracle to the query $\calA_t$.
\STATE $t \leftarrow t+1$.
\ENDWHILE
\STATE Generate estimates $(\hats_1,\ldots,\hats_d)$ as
\begin{align*}
\hats_i=\frac{2\hatw_i-1}{2M},~i\in[d]
\end{align*}
where $\hat{\bw}=(\hatw_1,\ldots,\hatw_d)$ is obtained via the maximum mutual information density estimator, i.e,
\begin{align*}
\hat{\bw}=\argmax_{(\tili_1,\ldots,\tili_d)\in[M]^d}\imath_p(x^n(\tili_1,\ldots,\tili_d);y^n).
\end{align*}
\end{algorithmic}
\end{algorithm}

\begin{theorem}
\label{ach:fbl}
Given any $(n,d,M)\in\bbN^3$, for any $p\in[0,1]$ and any $\eta\in\bbR_+$, the procedure in Algorithm \ref{procedure:nonadapt} is an $(n,d,\frac{1}{M},\varepsilon)$-non-adaptive query procedure where
\begin{align}
\varepsilon&\leq 
4n\exp(-2M^d\eta^2)+\exp(n\eta c(p))\mathbb{E}[\min\{1,M^d\Pr\{\imath_p(\barX^n;Y^n)\geq \imath_p(X^n;Y^n)|X^n,Y^n\}]\}\label{com:fbl},
\end{align}
where $(X^n,\barX^n,Y^n)$ is distributed as $P_X^n(X^n)P_X^n(\barX^n)(P_{Y|X}^p)^n(Y^n|X^n)$ with $P_X$ defined as the Bernoulli distribution with parameter $p$ (i.e., $P_X(1)=p$).
\end{theorem}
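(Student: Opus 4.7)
The plan is to use random coding with maximum mutual information density (MMI) decoding, combined with a concentration plus change-of-measure step that converts the measurement-dependent channel into the ``designed'' product law $(P_{Y|X}^p)^n$. First I reduce the excess-resolution event to a decoding error on the cube indices. Let $\bW=(W_1,\ldots,W_d)\in[M]^d$ be the (random) index of the unique cube $\calS_{\bW}$ containing $\bS$. Because Algorithm~\ref{procedure:nonadapt} places $\hats_i$ at the center $(2\hatw_i-1)/(2M)$ of the estimated cube, $\hatw_i=W_i$ automatically forces $|\hats_i-S_i|\leq 1/(2M)<1/M$, so the excess-resolution event is contained in $\{\hat{\bW}\neq\bW\}$. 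Moreover, the random codebook is independent of $\bS$, and by the symmetry of the i.i.d.\ Bernoulli$(p)$ generation, the marginal law of the transmitted codeword $X^n(\bW)$ equals $P_X^n$ regardless of the realized value of $\bW$; hence the upper bound I derive is uniform in $f_{\bS}\in\calF([0,1]^d)$, which handles the supremum in~\eqref{def:excessresolution2}.

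Next I control the random query volumes. Each $|\calA_t|=M^{-d}\sum_{\bj\in[M]^d}x_t(\bj)$ is an empirical mean of $M^d$ i.i.d.\ Bernoulli$(p)$ bits with expectation $p$; Hoeffding's inequality together with a union bound over $t\in[n]$ (with a conditioning on $x_t(\bW)$ and a two-tail split, giving the constant $4$) yields $\Pr(\calG^{\rmc})\leq 4n\exp(-2M^d\eta^2)$, where $\calG:=\{\max_{t\in[n]}\bigl||\calA_t|-p\bigr|\leq\eta\}$; this is the first term in~\eqref{com:fbl}. On $\calG$, the continuity hypothesis~\eqref{assump:continuouschannel} applied with $q=p$ and $\xi=||\calA_t|-p|\leq\eta$ gives $\|\log(P_{Y|X}^p/P_{Y|X}^{|\calA_t|})\|_\infty\leq c(p)\eta$ for every $t$, so multiplying over $t$ yields the pointwise domination
\begin{align*}
\prod_{t=1}^n P_{Y|X}^{|\calA_t|}(y_t|x_t)\leq \exp(nc(p)\eta)\prod_{t=1}^n P_{Y|X}^p(y_t|x_t),\qquad \forall\,(x^n,y^n).
\end{align*}
This change of measure replaces the intractable measurement-dependent output law by the i.i.d.\ design law $(P_{Y|X}^p)^n$ at the deterministic multiplicative price $\exp(nc(p)\eta)$.

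Finally, under the design law the MMI decoder errs only when some competitor $\tilde{\bW}\neq\bW$ attains $\imath_p(x^n(\tilde{\bW});Y^n)\geq\imath_p(x^n(\bW);Y^n)$. Introducing an independent copy $\barX^n\sim P_X^n$, a union bound over the $M^d-1<M^d$ wrong indices followed by truncation at $1$ gives the standard random-coding-union inequality
\begin{align*}
\Pr\{\hat{\bW}\neq\bW\mid X^n,Y^n\}\leq \min\bigl\{1,\,M^d\Pr\{\imath_p(\barX^n;Y^n)\geq\imath_p(X^n;Y^n)\mid X^n,Y^n\}\bigr\},
\end{align*}
and averaging over $(X^n,Y^n)\sim P_X^n\times(P_{Y|X}^p)^n$ produces exactly the expectation in~\eqref{com:fbl}. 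Multiplying this expectation by $\exp(nc(p)\eta)$, adding the $\Pr(\calG^{\rmc})$ term, and using the reduction from the first paragraph delivers the bound.

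The step I expect to be the main obstacle is the change of measure. One has to verify that on $\calG$ the continuity constant $c(p)$ controls the per-coordinate log-likelihood uniformly even though $|\calA_t|$ is a random function of the entire codebook, and that the deterministic factor $\exp(nc(p)\eta)$ can be pulled outside the expectation over $(X^n,Y^n)$ without disturbing the independence structure needed for the RCU step (in particular, that the ``codeword of the true cube'' and the ``codewords of competing cubes'' remain independent $P_X^n$ variables after conditioning on $\bW$ and restricting to $\calG$). The Hoeffding concentration and the MMI union bound are by now routine in finite blocklength information theory.
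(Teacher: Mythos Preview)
Your proposal is correct and follows essentially the same route as the paper: reduce to a cube-index decoding error, split off the atypical-codebook event, change measure on the typical event to the product law $(P_{Y|X}^p)^n$, and finish with the random coding union bound. Regarding your stated obstacle, the paper resolves it in exactly the way you should: after applying the pointwise domination on $\calG$, simply \emph{drop} the indicator $\bbo(\calG)$ (bound it by $1$), so that the remaining expectation is taken under the alternative joint law $P_{\bX Y^n}^{p,\bw}(\bx,y^n)=\bigl(\prod_{\bar{\bw}}P_X^n(x^n(\bar{\bw}))\bigr)(P_{Y|X}^p)^n(y^n|x^n(\bw))$, under which the competing codewords are genuinely i.i.d.\ $P_X^n$ and independent of $(X^n(\bw),Y^n)$; no conditioning on $\calG$ survives to threaten the RCU step.
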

The proof of Theorem \ref{ach:fbl} uses a modification of the random coding union bound~\cite{polyanskiy2010finite} and is given in Appendix \ref{proof:ach}. 

Consider the measurement-independent channel where $P_{Y|X}^q=P_{Y|X}^{1}=:P_{Y|X}$ for all $q\in[0,1]$. Similarly to the proof of Theorem \ref{ach:fbl}, we can show that for any $p\in[0,1]$, there exists an $(n,d,\frac{1}{M},\varepsilon)$-non-adaptive query procedure such that
\begin{align}
\varepsilon&\leq \mathbb{E}[\min\{1,M^d\Pr\{\jmath_p(\barX^n;Y^n)\geq \jmath_p(X^n;Y^n)|X^n,Y^n\}]\}\label{com:fbl2},
\end{align}
where the tuple of random variables $(X^n,\barX^n,Y^n)$ is distributed as $P_X^n(X^n)P_X^n(\barX^n)P_{Y|X}^n(Y^n|X^n)$, the information density $\jmath_p(x^n;y^n)$ is defined as
\begin{align}
\jmath_p(x^n;y^n):=\log\frac{P_{Y|X}^n(y^n|x^n)}{P_Y^n(y^n)},
\end{align}
and $P_Y$ is the marginal distribution induced by $P_X$ and $P_{Y|X}$. Comparing the measurement-independent case \eqref{com:fbl2} with the measurement-dependent case \eqref{com:fbl}, the non-asymptotic upper bound \eqref{com:fbl} in Theorem \ref{ach:fbl} differs from \eqref{com:fbl2} in two aspects: there are an additional additive term and an additional multiplicative term in \eqref{com:fbl}. As is made clear in the proof of Theorem \ref{ach:fbl}, the additive term $4n\exp(-2M^d\eta^2)$ results from the atypicality of the measurement-dependent channel and the multiplicative term $\exp(n\eta c(p))$ appears due to the change-of-measure we use to replace the measurement-dependent channel $P_{Y^n|X^n}^{\calA^n}$ with the measurement-independent channel $(P_{Y|X}^p)^n$.

We conjecture that the additional exponential terms in the upper bound \eqref{com:fbl} are fundamentally related to the measurement-dependent channel assumption underlying Theorem 1.
However, the truth of this conjecture cannot be established without a matching non-asymptotic lower bound. In our proof of Theorem \ref{ach:fbl}, the upper bound on $\varepsilon$ is established using the random coding based non-adaptive query procedure in Algorithm \ref{procedure:nonadapt}. If one adopts another query procedure where in each query, the query size is the same, it is possible that the additive ``channel atypicality'' term can be removed and other parts of the formulas might also change.
 
We next provide a non-asymptotic converse bound to complement Theorem \ref{ach:fbl}. For simplicity, for any query $\calA\subseteq[0,1]^d$ and any $(x,y)\in\calX\times\calY$, we use $\imath_{\calA}(x,y)$ to denote $\imath_{|\calA|,|\calA|}(x,y)$.

\begin{theorem}
\label{fbl:converse}
Set $(n,\delta,\varepsilon)\in\bbN\times\bbR_+\times[0,1]$. Any $(n,\delta,\varepsilon)$-non-adaptive query procedure satisfies the following. For any $\beta\in(0,\frac{1-\varepsilon}{2})$ and any $\kappa\in(0,1-\varepsilon-2d\beta)$,
\begin{align}
-d\log\delta\leq -d\log\beta-\log\kappa+\sup_{\calA^n\subseteq[0,1]^{nd}}\sup\bigg\{t\Big|\Pr\Big\{\sum_{i\in[n]}\imath_{\calA_i}(X_i;Y_i)\leq t\bigg\}\leq \varepsilon+2d\beta+\kappa\Big\}\label{uppbound}.
\end{align}
\end{theorem}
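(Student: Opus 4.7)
The plan is to establish the converse by adapting the hypothesis-testing meta-converse of Polyanskiy--Poor--Verd\'u to the 20-questions estimation problem.

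\emph{Worst-case prior and test setup.} Since $\rmP_\rme(n,d,\delta)$ is a supremum over $f_{\bS}$, I specialize to $\bS\sim\Unif([0,1]^d)$, under which any valid procedure satisfies $\Pr_P[\|g(Y^n)-\bS\|_\infty\le\delta]\ge 1-\varepsilon$, where $P=P_{\bS,Y^n}$ denotes the induced joint law with $P_{Y^n|\bS}(y^n|\bs)=\prod_{i=1}^n P_{Y|X}^{|\calA_i|}(y_i\,|\,\bbo(\bs\in\calA_i))$. I construct a binary hypothesis test between $P$ and the independent product reference $Q_{\bS,Y^n}=f_{\bS}(\bs)\prod_{i=1}^n P_Y^{|\calA_i|,|\calA_i|}(y_i)$, under which $\bS\perp Y^n$ and $X_i\sim\mathrm{Bern}(|\calA_i|)$. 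The log-likelihood ratio $\log(\rmd P/\rmd Q)$ evaluated at $(\bs,y^n)$ equals exactly $\sum_{i=1}^n \imath_{\calA_i}(X_i;y_i)$, matching the information density in \eqref{uppbound}.

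\emph{Choice of decision region.} The crucial step is to choose a decision region whose volume scales as $(\delta/\beta)^d$; I take the $\beta$-periodic enlargement of the $\delta$-cube around the decoder output,
\[
\calT(y^n)=\bigl\{\bs\in[0,1]^d\,:\,\exists\bk\in\bbZ^d,\ \|\bs-g(y^n)-\bk\beta\|_\infty\le\delta\bigr\}.
\]
The choice $\bk=\mathbf{0}$ immediately certifies $\bs\in\calT(y^n)$ whenever $\|g(y^n)-\bs\|_\infty\le\delta$, so $\Pr_P[\bS\in\calT(Y^n)]\ge 1-\varepsilon$. On the other hand, a direct count of the periodic translates intersecting the unit cube gives $\Pr_Q[\bS\in\calT(Y^n)]\lesssim(2\delta/\beta)^d$. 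Confining $\bS$ to the interior core $[\beta,1-\beta]^d$, whose probability under the uniform prior is at least $(1-2\beta)^d\ge 1-2d\beta$, costs at most $2d\beta$ in the power of the test and cleans up the interaction of the periodic tiling with $\partial[0,1]^d$.

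\emph{Applying the meta-converse.} The information-spectrum lower bound
\[
\beta_\alpha(P,Q)\ge e^{-t}\bigl(\alpha-\Pr_P[\log(\rmd P/\rmd Q)>t]\bigr),
\]
applied with $\alpha=1-\varepsilon-2d\beta$ and combined with the type-II upper bound $(2\delta/\beta)^d$, implies
\[
(2\delta/\beta)^d\,e^{t}\ge \kappa\quad\text{whenever}\quad \Pr_P\!\Big[\textstyle\sum_{i=1}^n \imath_{\calA_i}(X_i;Y_i)\le t\Big]\le\varepsilon+2d\beta+\kappa.
\]
Taking logarithms gives $-d\log\delta\le -d\log\beta-\log\kappa+t$ (up to an absolute constant absorbable in the $2d\beta$ slack), and the claim follows by maximizing over admissible $t$ and then over all $\calA^n\subseteq[0,1]^{nd}$ (which only loosens the bound).

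\emph{Main obstacle.} The principal technical step is the geometric lemma producing exactly the $2d\beta$ probability slack rather than the more naive $O(d\delta/\beta)$ boundary estimate: this requires the confinement of $\bS$ to the interior core $[\beta,1-\beta]^d$ and a careful accounting of the periodic tiling near $\partial[0,1]^d$ so that the $(2\delta/\beta)^d$ volume bound survives the truncation. A secondary technicality, the non-attainment of the suprema in $\sup\{t:\Pr[\cdot]\le\cdots\}$, is handled by standard left-continuity arguments for the CDF of the information density.
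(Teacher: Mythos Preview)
Your meta-converse approach is a legitimate alternative to the paper's route, and both ultimately rest on the same information-spectrum inequality. The difference is in how the estimation error is packaged as a binary test. The paper does this by \emph{quantizing} $[0,1]^d$ into $\tilM^d$ equal cells with $\tilM=\lfloor\beta/\delta\rfloor$, thereby manufacturing a discrete channel-coding problem with uniform message $\bW=\rmq_\beta(\bS)$ and decoder $\hat\bW=\rmq_\beta(g(Y^n))$. The $2d\beta$ slack arises because when some coordinate $S_i$ lies within $\delta$ of a \emph{quantization} boundary (probability $\le 2\delta\tilM\le 2\beta$ per coordinate under the uniform prior), correct resolution $|\hat S_i-S_i|\le\delta$ need not force $\hat W_i=W_i$; a union bound over $i\in[d]$ gives $\Pr\{\hat\bW\neq\bW\}\le\varepsilon+2d\beta$, and then the channel-coding converse of \cite[Prop.~4.4]{TanBook} yields $d\log\tilM\le -\log\kappa+\sup\{t:\Pr[\sum_i\imath_{\calA_i}\le t]\le\varepsilon+2d\beta+\kappa\}$.

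Your periodic-tiling construction is workable, but the role you assign to the interior-core restriction $[\beta,1-\beta]^d$ is off. Confining $\bS$ to the core does reduce the power under $P$ by at most $2d\beta$ (so it reproduces the threshold $\varepsilon+2d\beta+\kappa$), but it does \emph{not} ``clean up the interaction of the periodic tiling with $\partial[0,1]^d$'': the $Q$-volume of $\calT(y^n)$ is governed by the number of $\beta$-translates hitting the unit interval in each coordinate, and that count is $\lfloor 1/\beta\rfloor+O(1)$ whether or not you intersect with the core. So the core buys nothing on the type-II side; it merely injects the $2d\beta$ slack on the type-I side to match the theorem's stated form. In fact the plain $\delta$-cube test $\{\bs:\|\bs-g(y^n)\|_\infty\le\delta\}$ (no periodicity, no $\beta$) already gives $Q$-volume $\le(2\delta)^d$ and power $\ge 1-\varepsilon$, hence $-d\log\delta\le d\log 2-\log\kappa+\sup\{t:\Pr[\sum_i\imath_{\calA_i}\le t]\le\varepsilon+\kappa\}$, which is tighter than the stated bound. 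The paper's quantization route is chosen not for sharpness but because it makes the reduction to a standard channel-coding converse transparent. Finally, your ``main obstacle'' paragraph misidentifies the boundary estimate: the quantization-boundary probability is $2\delta\tilM\approx 2\beta$ per coordinate (so $2d\beta$ after the union bound), not $O(d\delta/\beta)$.
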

The proof of Theorem \ref{fbl:converse} is given in Appendix \ref{proof:converse}. The proof of Theorem \ref{fbl:converse} is decomposed into two steps: i) we use the result in \cite{kaspi2018searching} which states that the excess-resolution probability of any non-adaptive query procedure can be lower bounded by the error probability associated with channel coding over the measurement-dependent channel with uniform message distribution, minus a certain term depending on $\beta$; and ii) we apply the non-asymptotic converse bound for channel coding~\cite[Proposition 4.4]{TanBook} by realizing that, given a sequence of queries, the measurement-dependent channel is simply a time varying channel with deterministic states at each time point.

We remark that the non-asymptotic bounds in Theorems \ref{ach:fbl} and \ref{fbl:converse} hold for any number of queries and any measurement-dependent channels satisfying \eqref{assump:continuouschannel}. As we shall see in the next subsection, these non-asymptotic bounds lead to the second-order asymptotic result in Theorem \ref{result:second}, which provides an approximation to the finite blocklength fundamental limit $\delta^*(n,d,\varepsilon)$. The exact calculation of the upper bound in Theorem \ref{fbl:converse} is challenging. However, for $n$ sufficiently large, as demonstrated in the proof of Theorem \ref{result:second}, the supremum in \eqref{uppbound} can be achieved by queries $\calA^n$ where each query $\calA_i$ has the same size.

\subsection{Second-Order Asymptotic Approximation}
In this subsection, we present the second-order asymptotic approximation to the achievable resolution $\delta^*(n,d,\varepsilon)$ of optimal non-adaptive query procedures after $n$ queries subject to a worst case excess-resolution probability of $\varepsilon\in[0,1)$.

Given measurement-dependent channels $\{P_{Y|X}^q\}_{q\in[0,1]}$, the channel ``capacity" is defined as
\begin{align}
C&:=\max_{q\in[0,1]} \mathbb{E}[\imath_{q,q}(X;Y)]\label{def:capacity},
\end{align}
where $(X,Y)\sim \mathrm{Bern}(q)\times P_{Y|X}^q$.

Let the capacity-achieving set $\calP_{\rm{ca}}$ be the set of optimizers achieving \eqref{def:capacity}. Then, for any $\varepsilon\in[0,1)$, define the following ``dispersion'' of the measurement-dependent channel
\begin{align}
V_\varepsilon
&:=\left\{
\begin{array}{cc}
\min_{q\in\calP_{\rm{ca}}}\mathrm{Var}[\imath_{q,q}(X;Y)]&\mathrm{if~}\varepsilon<0.5,\\
\max_{q\in\calP_{\rm{ca}}}\mathrm{Var}[\imath_{q,q}(X;Y)]&\mathrm{if~}\varepsilon\geq 0.5.
\end{array}
\right.
\label{def:veps}
\end{align}
The case of $\varepsilon<0.5$ will be the focus of the sequel of this paper. 

Note that since we consider channels with finite input and output alphabets, i.e., $|\calX|<\infty$ and $|\calY|<\infty$, similarly to \cite[Lemma 46]{polyanskiy2010finite}, the third absolute moment of $\imath_{q,q}(X;Y)$ is finite for any $q\in(0,1)$.
\begin{theorem}
\label{result:second}
For any $\varepsilon\in(0,1)$, the achievable resolution $\delta^*(n,d,\varepsilon)$ of optimal non-adaptive query procedures satisfies
\begin{align}
-\log\delta^*(n,d,\varepsilon)
&=\frac{1}{d}\Big(nC+\sqrt{nV_\varepsilon}\Phi^{-1}(\varepsilon)+O(\log n)\Big)\label{joint:search},
\end{align}
where the remainder term satisfies $-\frac{1}{2}\log n+O(1)\leq O(\log n)\leq\log n+O(1)$.
\end{theorem}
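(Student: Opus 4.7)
The plan is to derive matching achievability and converse bounds on $-\log \delta^*(n,d,\varepsilon)$ by instantiating Theorems~\ref{ach:fbl} and~\ref{fbl:converse} with carefully chosen parameters and invoking the Berry--Esseen theorem on sums of i.i.d.\ information densities.

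For the achievability, I would invoke Algorithm~\ref{procedure:nonadapt} with design parameter $p = q^\star$, where $q^\star \in \calP_{\mathrm{ca}}$ attains the minimum variance defining $V_\varepsilon$ (for $\varepsilon<0.5$). In Theorem~\ref{ach:fbl}, I would set $\eta = n^{-2}$ and $\log M^d = nC + \sqrt{nV_\varepsilon}\,\Phi^{-1}(\varepsilon) - \tfrac{1}{2}\log n - K$ for a sufficiently large constant $K$. With this choice, the channel-atypicality penalty $4n\exp(-2M^d\eta^2)$ is doubly exponentially small (since $M^d$ grows exponentially in $n$ while $\eta^{-2}$ grows only polynomially), and the change-of-measure factor satisfies $\exp(n\eta c(p)) = 1 + O(n^{-1})$. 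For the remaining random-coding expectation, since $\bar X^n \sim P_X^n$ is independent of $(X^n,Y^n)$, the standard change-of-measure argument gives $\Pr\{\imath_p(\bar X^n;Y^n) \geq \gamma \mid Y^n\} \leq e^{-\gamma}$, so by splitting on the threshold $\gamma = \log M^d + \tfrac{1}{2}\log n$ the expectation is bounded by $\Pr\{\imath_p(X^n;Y^n) \leq \log M^d + \tfrac{1}{2}\log n\} + n^{-1/2}$. The Berry--Esseen theorem applied to the i.i.d.\ sum $\imath_p(X^n;Y^n)$ with per-letter mean $C$ and variance $V_\varepsilon$ then shows this is at most $\varepsilon$ for my choice of $M$, so that the procedure achieves resolution $1/M$.

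For the converse, I would apply Theorem~\ref{fbl:converse} with $\beta = \kappa = n^{-1/2}$, so that the additive terms $-d\log\beta - \log\kappa$ contribute only $O(\log n)$ and the tolerated error budget is $\varepsilon + 2d\beta + \kappa = \varepsilon + O(n^{-1/2})$. Writing $q_i := |\calA_i|$, each summand $\imath_{\calA_i}(X_i;Y_i)$ is independent with mean $C(q_i) := \mathbb{E}[\imath_{q_i,q_i}(X;Y)]$ and variance $V(q_i)$, so a Berry--Esseen lower bound on the cdf of the sum yields
\begin{equation*}
\sup\!\Bigl\{t: \Pr\!\Bigl\{\sum_{i\in[n]} \imath_{\calA_i}(X_i;Y_i) \leq t\Bigr\} \leq \varepsilon + O(n^{-1/2})\Bigr\} \leq \sum_{i\in[n]} C(q_i) + \sqrt{\sum_{i\in[n]} V(q_i)}\,\Phi^{-1}(\varepsilon) + O(1).
\end{equation*}
Since $C(q) \leq C$ with equality only on $\calP_{\mathrm{ca}}$ and $\Phi^{-1}(\varepsilon)<0$, a second-order Taylor expansion around a minimum-variance capacity achiever $q^\star$ shows that any deviation $q_i = q^\star + \Delta_i$ loses $\Theta(\sum_i \Delta_i^2)$ in the mean while shifting the variance only linearly in $\sum_i\Delta_i$; combined with AM--GM, this shows that the optimizer is (up to $O(1)$) the uniform choice $q_i\equiv q^\star$, yielding $nC + \sqrt{nV_\varepsilon}\Phi^{-1}(\varepsilon) + O(1)$.

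The main obstacle will be the balancing act in the achievability direction: one must choose $\eta$ so that both the channel-atypicality term $4n\exp(-2M^d\eta^2)$ and the change-of-measure factor $\exp(n\eta c(p))$ are negligible, while the random-coding expectation still admits a Berry--Esseen analysis with the intended $\sqrt{n V_\varepsilon}\Phi^{-1}(\varepsilon)$ second-order term. The choice $\eta = n^{-2}$ works precisely because $M^d$ is exponentially large in $n$, but verifying that all three slack sources simultaneously fit within the target $O(\log n)$ remainder is the technically delicate step unique to the measurement-dependent setting. A secondary subtlety in the converse is the Taylor-expansion argument showing that non-uniform query sizes cannot help beyond $O(1)$, which relies on mild regularity (local strict concavity at $q^\star$) of the map $q \mapsto C(q)$.
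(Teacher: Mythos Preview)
Your proposal is correct and follows essentially the same route as the paper: achievability via Theorem~\ref{ach:fbl} with a capacity-achieving $p=q^\star$, splitting the RCU expectation at threshold $\log M^d+\tfrac12\log n$, and Berry--Esseen on the i.i.d.\ sum $\imath_p(X^n;Y^n)$; converse via Theorem~\ref{fbl:converse} with $\beta,\kappa\asymp n^{-1/2}$ and Berry--Esseen on the independent (non-identical) sum $\sum_i\imath_{\calA_i}(X_i;Y_i)$.

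The only notable differences are cosmetic. For the slack parameter you take $\eta=n^{-2}$, whereas the paper sets $\eta=\sqrt{d\log M/(2M^d)}$; both make $4n\exp(-2M^d\eta^2)$ and $\exp(n\eta c(p))-1$ vanish faster than any power of $n$, so either choice yields the $-\tfrac12\log n$ achievability remainder. In the converse, you argue via a local Taylor expansion of $q\mapsto C(q)$ around $q^\star$ (quadratic loss in the mean versus at most linear shift in the variance) to pin the optimizer to $q_i\equiv q^\star$ up to $O(1)$; the paper instead invokes the standard ``first-order term dominates'' argument directly, observing that $nC_{\calA^n}\le nC$ forces near-capacity-achieving query sizes and then reads off $V_\varepsilon$. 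Your version is a bit more explicit about the mechanism but requires the extra regularity hypothesis you flag (strict concavity of $C(\cdot)$ at $q^\star$), while the paper's formulation sidesteps this at the cost of being terser. One small omission: your discussion is phrased for $\varepsilon<\tfrac12$ (so $\Phi^{-1}(\varepsilon)<0$ and the min-variance $q^\star$ is selected); for $\varepsilon\ge\tfrac12$ the same argument goes through with the max-variance capacity achiever, matching the two-branch definition of $V_\varepsilon$.
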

The proof of Theorem \ref{result:second} is provided in Appendix \ref{proof:second}. In the achievability proof, we make use of the non-adaptive query procedure in Algorithm \ref{procedure:nonadapt} and thus prove its second-order asymptotic optimality.

We make the following remarks. Firstly, Theorem \ref{result:second} implies a phase transition analogous to those found in group testing~\cite{Scarlett:2016:PTG:2884435.2884439} and the pooled data problem~\cite{scarlett2017nips}, which we exhibit in Figure \ref{pt}.
\begin{figure}[tb]
\centering
\includegraphics[width=.5\columnwidth]{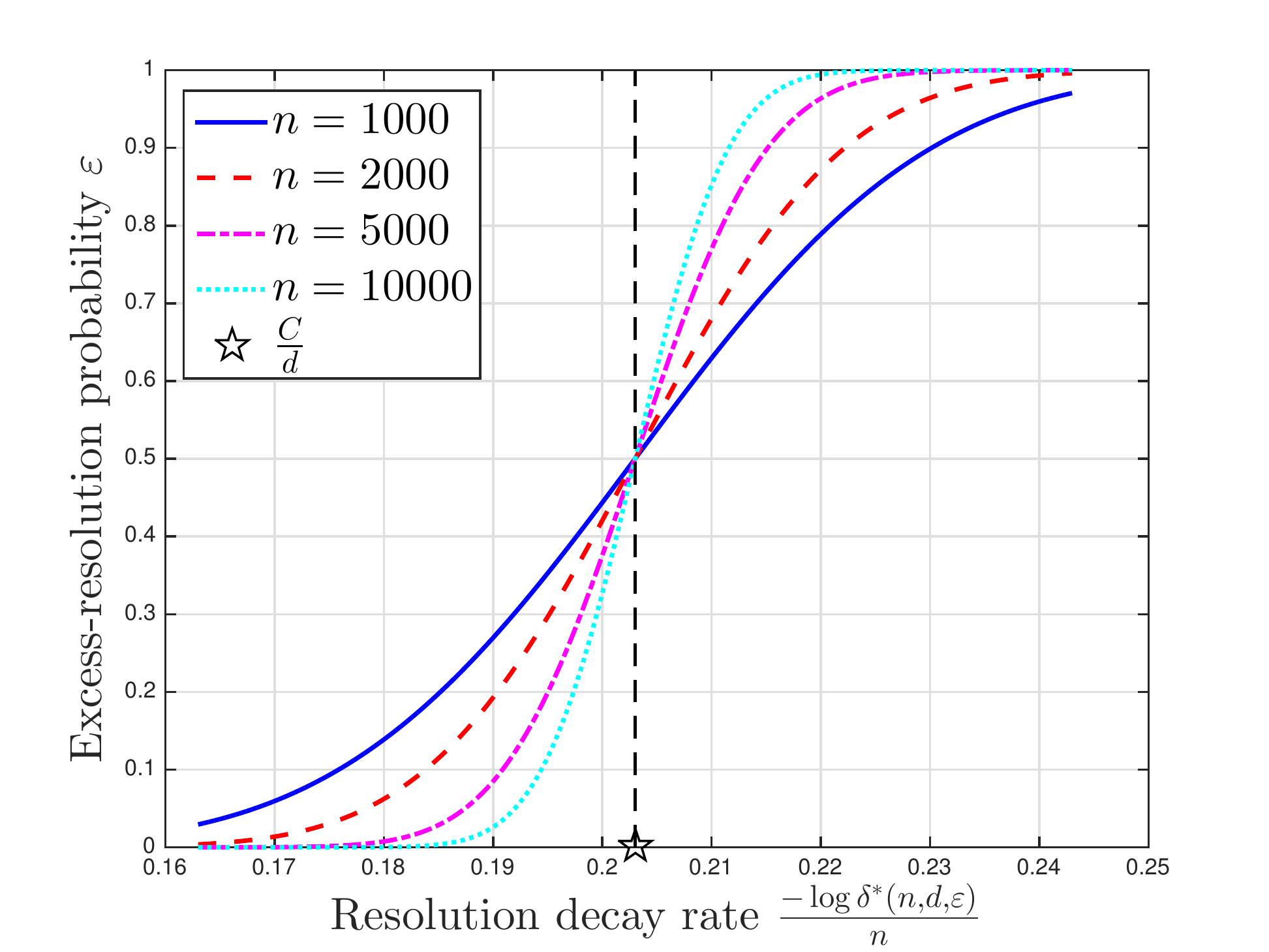}
\caption{Illustration of the phase transition of non-adaptive query procedures for the case of $d=2$ when the noisy channel is a measurement-dependent BSC with parameter $\nu=0.2$. On the one hand, when the resolution decay rate is strictly greater than the capacity $\frac{C}{d}$, then as the number of the queries $n\to\infty$, the excess-resolution probability tends to one. On the other hand, when the resolution decay rate is strictly less than the capacity $\frac{C}{d}$, then the excess-resolution probability vanishes as the number of the queries increases.}
\label{pt}
\end{figure}
We remark that this phase transition is a direct result of the second-order asymptotic analysis and does not follow from a first-order asymptotic analysis, e.g., that was developed in \cite[Theorem 1]{kaspi2018searching}. As a corollary of Theorem \ref{result:second}, for any $\varepsilon\in(0,1)$, we see that the first-order term $\frac{C}{d}$ in the bound \eqref{joint:search} determines the asymptotic convergence rate. Specifically,
\begin{align}
\lim_{n\to\infty}-\frac{1}{n}\log \delta^*(n,d,\varepsilon)=\frac{C}{d}\label{phasetransition}.
\end{align}
which takes the form of a strong converse~\cite{zhou2016cilossy,wei2009strong,liu2016brascamp} to the channel coding theorem. The result in \eqref{phasetransition} indicates that tolerating a smaller, or even vanishing excess-resolution probability, does not improve the asymptotic achievable resolution decay rate of an optimal non-adaptive query procedure.

Secondly, Theorem \ref{result:second} refines \cite[Theorem 1]{kaspi2018searching} in several directions. First, Theorem \ref{result:second} is a second-order asymptotic result that provides an approximation for the finite blocklength performance while \cite[Theorem 1]{kaspi2018searching} only characterizes the asymptotic resolution decay rate with vanishing worst-case excess-resolution probability for a one-dimensional target, i.e., $\lim_{\varepsilon\to 0}\lim_{n\to\infty}(-\log\delta^*(n,1,\varepsilon))$. Second, our results hold for any measurement-dependent channel satisfying \eqref{assump:continuouschannel} while \cite[Theorem 1]{kaspi2018searching} only addresses the measurement-dependent BSC.

Thirdly, the dominant event which leads to an excess-resolution in noisy 20 questions estimation is the atypicality of the information density $\imath_p(X^n;Y^n)$ (cf. \eqref{def:ixnyn}). To characterize the probability of this event, we make use of the Berry--Esseen theorem and show that the mean $C$ and the variance $\rmV_\varepsilon$ of the information density $\imath_{q,q}(X;Y)$ play critical roles.

Fourthly, we remark that any number $s\in[0,1]$ has the binary expansion $(b_0.b_1b_2\ldots)$. We can thus interpret Theorem \ref{result:second} as follows: using an optimal non-adaptive query procedure, after $n$ queries, with probability of at least $1-\varepsilon$, one can extract the first $\lfloor-\log_2\delta^*(n,d,\varepsilon)\rfloor$ bits of the binary expansion of each dimension of the target variable $\bS=(S_1,\ldots,S_d)$.

A final remark is that if the target variable is multi-dimensional (i.e., $d>1$), applying Algorithm \ref{procedure:nonadapt} independently over each dimension will not achieve the second-order asymptotic optimality. In contrast, such an independent application of Algorithm \ref{procedure:nonadapt} to each dimension will achieve the first-order asymptotic optimality in the limit of an infinite number of queries. We explain this dichotomy between first- and second-order asymptotic optimality below.

From the result in \eqref{phasetransition}, we observe that it is in fact first-order asymptotically optimal to allocate roughly $\frac{n}{d}$ queries to each dimension using the special $d=1$ case of Algorithm \ref{procedure:nonadapt} when searching for a $d$-dimensional target variable. Such a decoupled searching algorithm achieves the first-order asymptotic optimal resolution decay rate for non-adaptive query procedures. However, when we include the second-order term in \eqref{joint:search}, it is clear that allocating equal number of queries to search over each dimension is \emph{not} optimal in terms of second-order asymptotics. Supposed that over each dimension $i\in[d]$, we allocate $\frac{n}{d}$ queries to search for the value of $S_i$ and tolerate excess-resolution probabilities $\varepsilon_i$, similarly to the achievability part of Theorem \ref{result:second}, we find that the achievable resolution $\delta_{\rm{sep}}(n,d,\varepsilon)$ satisfies
\begin{align}
-\log\delta_{\rm{sep}}(n,d,\varepsilon)
&=\max_{(\varepsilon_1,\ldots,\varepsilon_d):\sum_{i\in[d]}\varepsilon_i\leq \varepsilon}\min_{i\in[d]}\left\{\frac{nC}{d}+\sqrt{\frac{n\rmV_{\varepsilon_i}}{d}}\Phi^{-1}(\varepsilon_i)\right\}+O(\log n)\label{performance:seperate1}\\
&=\frac{nC}{d}+\sqrt{\frac{n\rmV_{\frac{\varepsilon}{d}}}{d}}\Phi^{-1}\left(\frac{\varepsilon}{d}\right)+O(\log n)\label{performance:seperate},
\end{align}
where \eqref{performance:seperate} follows since i) for any $\varepsilon\in(0,1)$ and any
$d\geq 2$, the minimization in \eqref{performance:seperate1} is achieved by some $i\in[d]$ such that $\varepsilon_i<0.5$ because $\Phi^{-1}(a)$ is decreasing in $a\in[0,1]$ and $\Phi^{-1}(a)< 0$ for any $a< 0.5$, and ii) for any $\varepsilon\in[0,1]$, the maximization is achieved by a vector $(\varepsilon_1,\ldots,\varepsilon_d)$ where $\varepsilon_i=\frac{\varepsilon}{d}$ for all $i\in[d]$.

Note that for any $\varepsilon\geq 0.5$ and any $d\geq 2$, the right hand side of \eqref{performance:seperate} is no greater than $\frac{nC}{d}+O(\log n)$ since $\Phi^{-1}(\frac{\varepsilon}{d})\leq \Phi^{-1}(0.5)=0$. However, the right hand side of \eqref{joint:search} is greater than $\frac{nC}{d}+O(\log n)$ since both $\Phi^{-1}(\varepsilon)$ and $V_\varepsilon$ are positive when $\varepsilon>0.5$. Furthermore, when $\varepsilon<0.5$, we have
\begin{align}
\frac{nC}{d}+\sqrt{\frac{n\rmV_{\frac{\varepsilon}{d}}}{d}}\Phi^{-1}\left(\frac{\varepsilon}{d}\right)
&=\frac{nC}{d}+\sqrt{\frac{n\rmV_{\varepsilon}}{d}}\Phi^{-1}\left(\frac{\varepsilon}{d}\right)\label{decreasePhi-1}\\
&<\frac{nC}{d}+\sqrt{\frac{n\rmV_{\varepsilon}}{d}}\Phi^{-1}\left(\varepsilon\right)\label{phi-1non}\\
&<\frac{nC}{d}+\sqrt{\frac{n\rmV_{\varepsilon}}{d^2}}\Phi^{-1}(\varepsilon)\label{usepositivePhiinv}\\
&=\frac{1}{d}\Big(nC+\sqrt{nV_\varepsilon}\Phi^{-1}(\varepsilon)\Big)
\end{align}
where \eqref{decreasePhi-1} follows since $V_\varepsilon$ (cf. \eqref{def:veps}) takes the same value for any $\varepsilon\in[0,0.5)$, \eqref{phi-1non} follows since $\Phi^{-1}(a)$
 in non-decreasing in $a\in[0,1]$ and $\varepsilon>\frac{\varepsilon}{2}\geq \frac{\varepsilon}{d}$, \eqref{usepositivePhiinv} follows since $d\geq 2$ and $\Phi^{-1}(\varepsilon)<0$ for any $\varepsilon<0.5$. Therefore, for any $d\geq 2$, the result in \eqref{performance:seperate} is always smaller than the result in \eqref{joint:search}. This implies that separate searching over each dimension of a multidimensional target variable is in fact \emph{not} optimal. This is verified by a numerical simulation in Section \ref{sec:numerical} (Figure \ref{sim_non_adap_sep}).

In the following, we specialize Theorem \ref{result:second} to different measurement-dependent channels.
\subsection{Case of Measurement-Dependent BSC}
 We first consider a measurement-dependent BSC. Given any $\nu\in(0,1]$ and any $q\in[0,1]$, let $\beta(\nu,q):=q(1-\nu q)+(1-q)\nu q$. For any $(x,y)\in\{0,1\}^2$, the information density of a measurement-dependent BSC with parameter $\nu$ is 
\begin{align}
\imath_{q,\nu q}(x;y)
\nn&=\bbo(x\neq y)\log(\nu q)+\bbo(x=y)\log(1-\nu q)-\bbo(y=1)\log(\beta(\nu,q))\\*
&\qquad-\bbo(y=0)\log(1-\beta(\nu,q)).
\end{align}
The mean and variance of the information density are respectively
\begin{align}
C(\nu,q)&:=\mathbb{E}[\imath_{q,\nu q}(X;Y)]=h_\rmb(\beta(\nu,q))-h_\rmb(\nu q)\label{def:Cnuq},\\
V(\nu,q)&:=\mathrm{Var}[\imath_{q,\nu q}(X;Y)],
\end{align}
where $h_\rmb(p)=-p\log(p)-(1-p)\log(1-p)$ is the binary entropy function. The capacity of the measurement-dependent BSC with parameter $\nu$ is thus
\begin{align}
C(\nu)=\max_{q\in[0,1]}C(\nu,q)\label{def:Cnu},
\end{align}
\begin{figure}[tb]
\centering
\includegraphics[width=.5\columnwidth]{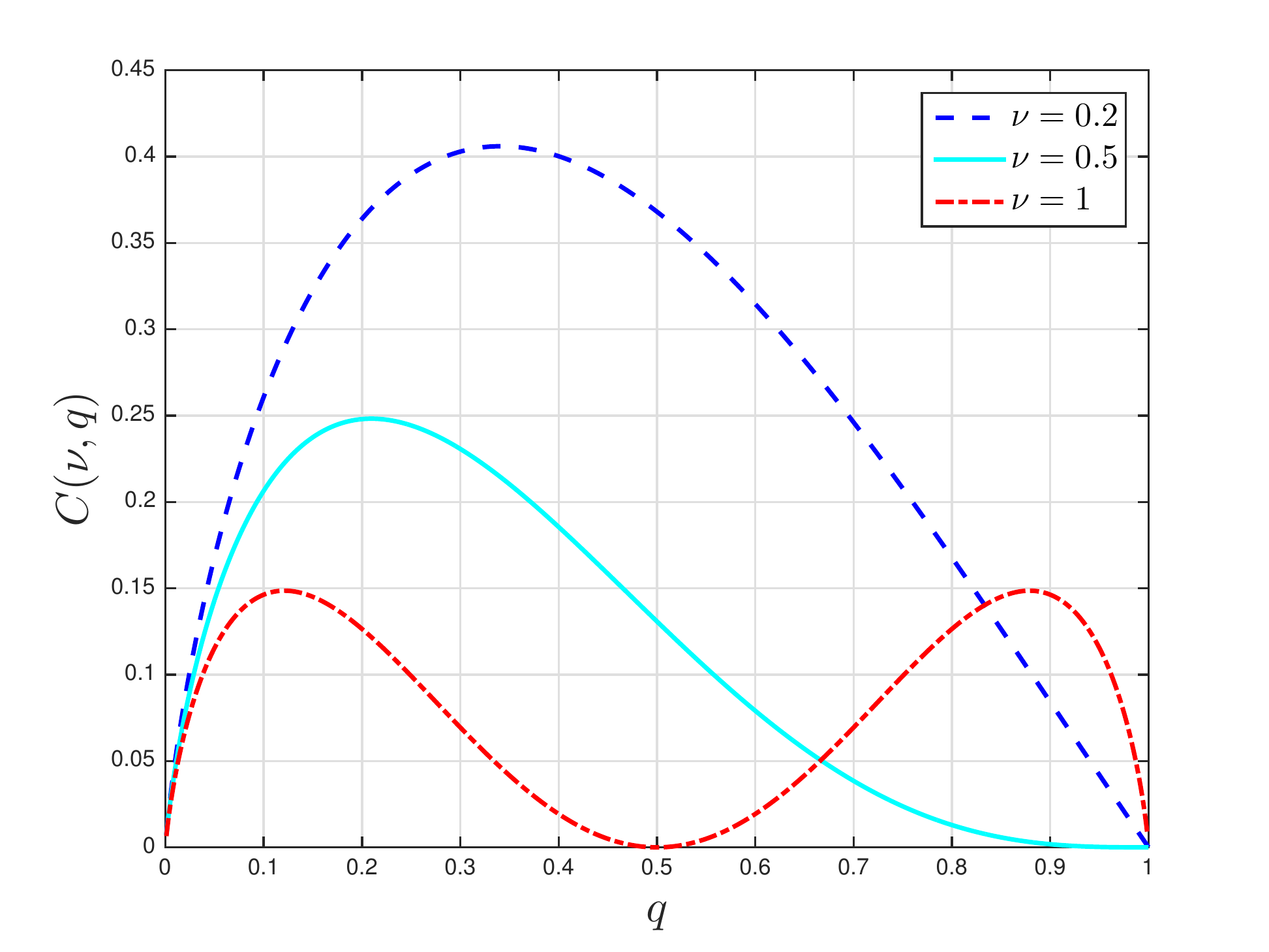}
\caption{Plot of $C(\nu,q)$, the mean of the mutual information density, of a measurement-dependent BSC for various values of $\nu$ and $q\in[0,1]$. For a given $\nu$, the maximum value of $C(\nu,q)$ over $q\in[0,1]$ is the capacity of the measurement-dependent BSC with parameter $\nu$ and the values of $q$ achieving this maximum consists of the set of capacity-achieving parameters $\calP_{\rm{ca}}$. For $\nu=0.2$ and $\nu=0.5$, $\calP_{\rm{ca}}$ is singleton and for $\nu=1$, $\calP_{\rm{ca}}$ contains two elements.}
\label{bsc_capacity}
\end{figure}

Depending on the value of $\nu\in(0,1]$, the set of capacity-achieving parameters $\calP_{\rm{ca}}$ may or may not be a singleton (cf. Figure \ref{bsc_capacity}). In particular, for any $\nu\in(0,1)$, the capacity-achieving parameter $q^*$ is unique. When $\nu=1$, there are two capacity-achieving parameters $q_1^*$ and $q_2^*$ where $q_1^*+q_2^*=1$. It can be verified easily that $V(1,q_1^*)=V(1,1-q_1^*)$. As a result, for any capacity-achieving parameter $q^*$ of the measurement-dependent BSC with parameter $\nu\in(0,1]$, the dispersion of the channel is
\begin{align}
V(\nu)&=V(\nu,q^*).
\end{align}

\begin{corollary}
\label{bsc:non-adaptive}
Let $\nu\in(0,1)$. If the channel from the oracle to the player is a measurement-dependent BSC with parameter $\nu$, then Theorem \ref{result:second} holds with $C=C(\nu)$ and $V_\varepsilon=V(\nu)$ for any $\varepsilon\in(0,1)$.
\end{corollary}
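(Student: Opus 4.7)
The plan is to invoke Theorem \ref{result:second} after verifying that the measurement-dependent BSC with parameter $\nu \in (0,1)$ falls within its scope, and then to specialize the capacity $C$ and the dispersion $V_\varepsilon$ appearing in \eqref{joint:search} to this channel. Since Theorem \ref{result:second} is stated for arbitrary discrete measurement-dependent channels satisfying \eqref{assump:continuouschannel}, the corollary reduces to identifying the relevant constants and verifying the regularity assumption.

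First, I would check that \eqref{assump:continuouschannel} holds for the measurement-dependent BSC. For any $q \in (0,1)$ and any $\xi \in (0, \min(q, 1-q))$, and for sets $\calA, \calA^{\pm}$ of sizes $q$ and $q \pm \xi$, the log-likelihood ratio $\log(P_{Y|X}^{\calA}(y|x)/P_{Y|X}^{\calA^{\pm}}(y|x))$ equals either $\log(\nu q/\nu(q \pm \xi))$ when $y \neq x$ or $\log((1-\nu q)/(1-\nu(q \pm \xi)))$ when $y = x$. Using the elementary estimate $|\log(1+t)| \leq 2|t|$ for $|t| \leq 1/2$, each of these is bounded by $c(q)\xi$ for a finite constant $c(q)$ depending only on $q$, e.g., $c(q) = 2\max\{1/q, \nu/(1-\nu q)\}$ suffices. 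This verifies the continuity assumption on the channel.

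Second, by plugging the measurement-dependent BSC information density into \eqref{def:capacity}, together with the definitions in \eqref{def:Cnuq} and \eqref{def:Cnu}, the channel capacity matches $C = C(\nu)$ exactly. Third, and this is the key point for the dispersion, the discussion preceding the corollary observes that for $\nu \in (0,1)$ the maximizer $q^*(\nu)$ of $C(\nu,q)$ over $[0,1]$ is unique (this can also be checked directly from strict concavity of $q \mapsto h_\rmb(\beta(\nu,q)) - h_\rmb(\nu q)$ for $\nu < 1$). Consequently the capacity-achieving set $\calP_{\rm{ca}}$ is a singleton, and the minimum and maximum in \eqref{def:veps} collapse to the same value, giving $V_\varepsilon = V(\nu, q^*) = V(\nu)$ for every $\varepsilon \in (0,1)$, independent of whether $\varepsilon < 0.5$ or $\varepsilon \geq 0.5$. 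Substituting $C = C(\nu)$ and $V_\varepsilon = V(\nu)$ into Theorem \ref{result:second} yields the claim.

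The only delicate point is the uniqueness of the capacity-achieving input; without it the min/max in \eqref{def:veps} could differ, so the corollary would need to be split according to $\varepsilon < 0.5$ versus $\varepsilon \geq 0.5$. This is precisely why the corollary is stated for $\nu \in (0,1)$ rather than including $\nu = 1$, where the capacity-achieving set has two symmetric elements. The remaining verifications—continuity of the channel and the capacity identification—are routine calculus and do not present obstacles.
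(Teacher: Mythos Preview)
Your proposal is correct and mirrors the paper's approach: the paper does not give a separate proof of this corollary, treating it as an immediate specialization of Theorem~\ref{result:second} once the preceding discussion has identified $C(\nu,q)$, $V(\nu,q)$, and noted that $q^*$ is unique for $\nu\in(0,1)$ (so that $\calP_{\rm ca}$ is a singleton and the min/max in \eqref{def:veps} coincide). Your explicit verification of the continuity assumption \eqref{assump:continuouschannel} is a detail the paper merely asserts when listing the BSC among the admissible examples, so you are simply filling in what the paper leaves implicit.
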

We make the following observations. Firstly, if we let $\nu=1$ and take $n\to\infty$, then for any $\varepsilon\in(0,1)$,
\begin{align}
\lim_{n\to\infty}\frac{-\log\delta^*(n,d,\varepsilon)}{n}=\frac{\max_{q\in[0,1]} \big(h_\rmb(\beta(1,q))-h_\rmb(q)\big)}{d}.
\end{align}
This is a strengthened version of \cite[Theorem 1]{kaspi2018searching} with strong converse.

Secondly, when one considers the measurement-independent BSC with parameter $\nu\in(0,1)$, then it can be shown that the achievable resolution $\delta^*_{\rm{mi}}(n,d,\varepsilon)$ of an optimal non-adaptive query procedure satisfies
\begin{align}
-d\log_2 \delta^*_{\rm{mi}}(n,d,\varepsilon)=n(1-h_\rmb(\nu))+\sqrt{n\nu(1-\nu)}\log_2\frac{1-\nu}{\nu}\Phi^{-1}(\varepsilon)+O(\log n).
\end{align}

\subsection{Case of Measurement-Dependent BEC}

We next consider a measurement-dependent BEC. Given any $\tau\in[0,1]$ and any $q\in(0,1)$, for any $(x,y)\in\{0,1\}\times\{0,1,\rme\}$, the information density for a measurement-dependent BEC with parameter $\tau$ is
\begin{align}
\imath_{q,q \tau}(x;y)
&=\bbo(y=x)\log(1-q\tau)-\bbo(y=1)\log(q(1-q\tau))-\bbo(y=0)\log((1-q)(1-q\tau)).
\end{align}
The mean and variance of the information density are respectively
\begin{align}
C(\tau,q)
&:=\mathbb{E}[\imath_{q,q \tau}(X;Y)]=(1-q\tau)h_\rmb(q),\\
V(\tau,q)
\nn&:=\mathrm{Var}[\imath_{q,q\tau}(X;Y)]=(1-q\tau)\Big(h_\rmb(q)\log(1-q\tau)+q\log q\log(q(1-q\tau))\\*
&\qquad+(1-q)\log(1-q)\log((1-q)(1-q\tau))\Big)-(1-q\tau)^2h_\rmb(q)^2.
\end{align}

The capacity of the measurement-dependent BEC with parameter $\tau\in[0,1]$ is given by
\begin{align}
C(\tau)=\max_{q\in[0,1]}C(\tau,q)=\max_{q\in[0,0.5]}(1-q\tau)h_\rmb(q),
\end{align}
where the second equality follows since for any $\tau$, $C(\tau,q)$ is decreasing in $q\in[0.5,1]$. We plot $C(\tau,q)$ for different values of $\tau$ in Figure \ref{bec_capacity}. It can be verified that the capacity-achieving parameter for the measurement-dependent BEC is unique and we denote it by $q^*$. Thus, the dispersion of the channel is $V(\tau,q^*)$.

\begin{figure}[tb]
\centering
\includegraphics[width=.5\columnwidth]{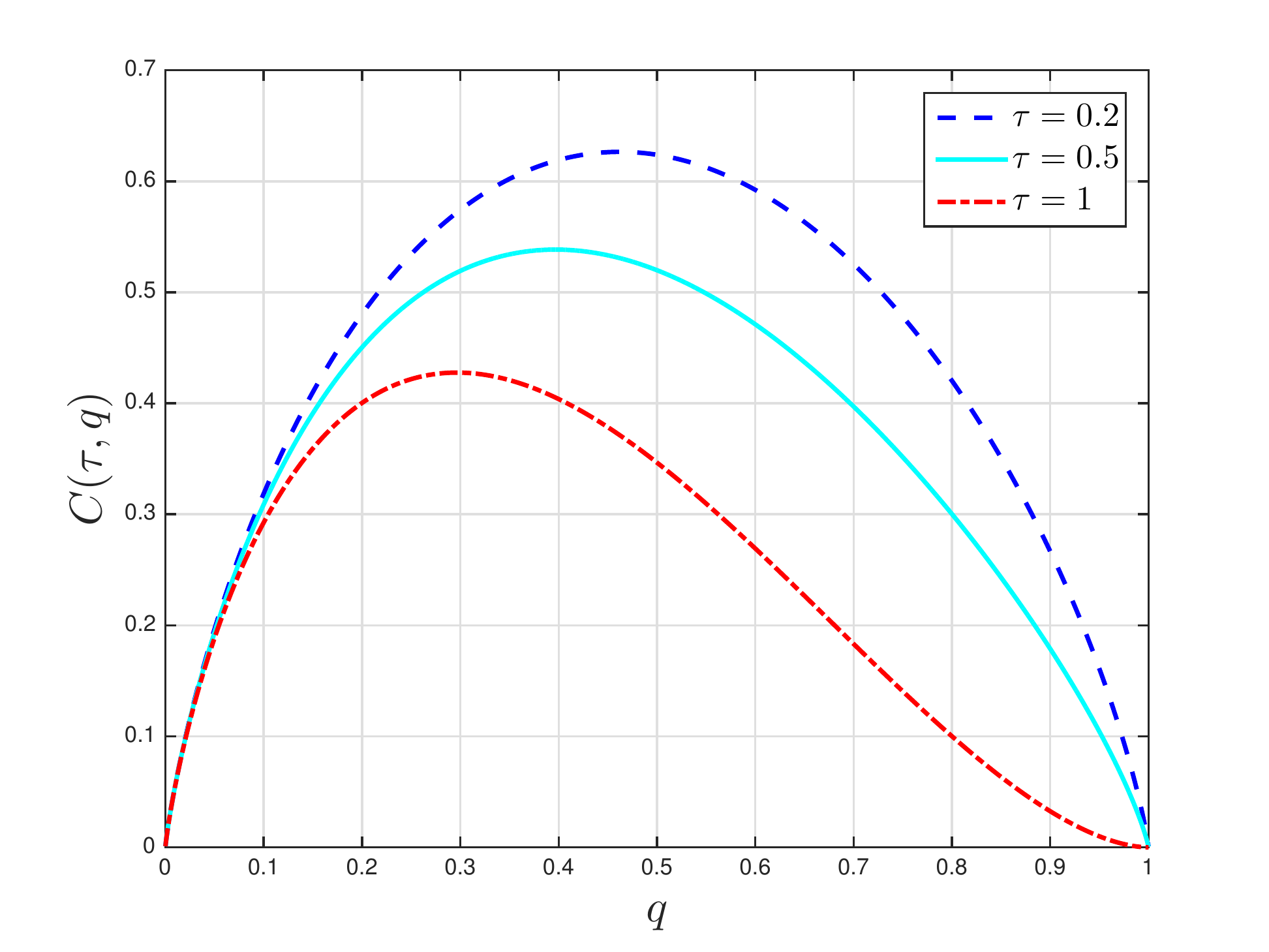}
\caption{Plot of $C(\tau,q)$ for the measurement-dependent BEC with parameter $\tau$ for $q\in[0,1]$. For each given $\tau$, the maximum value of $C(\tau,q)$ over $q\in[0,1]$ is the capacity of the measurement-dependent BEC. Note that the capacity-achieving parameter $q^*$ for measurement-dependent BEC is unique for any $\tau\in(0,1]$.}
\label{bec_capacity}
\end{figure}

We obtain the following:
\begin{corollary}
\label{bec:non-adaptive}
Let $\tau\in[0,1]$. If the channel from the oracle to the player is a measurement-dependent BEC with parameter $\tau$, then Theorem \ref{result:second} holds with $C=C(\tau)$ and $V_\varepsilon=V(\tau,q^*)$ for any $\varepsilon\in[0,1)$.
\end{corollary}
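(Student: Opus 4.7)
The plan is to verify that the measurement-dependent BEC fits the hypotheses of Theorem \ref{result:second}, and then instantiate the formulas for the capacity $C$ and the dispersion $V_\varepsilon$ defined in \eqref{def:capacity} and \eqref{def:veps} with the specific closed-form expressions $C(\tau)$ and $V(\tau,q^*)$.

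First, I would check the continuity condition \eqref{assump:continuouschannel} for the family $\{P_{Y|X}^{\calA}\}$ when $P_{Y|X}^{\calA}(y|x) = (1-\tau|\calA|)^{\bbo(y=x)}(\tau|\calA|)^{\bbo(y=\rme)}$. For $|\calA| = q$ and $|\calA^\pm| = q \pm \xi$ with $\xi \in (0,\min(q,1-q))$, each log-ratio $\log(P_{Y|X}^{\calA}(y|x)/P_{Y|X}^{\calA^\pm}(y|x))$ equals either $\log\frac{1-\tau q}{1-\tau(q\pm\xi)}$ or $\log\frac{q}{q\pm\xi}$. A first-order Taylor expansion in $\xi$ of each of these, bounded uniformly over the allowable range, gives a constant $c(\tau,q)$ for which \eqref{assump:continuouschannel} holds. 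This part is routine.

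Second, I would identify the mean and variance of the information density. By direct computation, using $P_Y^{q,q\tau}(1) = q(1-q\tau)$, $P_Y^{q,q\tau}(0) = (1-q)(1-q\tau)$ and $P_Y^{q,q\tau}(\rme) = q\tau$, one recovers $C(\tau,q) = (1-q\tau)h_{\rm b}(q)$ and the stated expression for $V(\tau,q)$ as the variance of $\imath_{q,q\tau}(X;Y)$ under the joint law $\mathrm{Bern}(q)\times P_{Y|X}^q$. Then $C = \max_q C(\tau,q)$ matches \eqref{def:capacity}.

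Third, I would verify that the capacity-achieving set $\calP_{\rm ca}$ is a singleton, so that the distinction between the $\varepsilon < 0.5$ and $\varepsilon \geq 0.5$ cases in \eqref{def:veps} collapses to a single value $V(\tau,q^*)$. Writing $\phi(q) := (1-q\tau)h_{\rm b}(q)$, one has $\phi'(q) = -\tau h_{\rm b}(q) + (1-q\tau)\log\frac{1-q}{q}$, which is strictly decreasing on $(0,1)$ since both $-\tau h_{\rm b}(q)$ (in a neighborhood of the critical point, via strict concavity of $h_{\rm b}$) and $(1-q\tau)\log\frac{1-q}{q}$ combine into a strictly decreasing function on $(0,1)$; together with $\phi'(0^+)=+\infty$ and $\phi'(1^-)=-\infty$, this yields a unique interior critical point, which is the maximizer $q^*$. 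The hard part of the argument is this strict concavity/uniqueness check, which I would carry out by showing $\phi''(q) < 0$ on $(0,1)$; this is a short calculus exercise but is the only non-trivial analytic step.

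Finally, having verified the hypotheses of Theorem \ref{result:second} and identified $\calP_{\rm ca} = \{q^*\}$, the corollary follows immediately from Theorem \ref{result:second} by substituting $C = C(\tau)$ and $V_\varepsilon = V(\tau,q^*)$ into \eqref{joint:search}.
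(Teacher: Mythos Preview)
Your proposal is correct and matches the paper's approach: the paper does not give a separate proof of this corollary, treating it as immediate from Theorem~\ref{result:second} once $C(\tau,q)$, $V(\tau,q)$, and the uniqueness of $q^*$ have been recorded in the preceding paragraph. Your write-up simply fills in those verification steps (continuity condition, computation of the moments, uniqueness of the maximizer) more explicitly than the paper does.

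One small caution on the continuity step: the bound $|\log\frac{q}{q-\xi}|\le c(q)\xi$ is not literally uniform over the full range $\xi\in(0,\min(q,1-q))$ (the left side blows up as $\xi\uparrow q$), so ``bounded uniformly over the allowable range'' overstates things. However, inspection of how \eqref{assump:continuouschannel} is actually used in the proofs of Theorems~\ref{ach:fbl} and~\ref{result:second} shows that only vanishingly small $\xi$ (namely $\eta=O(\sqrt{n}\exp(-nC/2))$) is ever needed, and for that regime your Taylor argument is perfectly adequate. The paper itself asserts without proof that the BEC satisfies \eqref{assump:continuouschannel}, so this looseness is inherited from the paper rather than introduced by you.
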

The remarks we made for Corollary \ref{bsc:non-adaptive} apply equally to Corollary \ref{bec:non-adaptive}, but additional properties are worthwhile to mention. Firstly, if one considers a measurement-independent BEC with parameter $\tau\in[0,1]$, then the achievable resolution $\delta^*_{\rm{mi}}(n,d,\varepsilon)$ of optimal non-adaptive query procedures satisfies
\begin{align}
-d\log_2 \delta^*_{\rm{mi}}(n,d,\varepsilon)=n(1-\tau)+\sqrt{n\tau(1-\tau)}\Phi^{-1}(\varepsilon)+O(\log n).
\end{align}

Secondly, if the channel is a noiseless (i.e., $\tau=0$), the achievable resolution of optimal non-adaptive query procedures satisfies
\begin{align}
-d\log_2 \delta^*(n,d,\varepsilon)=n+O(\log n).
\end{align}
Note that, interestingly, for the noiseless 20 questions problem, the achievable resolution of optimal non-adaptive querying does not depend on the target excess-resolution probability $\varepsilon\in[0,1)$ for any number of queries $n$. This is in contrast to the noisy 20 questions problem where a similar phenomenon occurs only when $n\to\infty$, c.f. \eqref{phasetransition}. The implication is that that in the noiseless 20 questions problem, for any number of the queries $n\in\bbN$, the achievable resolution of optimal non-adaptive query procedures cannot be improved even if one tolerates a larger excess-resolution probability $\varepsilon$.

\subsection{Case of Measurement-Dependent Z-Channel}
We next consider a measurement-dependent Z-channel. Given any $\zeta\in(0,1]$ and $q\in(0,1]$, for any $(x,y)\in\{0,1\}^2$, the information density of a measurement-dependent Z-channel with parameter $\zeta$ is
\begin{align}
\imath_{q,\zeta q}(x;y)
&=\bbo(y=x=0)\log\frac{1}{1-q+\zeta q^2}+\bbo(y=0,x=1)\log\frac{\zeta q}{1-q+\zeta q^2}+\bbo(y=x=1)\log\frac{1-\zeta q}{q-\zeta q^2}.
\end{align}
The mean and the variance of the information density are respectively
\begin{align}
C(\zeta,q)
&:=\mathbb{E}[\imath_{q,\zeta q}(X;Y)]
=h_\rmb(q(1-\zeta q))-q h_\rmb(\zeta q),\\
V(\zeta,q)
\nn&:=\mathrm{V}[\imath_{q,\zeta q}(X;Y)].
\end{align}
The capacity of the measurement-dependent Z-channel with parameter $\zeta$ is
\begin{align}
C(\zeta)
&=\max_{q\in[0,1]}C(\zeta,q).
\end{align}
We plot $C(\zeta,q)$ for different values of $\zeta$ and $q\in[0,1]$ in Figure \ref{z_capacity}.
\begin{figure}[tb]
\centering
\includegraphics[width=.5\columnwidth]{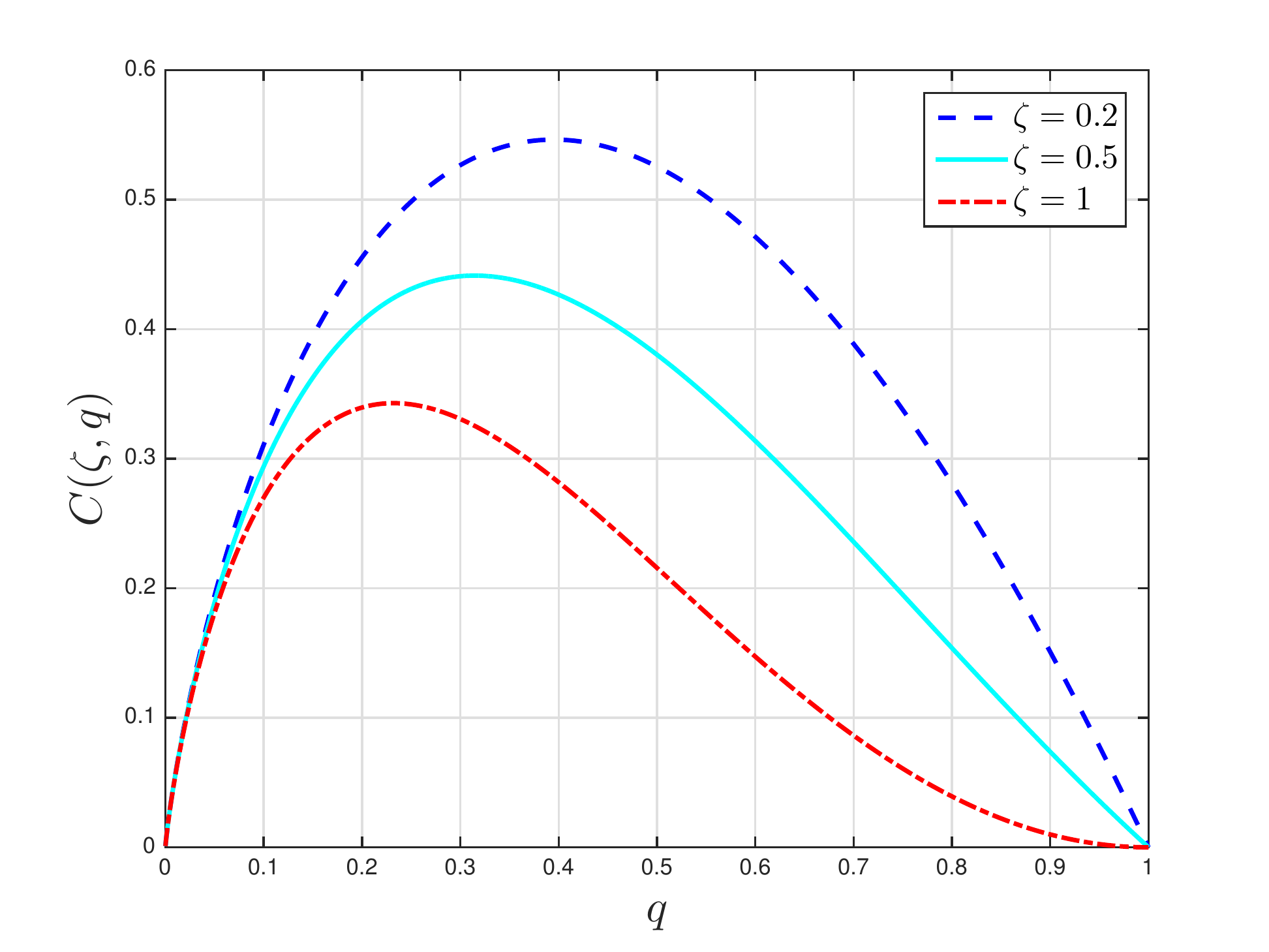}
\caption{Plot of $C(\zeta,q)$, the mean of the information density, of the measurement-dependent binary Z-channel with parameter $\zeta$ for $q\in[0,1]$. For any $\zeta\in(0,1]$, there exists a unique capacity-achieving value $q^*\in[0,1]$.}
\label{z_capacity}
\end{figure}
It can be verified that the capacity achievable parameter for the measurement-dependent Z-channel is unique and we denote the optimizer as $q^*$. Therefore, the dispersion of the Z-channel is $V(\zeta,q^*)$. Our second-order asymptotic result in Theorem \ref{result:second} specializes to the Z-channel as follows.
\begin{corollary}
\label{bz:non-adaptive}
Let $\zeta\in(0,1]$. If the channel from the oracle to the player is a measurement-dependent Z-channel with parameter $\zeta$, then Theorem \ref{result:second} holds with $C=C(\zeta,q^*)$ and $V_\varepsilon=V(\zeta,q^*)$ for any $\varepsilon\in(0,1)$.
\end{corollary}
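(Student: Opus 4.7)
The plan is to derive Corollary \ref{bz:non-adaptive} as a direct specialization of Theorem \ref{result:second} by verifying the three ingredients on which the theorem depends: (i) the continuity assumption \eqref{assump:continuouschannel} on the measurement-dependent channel family, (ii) the characterization of the capacity $C$ via $\max_{q\in[0,1]}C(\zeta,q)$, and (iii) the computation of the dispersion $V_\varepsilon$ via the capacity-achieving set $\calP_{\rm ca}$. Once these are in place, the claimed second-order expansion follows by invoking Theorem \ref{result:second} verbatim.

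First, I would verify \eqref{assump:continuouschannel} for the measurement-dependent Z-channel. Writing $q=|\calA|$, $q^{+}=q+\xi$ and $q^{-}=q-\xi$, the only $(x,y)$ pairs on the support of $P_{Y|X}^\calA$ are $(0,0)$, $(1,0)$ and $(1,1)$, with values $1$, $\zeta q$ and $1-\zeta q$ respectively. The pair $(0,0)$ contributes $\log 1 = 0$, and the other two contribute $\log\frac{\zeta q}{\zeta q^{\pm}} = \log\frac{q}{q^{\pm}}$ and $\log\frac{1-\zeta q}{1-\zeta q^{\pm}}$, each of which is controlled by $\xi$ using the elementary bound $|\log(1+t)|\le |t|/(1-|t|)$ for $|t|<1$. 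This yields a Lipschitz constant of the form $c(q)=\max\bigl\{1/q,\,\zeta/(1-\zeta q)\bigr\}$ up to an absolute multiplicative factor, valid for all $q$ bounded away from $0$ and $1/\zeta$, which suffices since the capacity-achieving input belongs to the open interval $(0,1)$.

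Second, I would show that the map $q\mapsto C(\zeta,q)=h_\rmb\bigl(q(1-\zeta q)\bigr)-q\,h_\rmb(\zeta q)$ attains its supremum at a unique interior point $q^*\in(0,1)$. Computing $\partial_q C(\zeta,q)$ and analyzing its sign, one can show that $C(\zeta,\cdot)$ is strictly concave on $(0,1)$ (or, failing that, that the derivative changes sign exactly once), which together with $C(\zeta,0)=C(\zeta,1)=0$ for $\zeta\in(0,1]$ pins down a unique maximizer $q^*=q^*(\zeta)$. This uniqueness is consistent with Figure \ref{z_capacity}. Consequently $\calP_{\rm ca}=\{q^*\}$ is a singleton, so both branches in the definition \eqref{def:veps} of $V_\varepsilon$ collapse to the same value $V(\zeta,q^*)$ for every $\varepsilon\in(0,1)$. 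Finiteness of the third absolute moment of $\imath_{q^*,\zeta q^*}(X;Y)$ follows from the finite alphabet, as noted before Theorem \ref{result:second}.

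Finally, with the continuity assumption verified, $C=C(\zeta,q^*)$ identified, and $V_\varepsilon=V(\zeta,q^*)$ computed, Theorem \ref{result:second} applies directly and yields the claimed expansion. The main obstacle is the rigorous justification of uniqueness of $q^*$ for every $\zeta\in(0,1]$: unlike the BSC case, the Z-channel lacks the symmetry that makes strict concavity transparent, so the argument will hinge on a direct derivative computation or, alternatively, on standard properties of the capacity-achieving distribution for channels with a single zero entry in the transition matrix. The continuity verification, by contrast, is routine once one handles the zero entry $P_{Y|X}^\calA(1|0)=0$ by restricting the infinity norm to the common support of $P_{Y|X}^\calA$ and $P_{Y|X}^{\calA^\pm}$.
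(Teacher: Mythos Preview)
Your proposal is correct and follows exactly the route the paper takes: the paper states the corollary as a direct specialization of Theorem \ref{result:second}, merely asserting beforehand that ``it can be verified that the capacity achievable parameter for the measurement-dependent Z-channel is unique'' and hence $V_\varepsilon=V(\zeta,q^*)$ for all $\varepsilon$. Your write-up is in fact more thorough than the paper's, since you explicitly check the continuity hypothesis \eqref{assump:continuouschannel} (including the handling of the zero entry $P_{Y|X}^\calA(1|0)=0$ by restricting to the common support) and sketch an argument for the uniqueness of $q^*$, both of which the paper leaves to the reader.
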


When one considers a measurement-independent Z-channel with parameter $\zeta$, it can be easily shown that the achievable resolution $\delta^*_{\rm{mi}}(n,d,\varepsilon)$ of optimal non-adaptive query procedures satisfies
\begin{align}
-d\log \delta_{\rm{mi}}^*(n,d,\varepsilon)=nC_{\rm{mi}}(\zeta)+\sqrt{nV_{\rm{mi}}(\zeta)}\Phi^{-1}(\varepsilon)+O(\log n),
\end{align}
where $C_{\rm{mi}}(\zeta)$ and $V_{\rm{mi}}(\zeta)$ are the capacity and dispersion of the Z-channel:
\begin{align}
C_{\rm{mi}}(\zeta)&=\sup_{q\in[0,1]}h_\rmb(q(1-\zeta))-qh_\rmb(\zeta),\\
V_{\rm{mi}}(\zeta)\nn&=q^*_{\rm{mi}}(1-q^*_{\rm{mi}})\Big(\log(1-q^*_{\rm{mi}}+\zeta q^*_{\rm{mi}})\Big)^2+\zeta q^*_{\rm{mi}}(1-\zeta q^*_{\rm{mi}})\Big(\log\frac{\zeta}{1-q^*_{\rm{mi}}+\zeta q^*_{\rm{mi}}}\Big)^2\\
&\qquad+(q^*_{\rm{mi}}-\zeta q^*_{\rm{mi}})(1-q^*_{\rm{mi}}+\zeta q^*_{\rm{mi}})\Big(\log\frac{1-\zeta}{q-\zeta q^*_{\rm{mi}}}\Big),
\end{align}
with $q^*_{\rm{mi}}\in[0,1]$ being the unique optimizer of $C_{\rm{mi}}(\zeta)$.

\section{Upper Bound on Resolution of Adaptive Querying}
\label{sec:ach}
In this section, we present a second-order asymptotic upper bound on the achievable resolution of adaptive query procedures and use this bound to discuss the benefit of adaptivity. We remark that the adaptive query procedure used in our proof is a special case of general adaptive querying in the sense that our query sets are designed in a non-adaptive manner and the stopping time varies as a function of noisy responses. Such a design is based on the achievable coding scheme for the variable length feedback code in \cite{polyanskiy2011feedback} for channel coding with feedback.

Recall the definition of the capacity $C$ of measurement-dependent channels in \eqref{def:capacity}.
\begin{theorem}
\label{second:fbl:adaptive}
For any $(l,d,\varepsilon)\in\bbR_+\times\bbN\times[0,1)$,
\begin{align}
-\log\delta^*_\rma(l,d,\varepsilon)\geq \frac{lC}{d(1-\varepsilon)}+O(\log l)\label{md:adaptive}.
\end{align}
\end{theorem}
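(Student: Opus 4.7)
The plan is to construct an adaptive query procedure by porting the variable length feedback with termination (VLFT) code of Polyanskiy, Poor and Verd\'u~\cite{polyanskiy2011feedback} into the noisy 20 questions framework. The well-known $1/(1-\varepsilon)$ gain in the achievable rate of VLFT codes over fixed-length codes is precisely what produces the factor $1/(1-\varepsilon)$ in~\eqref{md:adaptive} relative to the non-adaptive bound of Theorem~\ref{result:second}.

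First, I would tile $[0,1]^d$ into $M^d$ equal sub-cubes $\{\calS_\bw\}_{\bw\in[M]^d}$ of side $1/M$, so that the random target $\bS$ determines an implicit message $W\in[M]^d$ whose distribution is induced by $f_{\bS}$. Draw an i.i.d.\ $\mathrm{Bern}(p^*)$ codebook $\{X_t(\bw):t\geq 1,~\bw\in[M]^d\}$, where $p^*\in\calP_{\rm{ca}}$ is any capacity-achieving input parameter. At step $t$, the procedure poses the query $\calA_t:=\bigcup_{\bw:X_t(\bw)=1}\calS_\bw$, whose size concentrates around $p^*$. For each candidate $\bw$, define the hitting time $\tau_\bw:=\inf\{t\geq 1:\imath_{p^*}(X^t(\bw);Y^t)\geq\gamma\}$ and let $\tau:=\min_\bw\tau_\bw$ (truncated by a deterministic cap to guarantee finiteness) be the stopping time. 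Declare $\hat{W}=\argmin_\bw\tau_\bw$ and output $\hat{\bS}$ as the center of $\calS_{\hat{W}}$, which achieves resolution at most $1/(2M)$ provided $\hat{W}=W$.

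Next I would invoke the VLFT achievability result in~\cite[Theorem~2]{polyanskiy2011feedback}, which for a discrete memoryless channel with capacity $C$ guarantees the existence of a code with $M^d$ messages, mean stopping time $l$, and error probability $\varepsilon$ whenever
\begin{equation*}
\log M^d\leq \frac{lC}{1-\varepsilon}-O(\log l).
\end{equation*}
Rearranging immediately gives $\log M\geq lC/(d(1-\varepsilon))-O(\log l)$, and therefore $-\log\delta^*_\rma(l,d,\varepsilon)\geq \log(2M)\geq lC/(d(1-\varepsilon))+O(\log l)$. Since the excess-resolution event $\{\exists~i:~|\hatS_i-S_i|>1/(2M)\}$ is contained in $\{\hat{W}\neq W\}$, the constraint~\eqref{def:excessresolution} in Definition~\ref{def:adaptive:procedure} is satisfied. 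Moreover, by the symmetry of random coding the conditional error probability and conditional mean stopping time given $W=\bw$ are independent of $\bw$; hence the supremum over the prior $f_{\bS}$ in~\eqref{def:excessresolution} is addressed uniformly.

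The chief obstacle is that the VLFT analysis assumes a fixed discrete memoryless channel, whereas the effective kernel at time $t$ is $P_{Y|X}^{|\calA_t|}$, with query size $|\calA_t|=M^{-d}\sum_\bw X_t(\bw)$ fluctuating around $p^*$. I would handle this exactly as in the proof of Theorem~\ref{ach:fbl}: condition on the high-probability event $\calE_\eta:=\{\max_{t\leq T}\big|\,|\calA_t|-p^*\,\big|\leq\eta\}$ (with $T$ a large deterministic truncation of $\tau$), whose complement has probability at most $4T\exp(-2M^d\eta^2)$ by Hoeffding's inequality, and then use the continuity hypothesis~\eqref{assump:continuouschannel} to replace $P_{Y|X}^{|\calA_t|}$ by $P_{Y|X}^{p^*}$ at a cumulative log-likelihood cost of at most $\tau c(p^*)\eta$. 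Choosing $\eta=\Theta(M^{-d/2}\sqrt{\log l})$ makes both perturbations decay faster than any polynomial in $l$, so the deviation from the nominal VLFT behaviour is absorbed into the $O(\log l)$ remainder in~\eqref{md:adaptive}.
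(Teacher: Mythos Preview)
Your proposal is essentially correct and follows the same route as the paper: a Polyanskiy--Poor--Verd\'u variable-length stop-feedback scheme built on top of the random-coding query design of Algorithm~\ref{procedure:nonadapt}, with the measurement-dependent channel handled by the typical-set conditioning and change-of-measure argument from the proof of Theorem~\ref{ach:fbl}. The paper formalises the non-asymptotic step as a separate Theorem~\ref{fbl:ach:adaptive} and then makes the $\varepsilon$-dropout explicit (with probability $\varepsilon$ pose no queries, otherwise run the $(l',d,\delta,1/l')$ scheme and set $l=(1-\varepsilon)l'$), whereas you absorb that step into the citation of \cite[Theorem~2]{polyanskiy2011feedback}; the underlying argument is the same.
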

The proof of Theorem \ref{second:fbl:adaptive} is in Appendix \ref{proof:second:fbl:adaptive}. 

We make several remarks. A converse bound is necessary to establish the optimality of any adaptive query algorithm under a measurement-dependent channel. However, a converse is elusive, since as pointed out in \cite{kaspi2018searching}, under the measurement-dependent channel, each noisy response $Y_i$ depends not only on the target vector $\bS$, but also the previous queries $\calA^{i-1}$ and noisy responses $Y^{i-1}$. This strong dependency makes it difficult to directly relate the current problem to channel coding with feedback~\cite{horstein1963sequential}. Indeed, under such a setting, the corresponding classical coding analogy is channel coding with feedback and with state where the state has memory. New ideas and techniques are likely required to establish a converse proof for this setting.

Secondly, by comparing Theorem \ref{result:second} to Theorem \ref{second:fbl:adaptive}, we can analyze the benefit of adaptivity for the noisy 20 questions problem with measurement-dependent noise. For any $(n,d,\varepsilon)\in\bbN^2\times[0,1)$, define the benefit of adaptivity, called adaptivity gain, as
\begin{align}
\rmG(n,d,\varepsilon):=\log \delta^*(n,d,\varepsilon)-\log\delta^*_\rma(n,d,\varepsilon).
\end{align}
Using Theorems \ref{result:second} and \ref{second:fbl:adaptive}, we have
\begin{align}
\rmG(n,d,\varepsilon)
\geq\frac{1}{d}\bigg(\frac{nC\varepsilon}{1-\varepsilon}-\sqrt{nV_\varepsilon}\Phi^{-1}(\varepsilon)\bigg)+O(\log n)=:\underline{\rmG}(n,d,\varepsilon).
\end{align}
Note that $\Phi^{-1}(\varepsilon)<0$. To illustrate the adaptivity again, Figure \ref{gain_adaptivity}, we plot $\underline{\rmG}(n,d,\varepsilon)$ for $d=2$, $\varepsilon=0.001$ and three types of measurement-dependent channels with various parameters. Note that adaptive query procedures gain over non-adaptive query procedures since for the former, one can make different number of queries with respect to different realization of the target variable.
\begin{figure}[tb]
\centering
\begin{tabular}{ccc}
\hspace{-.25in} \includegraphics[width=.33\columnwidth]{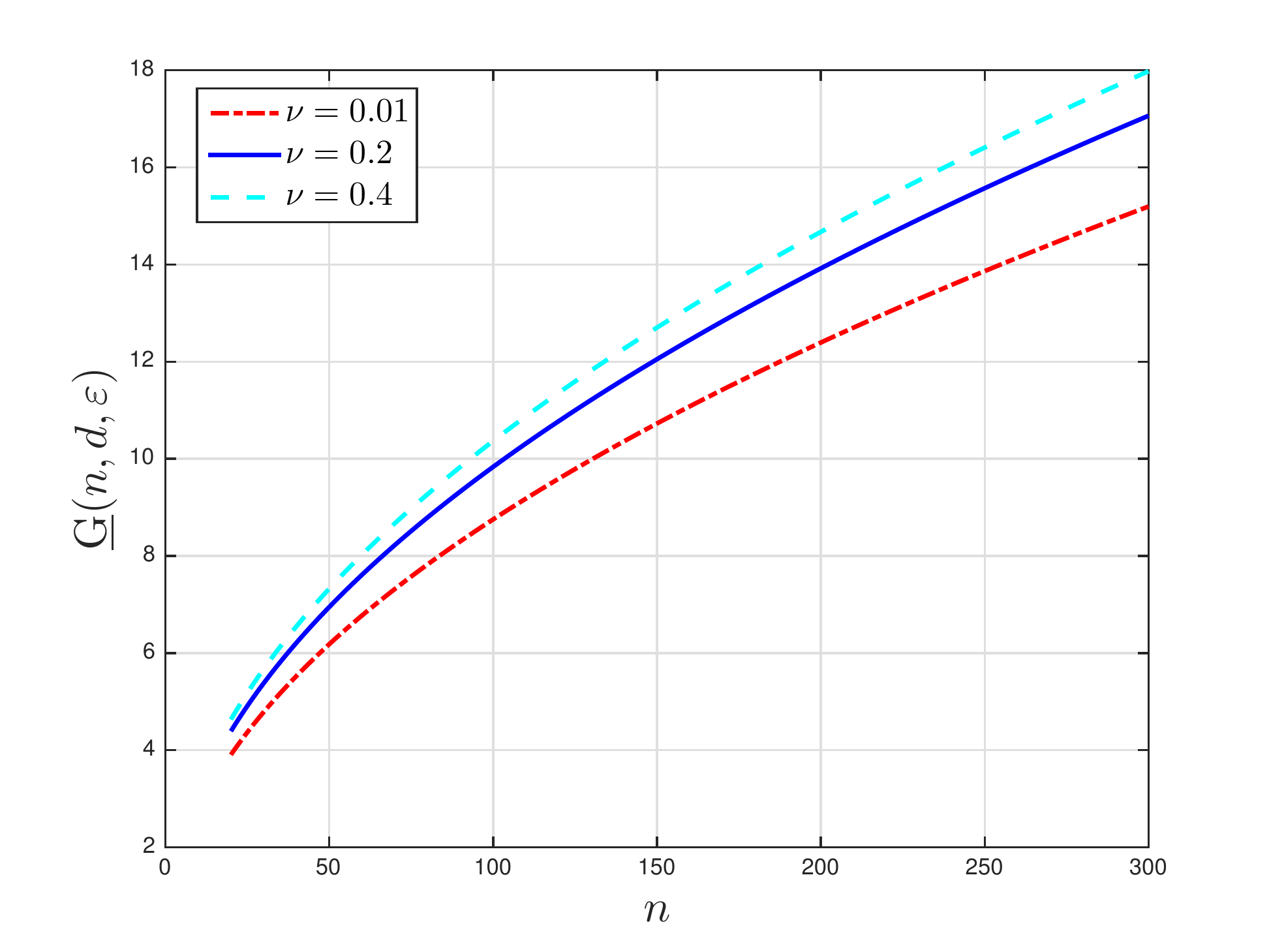}& \hspace{-.4in} \includegraphics[width=.33\columnwidth]{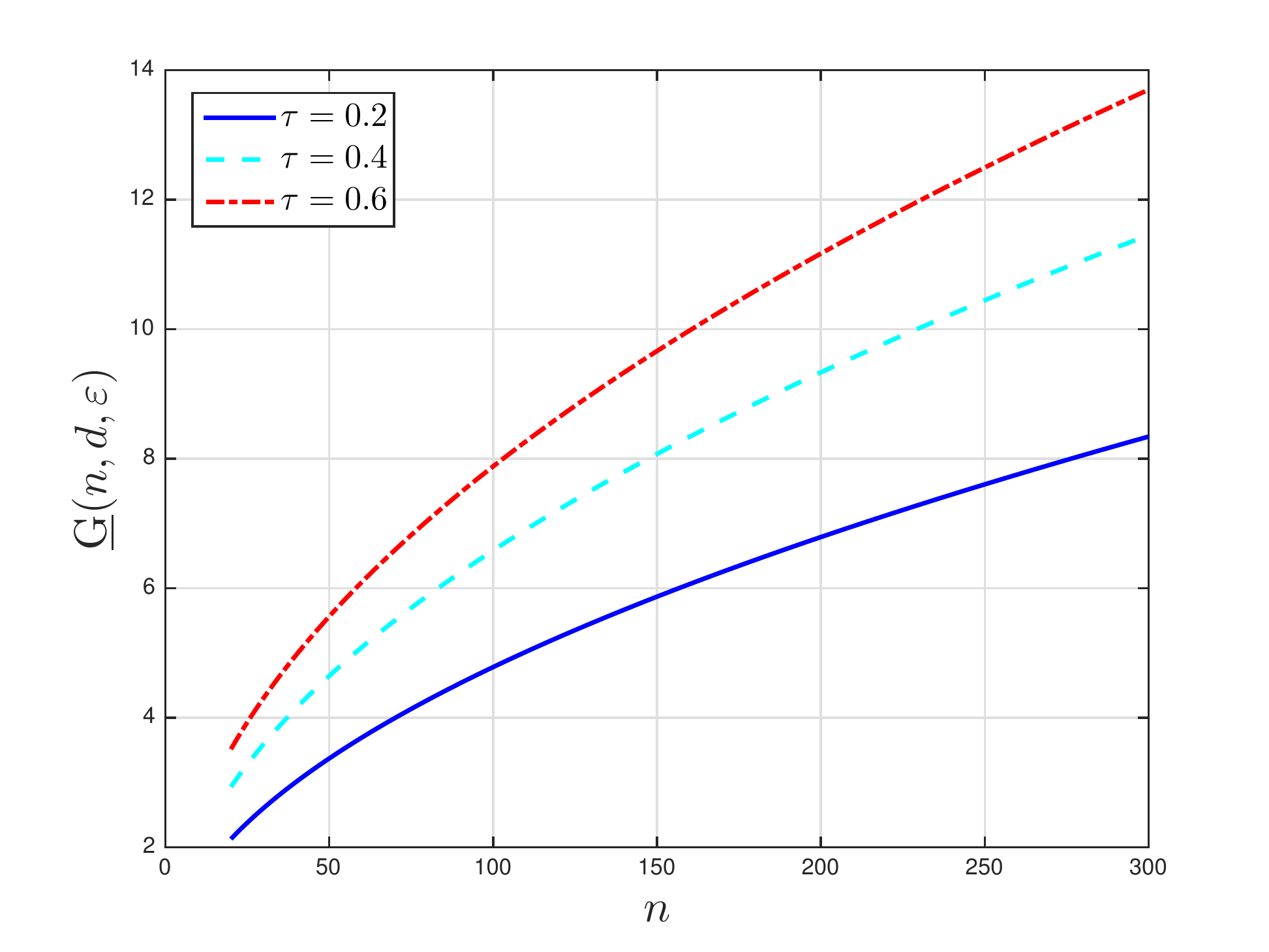}&\hspace{-.4in} \includegraphics[width=.33\columnwidth]{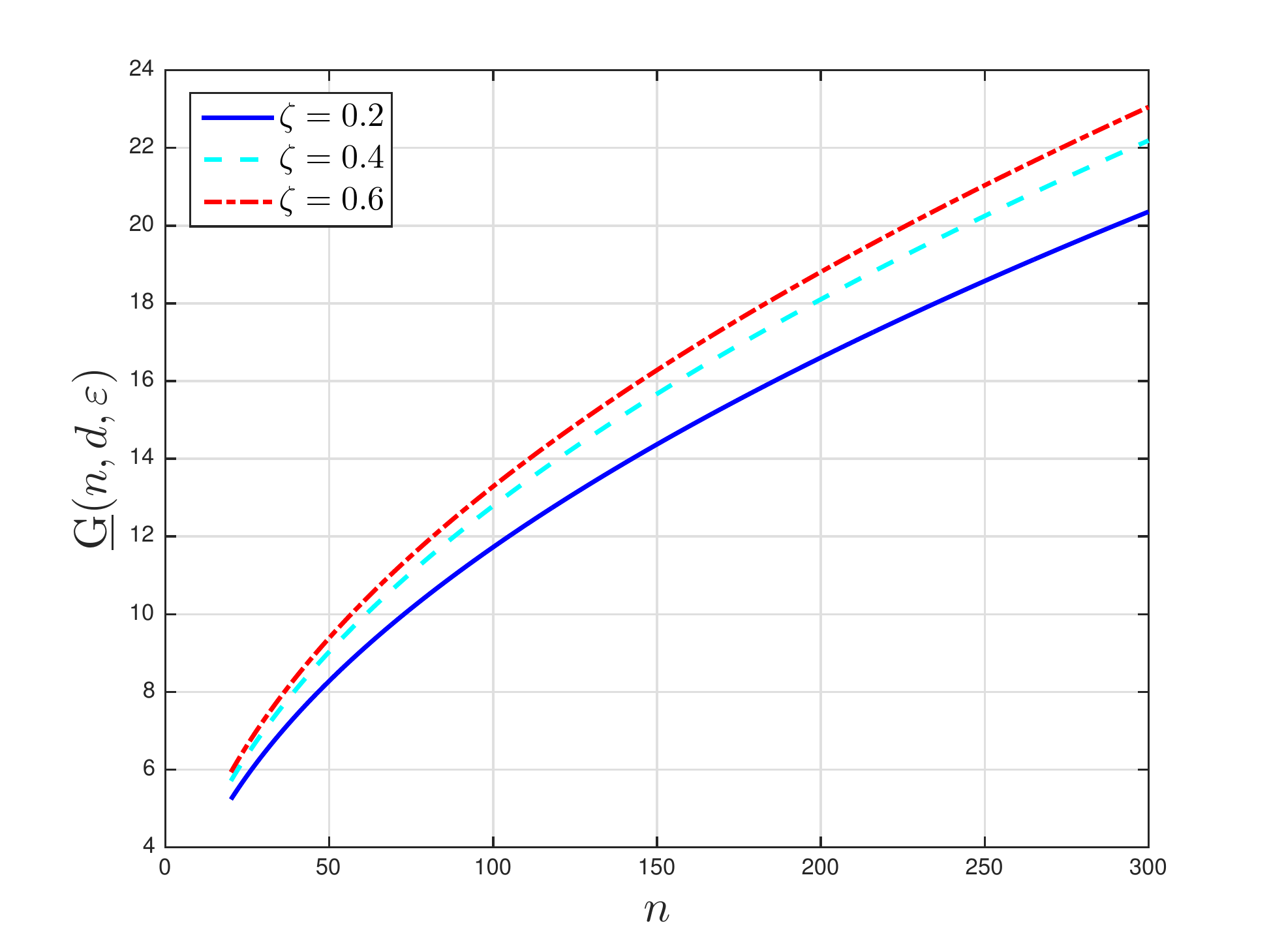}\\
\hspace{-.25in} {(a) measurement-dependent BSC} & \hspace{-.4in}  { (b) measurement-dependent BEC}& \hspace{-.4in}  { (c) measurement-dependent Z-channel}\\
\end{tabular}
\caption{Lower bound $\underline{\rmG}(n,d,\varepsilon)$ on the benefit of adaptivity  where $\underline{\rmG}(n,d,\varepsilon)=\frac{1}{d}\Big(\frac{nC\varepsilon}{1-\varepsilon}-\sqrt{nV_\varepsilon}\Phi^{-1}(\varepsilon)\Big)+O(\log n)$. The $O(\log n)$ term is not included in the plots. We consider the case of $d=2$ and $\varepsilon=0.001$.}
\label{gain_adaptivity}
\end{figure}

Finally, using the techniques in \cite{polyanskiy2011feedback} and the relationship between adaptive querying in 20 questions and channel coding with feedback, we have that the achievable resolution $\delta^*_{\rma,\rm{mi}}(l,d,\varepsilon)$ of optimal adaptive query procedures for for measurement-independent channels satisfies
\begin{align}
-\log\delta^*_{\rma,\rm{mi}}(l,d,\varepsilon)=\frac{lC_{\rm{mi}}}{d(1-\varepsilon)}+O(\log l),\label{mi:adaptive}
\end{align}
where $C_{\rm{mi}}$ is the capacity of the measurement-independent channel. 

\section{Numerical Illustrations}
\label{sec:numerical}

In this subsection, we present numerical simulations to illustrate Theorem \ref{result:second} on non-adaptive searching for a multidimensional target. We consider the case where the target variable $\bS=(S_1,\ldots,S_d)$ is uniformly distributed over the unit cube of dimension $d$. We consider a measurement-dependent BSC with parameter $\nu=0.4$ and set the target excess-resolution probability to be $\varepsilon=0.1$ in call cases.

\begin{figure}[tb]
\centering
\includegraphics[width=.5\columnwidth]{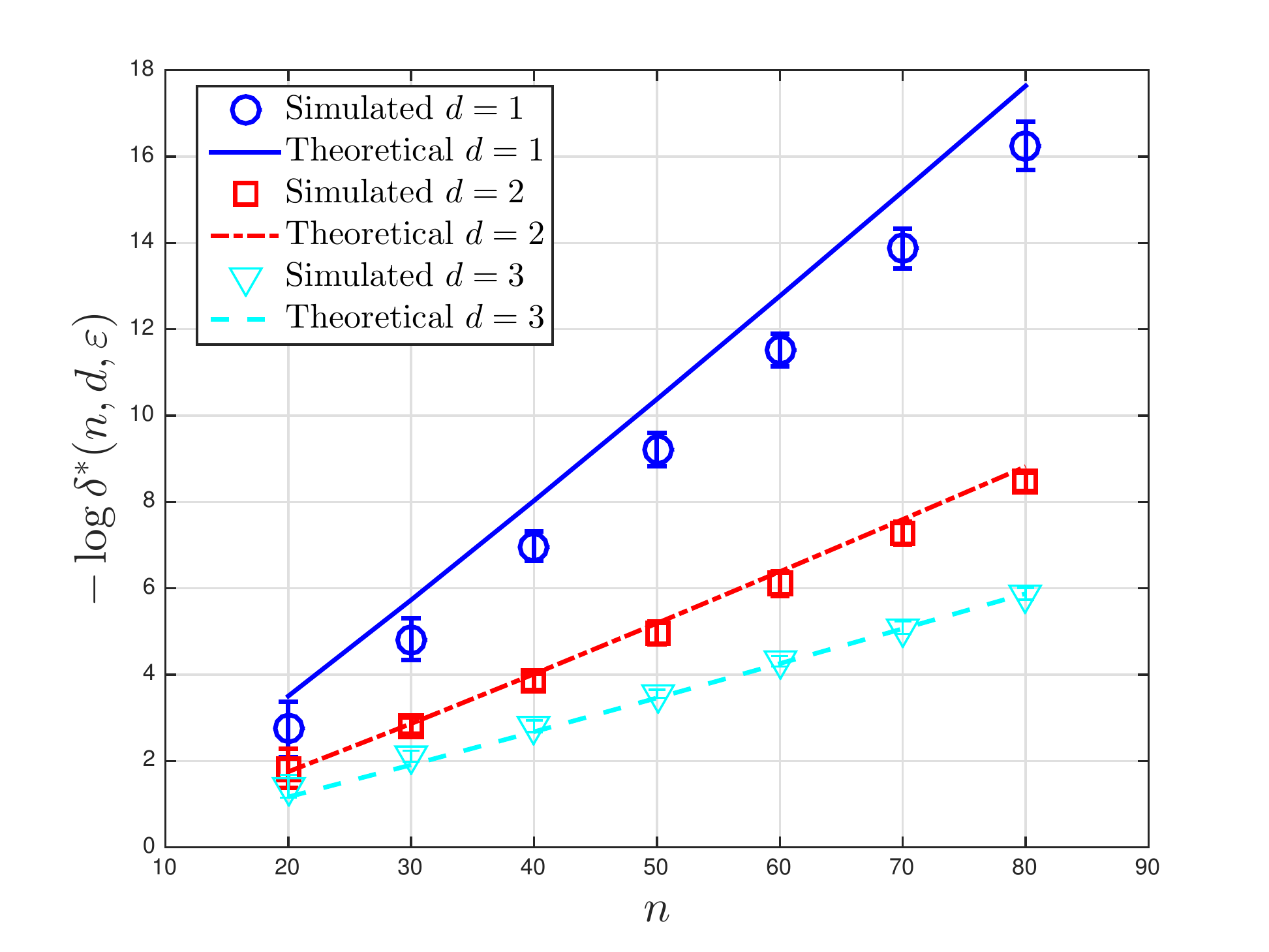}
\caption{Minimal achievable resolution of non-adaptive query procedures for estimating a uniformly distributed target variable $\bS=(S_1,\ldots,S_d)$ in the unit cube of dimension $d$. The theoretical results correspond to the second-order asymptotic result in Theorem \ref{result:second} and the simulate results correspond to the Monte Carlo simulation of the non-adaptive query procedure in Algorithm \ref{procedure:nonadapt}. The error bar for simulated results denotes thirty standard deviations below and above the mean.}
\label{sim_non_adap_ddim}
\end{figure}

In Figure \ref{sim_non_adap_ddim}, the simulated achievable resolution for the non-adaptive query procedure in Algorithm \ref{procedure:nonadapt} is plotted and compared to the theoretical predictions in Theorem \ref{result:second} for several values of the dimension $d$. Given $d\in\bbN$, for each $n\in\{20,30,\ldots,80\}$, the target resolution in the numerical simulation is chosen to be the reciprocal of $M$ such that
\begin{align}
\log M=\frac{1}{d}\left(nC(\nu)+\sqrt{nV(\nu)}\Phi^{-1}(\varepsilon)\right).
\end{align}
For each number of queries $n\in\{20,30,\ldots,80\}$, the non-adaptive query procedure in Algorithm \ref{procedure:nonadapt} is run independently $10^4$ times and the achievable resolution is calculated. From Figure \ref{sim_non_adap_ddim}, we observe that 
our theoretical result in Theorem \ref{result:second} provides a good approximation to the non-asymptotic performance of the query procedure in Algorithm \ref{procedure:nonadapt}.

\begin{figure}[tb]
\centering
\includegraphics[width=.5\columnwidth]{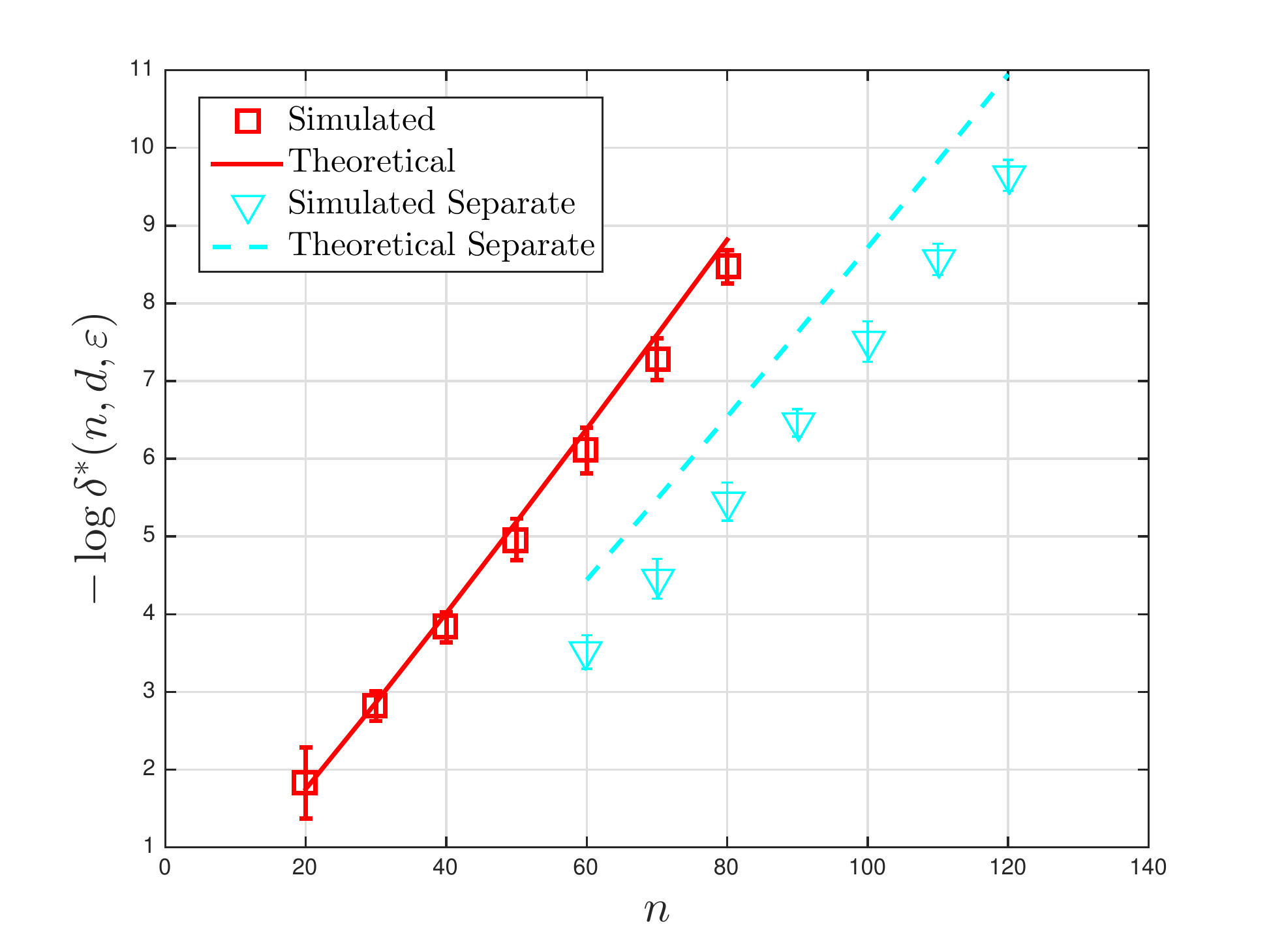}
\caption{
Minimal achievable resolution of non-adaptive query procedures of searching for a uniformly distributed target variable $\bS=(S_1,S_2)$ over the unit cube of dimension $d=2$. The red line corresponds to the second-order asymptotic result in Theorem \ref{result:second} and the red square denotes the Monte Carlo simulation of the non-adaptive query procedure in Algorithm \ref{procedure:nonadapt}. The cyan dashed line and triangle correspond to the second-order asymptotic result in \eqref{performance:seperate} and the Monte Carlo simulation of Algorithm \ref{procedure:nonadapt} for separate searching over each dimension of $\bS$ respectively. The error bar for the simulated results denotes thirty standard deviations above and below the mean.
\label{sim_non_adap_sep}
}
\end{figure}
In Figure \ref{sim_non_adap_sep}, for a $2$-dimensional target variable $\bS=(S_1,S_2)$, the simulated achievable resolutions is plotted for Algorithm \ref{procedure:nonadapt} and
a decoupled dimension-by-dimension search. Also shown are the theoretical predictions in Theorem \ref{result:second} and \eqref{performance:seperate} respectively. The gap between theoretical and simulated results arises since we have not accounted for the third-order term, which scales as $O(\log n)$. From Figure \ref{sim_non_adap_sep}, it can be observed that separate searching over each dimension is strictly suboptimal.

\section{Conclusion}
We derived the minimal achievable resolution of non-adaptive query procedures for the noisy 20 questions problem where the channel from the oracle to the player is a measurement-dependent discrete channel. Furthermore, we generalized our results to derive bounds on the achievable resolution of adaptive query procedures and discussed the intrinsic resolution benefit due to adaptivity.

There are several avenues for future research. Firstly, for adaptive query procedures, we derived achievability results on searching for a single multidimensional target over the unit cube. It would be fruitful to apply novel techniques to derive a matching converse bound on the minimal achievable resolution of optimal adaptive query procedures. It is also of interest to comprehensively compare the bound in Theorem \ref{second:fbl:adaptive} to the performance of  state-of-the-art adaptive query procedures, such as the sorted posterior matching algorithm~\cite{chiu2016sequential}. Secondly, we considered discrete channel (finite output alphabet). It would be interesting to extend our results to continuous channels such as a measurement-dependent additive white Gaussian noise channel~\cite{lalitha2018improved}. Thirdly, in this paper, we were interested in fundamental limits of optimal query procedures. One can propose low-complexity practical query procedures and compare the performances of their proposed query procedures to our derived benchmarks. Finally, we considered a single stationary target in the paper. In future work, it would be worthwhile to apply our second-order analysis to searching for a moving target with unknown velocity~\cite[Theorem 3]{kaspi2018searching} and to simultaneous searching for multiple targets~\cite{6970834}.

\appendix
\subsection{Proof of the Non-Asymptotic Achievability Bound (Theorem \ref{ach:fbl})}
\label{proof:ach}
In this subsection, we analyze the performance of the non-adaptive query procedure in Algorithm \ref{procedure:nonadapt} using ideas from channel coding~\cite{shannon1948mathematical}. To begin with, we first briefly recall the query procedure in Algorithm \ref{procedure:nonadapt}.

Fix any $M\in\bbN$, we partition the unit cube of dimension $d$ into $M^d$ equal-sized disjoint cubes $\{\calS_{i_1,\ldots,i_d}\}_{(i_1,\ldots,i_d)\in[M]^d}$. Let $\bx=\{x^n(i_1,\ldots,i_d)\}_{(i_1,\ldots,i_d)\in[M]^d}$ be a sequence of $M^d$ binary codewords. For each $t\in[n]$, the $t$-th query is designed as
\begin{align}
\calA_t
&:=\bigcup_{(i_1,\ldots,i_d)\in[M]^d:x_t(i_1,\ldots,i_d)=1}\calS_{i_1,\ldots,i_d}\label{def:query:ddim},
\end{align}
where $x_t(i_1,\ldots,i_d)$ denotes the $t$-th element of the codeword $x^n(i_1,\ldots,i_d)$. By the above query design, our $t$-th query to the oracle is whether the target $\bs=(s_1,\ldots,s_d)$ lies in the union of cubes with indices of the codewords whose $t$-th element are one. Hence, for each $t\in[n]$, the $t$-th element of each codeword can be understood as an indicator function for whether a particular cube would be queried in $i$-th question, with one being positive and zero being negative.

For subsequent analysis, given any $s\in[0,1]$, define the following quantization function
\begin{align}
\rmq(s):=\lceil sM\rceil\label{def:qs},
\end{align} 
Given any d-dimensional target variable $\bs$, we use $\bw=(w_1,\ldots,w_d)$ to denote the vector $(\rmq(s_1),\ldots,\rmq(s_d))$, i.e., $w_i=\rmq(s_i)$. Given $\bs$, the noiseless answer of the oracle to the query $\calA_t$ is
\begin{align}
Z_t
&=\bbo(\bs\in\calA_t)=\bbo\bigg(\bs\in\bigcup_{(i_1,\ldots,i_d)\in[M]^d:x_t(i_1,\ldots,i_d)=1}\calS_{i_1,\ldots,i_d}\bigg)\\
&=\bbo(x_t(\bw)=1)=x_t(\bw).
\end{align}
Then the noisy response $Y_t$ is obtained by passing $x_t(\rmq(\bs))$ over the measurement-dependent channel.

Given noisy responses $Y^n=(Y_1,\ldots,Y_n)$, the decoder produces estimates $\hatS=(\hatS_1,\ldots,\hatS_d)$ using the following two-step decoding:
\begin{enumerate}
\item the player first estimates $\bw$ as $\hat{\bW}=(\hatW_1,\ldots,\hatW_d)$ using a maximal mutual information decoder, i.e.,
\begin{align}
\hat{\bW}=(\hatW_1,\ldots,\hatW_d)=\max_{(\tili_1,\ldots,\tili_d)\in[M]^d}\imath_p(x^n(\tili_1,\ldots,\tili_d);Y^n);
\end{align}
\item the player then produces estimates $\hatS=(\hatS_1,\ldots,\hatS_d)$ as follows:
\begin{align}
\hatS_j=\frac{2\hatW_j-1}{2M}
\end{align}
for all $j\in[d]$.
\end{enumerate}

It is easy to verify that using the above query procedure, the estimate $\hatS_i$ is within $\frac{1}{M}$ of the target $s_i$ for all $i\in[d]$ if our estimate $\hat{\bW}=\bw$. Thus the excess-resolution probability of the multidimensional estimation is upper bounded by the error probability of channel coding with $M^d$ messages over a measurement-dependent codebook.

For subsequent analysis, we use $\bW=(W_1,\ldots,W_d)$ to denote the quantized vector of a target vector $\bS=(S_1,\ldots,S_d)\in[0,1]^d$, i.e., $W_i=\rmq(S_i)$ for each $i\in[d]$. We use $\bw$ to denote a particular realization. Note that each pdf $f_{\bS}\in\calF([0,1]^d)$ of the target vector $\bS$ induces a pmf $P_{\bW}\in\calP([M]^d)$. Using our query procedure, we have
\begin{align}
\nn&\sup_{f_{\bS}\in\calF([0,1]^d)}\Pr\left\{\exists~i\in[d],~|\hatS_i-S|>\frac{1}{M}\right\}\\*
&\leq \sup_{f_{\bS}\in\calF([0,1]^d)}\Pr\{\hat{\bW}\neq \bW\}\\
&\leq \sup_{P_{\bW}\in\calP([M]^d)}\Pr\{\hat{\bW}\neq \bW\}\\
&\leq \sup_{P_{\bW}\in\calP([M]^d)}\sum_{\bw}P_{\bW}(\bw)\Pr\{\exists~\bar{\bw}\in[M]^d:~\bar{\bw}\neq \bw,~\imath_p(x^n(\bar{\bw});Y^n)\geq \imath_p(x^n(\bw);Y^n)\}\label{specifydist}\\
&=:\sup_{P_{\bW}\in\calP([M]^d)}\sum_{\bw}P_{\bW}(\bw)\rmP_\rme(\bx,P_{\bW})\label{uppexcessp},
\end{align}
where the probability in \eqref{specifydist} is calculated with respect to the measurement-dependent channel 
\begin{align}
P_{Y^n|X^n}^{\calA^n}(y^n|x^n(\bw))
&=\prod_{t\in[n]}P_{Y|X}^{\calA_t}(y_t|x_t(\bw))\\
&=\prod_{t\in[n]}P_{Y|X}^{q_{t,d}^M(\bx)}(y_t|x_t(\bw))\label{useai},
\end{align}
and in \eqref{useai}, we define
\begin{align}
q_{t,d}^M(\bx):=\frac{1}{M^d}\sum_{\bw\in[M]^d}x_t(\bw).
\end{align}

Note that $\rmP_\rme(\bx,P_{\bW})$ is essentially the error probability of transmitting a message $\bW\in[M]^d$ with pmf $P_{\bW}$ over the measurement-dependent channel $P_{Y^n|X^n}^{\calA^n}$. Thus, to further bound $\rmP_\rme(\bx,P_{\bW})$, we need to analyze the error probability of a channel coding problem over a \emph{codebook dependent channel} where the channel output $Y^n$ depends on all codewords $\{x^n(i_1,\ldots,i_d)\}_{(i_1,\ldots,i_d)\in[M]^d}$. In contrast, in the classical channel coding problem, the channel output depends only on the channel input with respect to the message. However, as we shall see, using the change-of-measure technique and the assumption in \eqref{assump:continuouschannel}, with negligible loss in error probability, we can replace the measurement-dependent channel with a memoryless channel $(P_{Y|X}^p)^n$.

For this purpose, we use random coding ideas~\cite{gallager_ensemble}. Fix a Bernoulli distribution $P_X\in\calP(\{0,1\})$ with parameter $p$, i.e., $P_X(1)=p$. Let $\bX:=\{X^n(i_1,\ldots,i_d)\}_{(i_1,\ldots,i_d)\in[M]^d}$ be $M^d$ independent binary sequences, each generated i.i.d. from $P_X$. Furthermore, for any $(M,d,p,\eta)\in\bbN^2\times(0,1)\times\bbR_+$, define the following typical set of binary codewords $\bx$:
\begin{align}
\calT^n(M,d,p,\eta)
&:=\bigg\{\bx=\{x^n(i_1,\ldots,i_d)\}_{(i_1,\ldots,i_d)\in[M]^d}\in\calX^{Mdn}:\left|q_{t,d}^M(\bx)-p\right|\leq \eta,~\forall~t\in[n]\bigg\}\label{def:typical}.
\end{align}
For any $\bx\in\calT^n(M,d,p,\eta)$, recalling the query design in \eqref{def:query:ddim} and the condition in \eqref{assump:continuouschannel}, we have
\begin{align}
\log \frac{P_{Y^n|X^n}^{\calA^n}(y^n|x^n)}{(P_{Y|X}^p)^n(y^n|x^n)}
&=\sum_{t\in[n]}\log\frac{P_{Y|X}^{q_{t,d}^M(\bx)}(y_i|x_i)}{P_{Y|X}^p(y_i|x_i)}\leq n\eta c(p)\label{fromassumption}.
\end{align}

Note that given any $\bw\in[M]^d$, the joint distribution of $(\bX,Y^n)$ under the current query procedure is
\begin{align}
P_{\bX Y^n}^{\rm{md},\bw}(\bx,y^n)
&=\Big(\prod_{\bar{\bw}\in[M]^d}P_X^n(x^n(\bar{\bw}))\Big)\Big(\prod_{t\in[n]}P_{Y|X}^{q_{t,d}^M(\bx)}(y_t|x_t(\bw))\Big)\label{truedis}.
\end{align}
and furthermore, we need the following alternative joint distribution of $(\bX,Y^n)$ to apply the change-of-measure idea
\begin{align}
P_{\bX Y^n}^{p,\bw}(\bx,y^n)
&=\Big(\prod_{\bar{\bw}\in[M]^d}P_X^n(x^n(\bar{\bw}))\Big)\Big(\prod_{t\in[n]}P_{Y|X}^p(y_t|x_t(\bw))\Big)\label{altdis}.
\end{align}

For any message distribution $P_{\bW}\in\calP([M]^d)$,
\begin{align}
\mathbb{E}_{\bX}[\rmP_\rme(\bX,P_{\bW})]
&\leq \Pr\{\bX\notin\calT^n(M,d,p,\eta)\}+\mathbb{E}_{\bX}[\rmP_\rme(\bX,P_{\bW})\bbo(\bX\in\calT^n(M,d,p,\eta))]\\
&\leq 4n\exp(-2M^d\eta^2)+\mathbb{E}_{\bX}[\rmP_\rme(\bX,P_{\bW})\bbo(\bX\in\calT^n(M,d,p,\eta))]\label{useatypical},
\end{align}
where \eqref{useatypical} follows from \cite[Lemma 22]{tan2014state}, which provides an upper bound on the probability of the atypicality of i.i.d. random variables and implies that
\begin{align}
\Pr\{\bX\notin\calT^n(M,d,p,\eta)\}\leq 4n\exp(-2M^d\eta^2).
\end{align}

The second term in \eqref{useatypical} can be further upper bounded as follows:
\begin{align}
\nn&\mathbb{E}_{\bX}[\rmP_\rme(\bX,P_{\bW})\bbo(\bX\in\calT^n(M,d,p,\eta))]\\*
&=\sum_{\bw}P_{\bW}(\bw)\mathbb{E}_{P_{\bX Y^n}^{\rm{md},\bw}}[\bbo(\bX\in\calT^n(M,d,p,\eta))\bbo(\exists~\bar{\bw}\in[M]^d:~\bar{\bw}\neq\bw,~\imath_p(X^n(\bar{\bw});Y^n)\geq \imath_p(X^n(\bw);Y^n))]\\
&\leq \exp(n\eta c(p))\sum_{\bw}P_{\bW}(\bw)\Pr_{P_{\bX Y^n}^{p,\bw}}\{\exists~\bar{\bw}\in[M]^d:~\bar{\bw}\neq \bw,~\imath_p(X^n(\bar{\bw});Y^n)\geq \imath_p(X^n(\bw);Y^n)\}\label{cofmeasure}\\
&\leq \exp(n\eta c(p))\sum_{\bw}P_{\bW}(\bw)\sum_{\bar{\bw}\in[M]^d:\bar{\bw}\neq \bw}\Pr_{P_{\bX Y^n}^{p,\bw}}\{\imath_p(X^n(\bar{\bw});Y^n)\geq \imath_p(X^n(\bw);Y^n)\}\label{useunbound}\\
&=\exp(n\eta c(p))\mathbb{E}_{P_{X^nY^n}}[\min\{1,M^d\Pr_{P_X^n}\{\imath_p(\barX^n;Y^n)\geq \imath_p(X^n;Y^n)|X^n,Y^n\}]\}\label{rcu},
\end{align}
where \eqref{cofmeasure} follows from \eqref{fromassumption} and the change of measure technique, \eqref{useunbound} follows from the union bound, \eqref{rcu} follows by noting that the codewords $\{X^n(i_1,\ldots,i_d)\}_{(i_1,\ldots,i_d)\in[M]^d}$ are independent under $P_{\bX Y^n}^{\rm{alt}}$, the total number of codewords is no greater than $M^d$ and by applying ideas leading to the random coding union bound~\cite{polyanskiy2010finite}. In \eqref{rcu}, the joint distribution of $(X^n,Y^n)$ is
\begin{align}
P_{X^nY^n}(x^n,y^n)=\prod_{t\in[n]}P_X(x_t)P_{Y|X}^{p}(y_t|x_t).
\end{align}

Combining \eqref{useatypical} and \eqref{rcu}, we conclude that there exists a sequence of binary codewords $\bx$ such that $\rmP_\rme(\bx,P_{\bW})$ is upper bounded by the desired quantity for all message distributions $P_{\bW}\in\calP([M]^d)$ and thus the proof of Theorem \ref{ach:fbl} is completed.

\subsection{Proof of the Non-Asymptotic Converse Bound (Theorem \ref{fbl:converse})}
\label{proof:converse}

\subsubsection{Converse Proof}
Consider any sequence of non-adaptive queries $\calA^n\subseteq([0,1]^d)^n$ and any decoding function $g:\calY^n\to [0,1]^d$ such that the worst case excess-resolution probability with respect to a resolution $\delta$ is upper bounded by $\varepsilon$, i.e.,
\begin{align}
\sup_{f_{\bS}\in\calF([0,1]^d)}\Pr\big\{\exists~i\in[d]:~|\hatS_i-S_i|>\delta\big\}\leq\varepsilon\label{error4converse}.
\end{align}
As a result, for uniformly distributed target vector $\bS=(S_1,\ldots,S_d)$, the excess-resolution probability with respect to $\delta$ is also upper bounded by $\varepsilon$. In the rest of the proof, we consider a \emph{uniformly} distributed $d$-dimensional target $\bS$.

Let $\beta$ be any real number such that $\beta\leq\frac{1-\varepsilon}{2}\leq 0.5$ and let $\tilM:=\lfloor\frac{\beta}{\delta}\rfloor$. Define the following quantization function
\begin{align}
\rmq_\beta(s):=\lceil s\tilM\rceil,~\forall~s\in\calS\label{def:qbeta}.
\end{align}

Given any queries $\calA^n\in([0,1]^d)^n$, the noiseless responses from the oracle are $X^n=(X_1,\ldots,X_n)$ where for each $t\in[n]$, $X_t=\bbo(\bS\in\calA_t)$ is a Bernoulli random variable with parameter being the volume of $\calA_t$, which this follows from the definition of the measurement-dependent channel and the fact that the target variable $\bS$ is uniformly distributed. The noisy responses $Y^n$ is the output of passing $X^n$ over the measurement-dependent channel $P_{Y^n|X^n}^{\calA^n}$. Finally, an estimate $\hat{\bS}=(\hatS_1,\ldots,\hatS_d)$ is produced using the decoding function $g$.

For simplicity, let $\bW:=(W_1,\ldots,W_d)=(\rmq_\beta(S_1),\ldots,\rmq_\beta(S_d))$ and let $\hat{\bW}:=(\rmq_\beta(\hatS_1),\ldots,\rmq_\beta(\hatS_d))$. Similarly to \cite{kaspi2018searching}, we have that
\begin{align}
\Pr\{\hat{\bW}\neq \bW\}
&=\Pr\{\hat{\bW}\neq \bW,~\exists~i\in[d]:~|\hatS_i-S_i|>\delta\}+\Pr\{\hat{\bW}\neq \bW,~\forall~i\in[d]:~|\hatS_i-S_i|\leq\delta\}\\
&\leq \Pr\{\exists~i\in[d]:~|\hatS_i-S_i|>\delta\}+\Pr\{\hat{\bW}\neq \bW,~\forall~i\in[d]:~|\hatS_i-S_i|\leq\delta\}\\
&\leq \varepsilon+\Pr\{\hat{\bW}\neq \bW,~\forall~i\in[d]:~|\hatS_i-S_i|\leq\delta\}\label{useerror4converse}\\
&\leq \varepsilon+\Pr\{\exists~i\in[d]:~\hatW_i\neq W_i~\mathrm{and}~|\hatS_i-S_i|\leq\delta\}\\
&\leq \varepsilon+\sum_{i\in[d]}\Pr\{\hatW_i\neq W_i~\mathrm{and}~|\hatS_i-S_i|\leq\delta\}\\
&\leq \varepsilon+2d\delta\tilM\label{boundaryerror}\\
&\leq\varepsilon+2d\beta\label{use:tilM},
\end{align}
where \eqref{useerror4converse} follows from \eqref{error4converse}, \eqref{boundaryerror} follows since i) only when $S_i$ is within $\delta$ to the boundaries (left and right) of the sub-interval with indices $W_i=\rmq_{\beta}(S_i)$ can the events $\hatW_i\neq W_i$ and $|\hatS_i-S_i|\leq \delta$ occur simultaneously, ii) $S_i$ is uniformly distributed over $\calS$ and thus iii) the probability of the event $\{\hatW_i\neq W_i,~|\hatS_i-S_i|\leq \delta\}$ is upper bounded by $2\delta\tilM$, and \eqref{use:tilM} follows from the definition of $\tilM$. To ease understanding of the critical step \eqref{boundaryerror}, we have provided a figure illustration in Figure \ref{figureillus4converse}.

\begin{figure}[tb]
\centering
\setlength{\unitlength}{0.5cm}
\scalebox{1}{
\begin{picture}(20,3)
\linethickness{1pt}
\put(0,0.5){\makebox(0,0){...}}
\put(1,0){\line(1,0){18}}
\put(20,0.5){\makebox(0,0){...}}
\put(1,0){\line(0,1){1}}
\put(7,0){\line(0,1){1}}
\put(13,0){\line(0,1){1}}
\put(19,0){\line(0,1){1}}
\put(8,0){\line(0,1){0.5}}
\put(9,0){\line(0,1){0.5}}
\put(10,0){\line(0,1){0.5}}
\put(11,0){\line(0,1){0.5}}
\put(12,0){\line(0,1){0.5}}
\put(6,0){\line(0,1){0.5}}
\put(14,0){\line(0,1){0.5}}
\put(1.1,1.5){\makebox(0,0){$(k-2)\delta/\beta$}}
\put(7.1,1.5){\makebox(0,0){$(k-1)\delta/\beta$}}
\put(13,1.5){\makebox(0,0){$ k\delta/\beta$}}
\put(19.1,1.5){\makebox(0,0){$(k+1)\delta/\beta$}}
\put(7,0){\transparent{0.3}\color{green}{\rule{\unitlength}{1\unitlength}}}
\put(12,0){\transparent{0.3}\color{blue}{\rule{\unitlength}{1\unitlength}}}
\end{picture}
}
\caption{Figure illustration of \eqref{boundaryerror} for $S_1$. Let $\delta=\frac{1}{600}$ and $\beta=\frac{1}{6}$. Thus, we partition the unit interval $[0,1]$ into $\tilM=100$ sub-intervals each with length $\frac{1}{100}$. In the figure, we plot three consecutive sub-intervals with indices $(k-1,k,k+1)$ for some $k\in[2:\tilM-1]$. Note that the $k$-th interval starts from $\frac{(k-1)\delta}{\beta}$ and end at $\frac{k\delta}{\beta}$ and contains $\frac{1}{\beta}$ small intervals, each of length $\delta$. Suppose $S_1$ lies in $k$-th sub-interval, then only if $S_1$ is with $\delta=\frac{1}{600}$ of the boundaries in $k$-th sub-interval, denoted with shaded color, can we find $\hatS_1$ in adjacent sub-interval such that $|\hatS_1-S_1|\leq \delta$ and $\hatW_1=\rmq_{\beta}(\hatS_1)\neq\rmq_{\beta}(S_1)=W_1$.}
\label{figureillus4converse}
\end{figure}
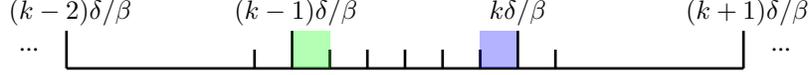

Using \eqref{use:tilM}, we have that the excess-resolution probability of searching for a multidimensional target variable is lower bounded by 
\begin{align}
\varepsilon
&\geq \Pr\{\hat{\bW}\neq \bW\}-2d\beta\label{conversechannel}.
\end{align}


Define the function $\tilde{\Gamma}:[\tilM]^d\to[\tilM^d]$ as
\begin{align}
\tilde{\Gamma}(i_1,\ldots,i_d)
&=1+\sum_{j\in[d]}i_j\tilM^{d-j}.
\end{align}
Using \eqref{conversechannel}, we have
\begin{align}
\varepsilon
&\geq \Pr\big\{\tilde{\Gamma}(\hat{\bW})\neq \tilde{\Gamma}(\bW)\big\}-2d\beta\label{usegammatocon}\\
&=\Pr\{\hatW\neq W\}-2d\beta\label{finalequation}
\end{align}
where \eqref{usegammatocon} follows since $\tilde{\Gamma}(\cdot)$ is a one-to-one mapping from $[\tilM]^d$ to $[\tilM^d]$ and in \eqref{finalequation}, we define $\hatW=\Gamma(\hat{\bW})\in[M^d]$ and define $W=\Gamma(\bW)\in[M^d]$ similarly. Note that from the problem formulation, since $\bS$ is uniformly distributed over $[0,1]^d$, we have that $\bW$ is uniformly distributed over $[\tilM]^d$ and thus $W$ is uniformly distributed over $[M^d]$.

Note that given queries $\calA^n$, the probability $\Pr\{\hatW\neq W\}$ is the average error probability of channel coding with deterministic states when the distribution of the channel inputs is $P_{X^n}^{\calA^n}$ and the message $W$ is uniformly distributed over $[\tilM^d]$. Therefore, we can use converse bounds for channel coding to bound achievable resolution $\delta$ (via $\tilM$).

Similar as \cite[Proposition 4.4]{TanBook} which provides a finite blocklength converse bound for the channel coding problem, we have that for any $\kappa\in(0,1-\varepsilon-2d\beta)$, 
\begin{align}
\log \tilM^d
&\leq\inf_{Q_{Y^n}\in\calP(\calY^n)}
\sup\bigg\{t\Big|\Pr\bigg\{\log\frac{P_{Y^n|X^n}^{\calA^n}(Y^n|X^n)}{Q_{Y^n}(Y^n)}\leq t\bigg\}\leq \varepsilon+2d\beta+\kappa\bigg\}-\log\kappa\label{fbl:converse:final}\\
&\leq\sup\bigg\{t\Big|\Pr\bigg\{\sum_{i\in[n]}\log\frac{P_{Y|X}^{\calA_i}(Y_i|X_i)}{P_Y^{|\calA_i|,|\calA_i|}(Y_i)}\leq t\bigg\}\leq \varepsilon+2d\beta+\kappa\bigg\}-\log\kappa\label{nconverse},
\end{align} 
where \eqref{nconverse} follows by choose $Q_Y^n$ being the marginal distribution of $Y^n$ induced distribution of $P_{X^n}^{\calA^n}$ and the measurement-dependent channel $P_{Y^n|X^n}^{\calA^n}$.  Note that \eqref{fbl:converse:final} is slightly different from \cite[Proposition 4.4]{TanBook}. In fact, we follow the proof of \cite[Proposition 4.4]{TanBook} with $M$ replaced by $\tilM^d$ and $\varepsilon$ replaced by $\varepsilon+2d\beta$ till the left hand side of  
\cite[Eq. (4.18)]{TanBook}. Then, we use the definition of the so called $\varepsilon$-hypothesis testing divergence~\cite[Eq. (2.9)]{TanBook}.

Since \eqref{nconverse} holds for any sequence of queries $\calA^n\in[0,1]^{nd}$ and any decoder $g:\calY^n\to[0,1]^d$ satisfying \eqref{error4converse}, recalling the definition of $\tilM$ and the definition of $\imath_{\calA_i}(\cdot)$, we have
\begin{align}
-d\log\delta\leq -d\log\beta-\log\kappa+\sup_{\calA^n\in[0,1]^{nd}}\sup\bigg\{t\in\bbR_+\Big|\Pr\Big\{\sum_{i\in[n]}\imath_{\calA_i}(X_i;Y_i)\leq t\bigg\}\leq \varepsilon+2d\beta+\kappa\Big\}.
\end{align}

\subsection{Proof of Second-Order Asymptotics (Theorem \ref{result:second})}
\label{proof:second}
\subsubsection{Achievability Proof}
Invoking Theorem \ref{ach:fbl} with the capacity-achieving parameter $q\in\calP_{\rm{ca}}$, we have that for any $\eta\in\bbR_+$, there exists a non-adaptive query procedure with $n$ queries such that
\begin{align}
\rmP_\rme^n\left(\frac{1}{M}\right)
&\leq 4n\exp(-2M^d\eta^2)+\exp(n\eta c(q))\mathbb{E}[\min\{1,M^d\Pr\{\imath_q(\barX^n;Y^n)\geq \imath_q(X^n;Y^n)\}\}]\label{step1}.
\end{align}
We first bound the expectation term in \eqref{step1} as follows:
\begin{align}
\nn&\mathbb{E}[\min\{1,M^d\Pr\{\imath_q(\barX^n;Y^n)\geq \imath_q(X^n;Y^n)\}\}]\\*
&\leq \Pr\left\{M^d\exp(-\imath_q(X^n;Y^n))\geq \frac{1}{\sqrt{n}}\right\}+\frac{1}{\sqrt{n}}\label{usemany}\\
&=\Pr\left\{d\log M-\imath_q(X^n;Y^n)\geq -\log \sqrt{n}\right\}+\frac{1}{\sqrt{n}}\\
&=\Pr\Big\{\sum_{i\in[n]}\imath_{q,q}(X_i;Y_i)\leq d\log M+\log(\sqrt{n})\Big\}+\frac{1}{\sqrt{n}}\label{step2},
\end{align}
where \eqref{usemany} follows from i) the change of measure technique which states that
\begin{align}
\Pr\{\imath_q(\barX^n;y^n)\geq t\}
&=\sum_{\barx^n}P_X^n(\barx^n)\bbo(\imath_q(\barx^n;y^n)\geq t)\leq \sum_{\barx^n} P_{X|Y}^q(\barx^n|y^n)\exp(-t)=\exp(-t),
\end{align}
and ii) the result in \cite[Eq. (37)]{scarlett2017mismatch} saying that $\mathbb{E}[\min\{1,J\}]\leq \Pr\{J>\frac{1}{\sqrt{n}}\}+\frac{1}{\sqrt{n}}$ for any $n\in\bbN$.

Now choose $M$ such that
\begin{align}
d\log M=nC+\sqrt{nV_\varepsilon}\Phi^{-1}(\varepsilon)-\frac{1}{2}\log n,
\end{align}
and let 
\begin{align}
\eta=\sqrt{\frac{d\log M}{2M^d}}=O\left(\frac{\sqrt{n}}{\exp(nC/2)}\right).
\end{align}
Thus, we have
\begin{align}
4n\exp(-2M^d\eta^2)
&=\frac{4n}{M^d}=
4\exp\bigg(-nC-\sqrt{nV_\varepsilon}\Phi^{-1}(\varepsilon)+\frac{3}{2}\log n\bigg)=O(\exp(-nC))\label{result1},
\end{align}
and
\begin{align}
\exp(n\eta c(q))
&=1+n\eta c(q)+o(n\eta c(q))
=1+O\left(\frac{n^{3/2}}{\exp(nC/2)}\right)\label{result2}.
\end{align}

Finally, applying the Berry-Esseen theorem to \eqref{step2}, we have that for any $q\in\calP_{\rm{ca}}$ and any $\varepsilon\in[0,1)$,
\begin{align}
\mathbb{E}[\min\{1,M^d\Pr\{\imath_{q}(\barX^n;Y^n)\geq \imath_{q}(X^n;Y^n)\}\}]
&\leq \varepsilon+O\left(\frac{1}{\sqrt{n}}\right)\label{result3}.
\end{align}

Combining \eqref{step1} and the results in \eqref{result1} to \eqref{result3}, we have that for $n$ sufficiently large,
\begin{align}
-d\log\delta^*(n,d,\varepsilon)
&\geq nC+\sqrt{nV}\Phi^{-1}(\varepsilon)-\frac{1}{2}\log n.
\end{align}

\subsubsection{Converse Proof}
We now proceed with the converse proof. Given any $\varepsilon\in[0,1)$, for any $\beta\in(0,\frac{1-\varepsilon}{2})$ and any $\kappa\in(0,1-\varepsilon-2d\beta)$, from Theorem \ref{fbl:converse}, we have
\begin{align}
-d\log \delta^*(n,d,\varepsilon)\leq -d\log \beta-\log\kappa+\sup_{\calA^n\in[0,1]^{nd}}\sup\bigg\{t\Big|\Pr\Big\{\sum_{i\in[n]}\imath_{\calA_i}(X_i;Y_i)\leq t\bigg\}\leq \varepsilon+2d\beta+\kappa\Big\}\label{step1:c}.
\end{align}
We first analyze the probability term in \eqref{step1:c}. Given any sequence of queries $\calA^n$, let
\begin{align}
C_{\calA^n}&:=\frac{1}{n}\sum_{i\in[n]}\mathbb{E}[\imath_{\calA_i}(X_i;Y_i)]\label{def:CA^n},\\
V_{\calA^n}&:=\frac{1}{n}\sum_{i\in[n]}\mathrm{Var}[\imath_{\calA_i}(X_i;Y_i)],\\
T_{\calA^n}&:=\frac{1}{n}\sum_{i\in[n]}\mathbb{E}[|\imath_{\calA_i}(X_i;Y_i)-\mathbb{E}[\imath_{\calA_i}(X_i;Y_i)]|^3],
\end{align}
Assume that there exists $V_->0$ such that $V_-\leq V_{\calA^n}$. Applying the Berry-Esseen theorem~\cite{berry1941accuracy,esseen1942liapounoff}, we have that
\begin{align}
\sup\bigg\{t \Big|\Pr\Big\{\sum_{i\in[n]}\imath_{\calA_i}(X_i;Y_i)\leq t\Big\}\leq \varepsilon+2d\beta+\kappa\bigg\}\leq nC_{\calA^n}+\sqrt{nV_{\calA^n}}\Phi^{-1}\bigg(\varepsilon+2d\beta+\kappa +\frac{6T_{\calA^n}}{\sqrt{nV_-^3}}\bigg)\label{step2:c}.
\end{align}
Let $\beta$ and $\kappa$ be chosen so that
\begin{align}
d\beta=\kappa=\frac{1}{\sqrt{n}}.
\end{align}

Using \eqref{step1:c} and \eqref{step2:c}, we have
\begin{align}
-d\log\delta^*(n,d,\varepsilon)
&\leq \log n+\sup_{\calA^n\in[0,1]^{nd}}\bigg(nC_{\calA^n}+\sqrt{nV_{\calA^n}}\Phi^{-1}\bigg(\varepsilon+\frac{2}{\sqrt{n}} +\frac{6T_{\calA^n}}{\sqrt{nV_-^3}}\bigg)\bigg)\label{touseinl2}.
\end{align}

For any sequence of queries $\calA^n$, we have
\begin{align}
C_{\calA^n}
&\leq \sup_{\calA\subseteq[0,1]^d}\mathbb{E}[\imath_{\calA}(X;Y)]=\sup_{p\in[0,1]}\mathbb{E}[\imath_{p}(X;Y)]=C\label{max:C1}.
\end{align}
Combining \eqref{touseinl2} and \eqref{max:C1}, when $n$ is sufficiently large, for any $\varepsilon\in[0,1)$,
\begin{align}
-d\log\delta^*(n,d,\varepsilon)
&\leq \log n+\sup_{\calA^n:|\calA_i|=q^*,~\forall i\in[n]}\bigg(nC_{\calA^n}+\sqrt{nV_{\calA^n}}\Phi^{-1}\bigg(\varepsilon+\frac{2}{\sqrt{n}} +\frac{6T_{\calA^n}}{\sqrt{nV_-^3}}\bigg)\bigg)\label{touseinl3}\\
&=\log n+nC+\sqrt{nV_\varepsilon}\Phi^{-1}\bigg(\varepsilon+\frac{2}{\sqrt{n}}+\frac{6T_{\calA^n}}{\sqrt{nV_-^3}}\bigg)\label{usedefveps}\\
&=nC+\sqrt{nV_\varepsilon}\Phi^{-1}(\varepsilon)+\log n+O(1)\label{taylor},
\end{align}
where \eqref{touseinl3} follows since i) for any $i\in[n]$, the maximum value of $\mathbb{E}[\imath_{\calA_i}[X_i;Y_i]]$ is achieved by any query $\calA_i$ with size $q^*$ which achieves the capacity $C$ and ii) when $n$ is sufficiently large, $nC_{\calA^n}=\frac{1}{n}\sum_{i\in[n]}\mathbb{E}[\imath_{\calA_i}[X_i;Y_i]]$ is the dominant term in the supremum, \eqref{usedefveps} follows from the definition of $V_\varepsilon$ in \eqref{def:veps} and \eqref{taylor} follows from the Taylor's expansion of $\Phi^{-1}(\cdot)$ (cf. \cite[Eq. (2.38)]{TanBook}) and the fact that $T_{\calA^n}$ is finite for discrete random variables $X^n$ and $Y^n$.

\subsection{Proof of Second-Order Achievable Asymptotics for Adaptive Querying (Theorem \ref{second:fbl:adaptive})}
\label{proof:second:fbl:adaptive}

\subsubsection{An Non-Asymptotic Achievability Bound}
In this subsection, we present an adaptive query procedure based on the variable length feedback code in \cite[Definition 1]{polyanskiy2011feedback} and analyze its non-asymptotic performance.

Let $\bX^\infty$ be a collection of $M^d$ random binary vectors $\{X^\infty(i_1,\ldots,i_d)\}_{(i_1,\ldots,i_d)\in[M]^d}$, each with infinite length and let $\bx^\infty$ denote a realization of $\bX^\infty$. Furthermore, let $Y^\infty$ be another random vector with infinite length where each element takes values in $\calY$ and let $y^\infty$ be a realization of $Y^\infty$. For any $\bw\in[0,1]^d$ and any $n\in\bbN$, given any sequence of queries $\calA^n=(\calA_1,\ldots,\calA_n)\in[0,1]^d$, define the following joint distribution of $(\bX^n,Y^n)$
\begin{align}
P_{\bX^nY^n}^{\calA^n,\bw}(\bx^n,y^n)
&=\prod_{t\in[n]}\Big(\prod_{(i_1,\ldots,i_d)\in[M]^d}\mathrm{Bern}_p(x_t(i_1,\ldots,i_d))\Big)P_{Y|X}^{\calA_t}(y_t|x_t(\bw))\label{def:pxyan}.
\end{align}
We can define $P_{\bX^\infty,Y^\infty}^{\calA^n,\bw}$ as a generalization of $P_{\bX^n,Y^n}^{\calA^n,\bw}$ with $n$ replaced by $\infty$. Since the channel is memoryless, such a generalization is reasonable.

Given any $(d,M)\in\bbN^2$, define a function $\Gamma:[M]^d\to [M^d]$ as follows: for any $(i_1,\ldots,i_d)\in[M]^d$,
\begin{align}
\Gamma(i_1,\ldots,i_d)=1+\sum_{j\in[d]}(i_j-1)M^{d-j}\label{def:Gamma}.
\end{align}
Note that the function $\Gamma(\cdot)$ is invertible. We denote $\Gamma^{-1}:[M^d]\to [M]^d$ the inverse function. Furthermore, given any $\lambda\in\bbR_+$ and any $m\in[M^d]$, define the stopping time
\begin{align}
\tau_m(\bx^\infty,y^\infty)&:=\inf\{n\in\bbN:~\imath_q(x^n(\Gamma^{-1}(m));y^n)\geq \lambda\}\label{def:taum}.
\end{align}
Our non-asymptotic bound states as follows.
\begin{theorem}
\label{fbl:ach:adaptive}
Given any $(d,M)\in\bbR_+\times\bbN$, for any $p\in[0,1]$ and $\lambda\in\bbR_+$, there exists an $(l,d,\frac{1}{M},\varepsilon)$-adaptive query procedure such that
\begin{align}
l&\leq \mathbb{E}[\tau_1(\bX^\infty,Y^\infty)],\\
\varepsilon&\leq(M^d-1)\Pr\{\tau_1(\bX^\infty,Y^\infty)\geq \tau_2(\bX^\infty,Y^\infty)\},
\end{align}
where the expectation and probability are calculated with respect to $P_{\bX^\infty,Y^\infty}^{\calA^\infty,\Gamma^{-1}(1)}$.
\end{theorem}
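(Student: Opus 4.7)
The plan is to adapt the variable-length feedback (VLF) achievability scheme of \cite{polyanskiy2011feedback} to the measurement-dependent 20 questions setting and specialize its analysis. Concretely, I partition $[0,1]^d$ into $M^d$ equal sub-cubes $\{\calS_{\bw}\}_{\bw\in[M]^d}$ and, independently of the target, draw a random infinite codebook $\bX^\infty=\{X^\infty(\bw)\}_{\bw\in[M]^d}$ with entries i.i.d.\ Bernoulli$(p)$. At each time $t\in\bbN$, I issue the query $\calA_t:=\bigcup_{\bw:X_t(\bw)=1}\calS_{\bw}$; the queries are thus pre-generated (non-adaptive in design), but the stopping time will be adaptive. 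For every index $m\in[M^d]$ I maintain a running information density $\imath_q(X^n(\Gamma^{-1}(m));Y^n)$, stop at $\tau:=\min_{m\in[M^d]}\tau_m(\bX^\infty,Y^\infty)$, and decode $\hat{\bw}=\Gamma^{-1}(m^*)$ where $m^*$ is the first index to cross $\lambda$; I then report $\hat{S}_i=(2\hat{w}_i-1)/(2M)$. Since the quantized index $\bw=(\rmq(S_1),\ldots,\rmq(S_d))$ being recovered forces $|\hat{S}_i-S_i|\leq 1/(2M)<1/M$ for every $i\in[d]$, the resolution $1/M$ is achieved whenever $\hat{\bw}=\bw$.

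For the mean-sample-complexity bound, the i.i.d.\ Bernoulli construction and the translational symmetry of the sub-cube partition imply that, for any pdf $f_{\bS}$, the joint law of $(\bX^\infty,Y^\infty)$ conditional on the true index $\bw$ coincides with the law under $\bw=\Gamma^{-1}(1)$ after relabeling codewords; consequently, it suffices to analyze the procedure under $P_{\bX^\infty Y^\infty}^{\calA^\infty,\Gamma^{-1}(1)}$. Since $\tau=\min_{m}\tau_m\leq\tau_1(\bX^\infty,Y^\infty)$ holds deterministically, I obtain $\mathbb{E}[\tau]\leq\mathbb{E}[\tau_1(\bX^\infty,Y^\infty)]$, yielding the first bound.

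For the excess-resolution probability, a decoding error occurs only if some $m\neq 1$ satisfies $\tau_m\leq\tau_1$, so a union bound gives $\Pr\{\hat{\bw}\neq\bw\}\leq\sum_{m=2}^{M^d}\Pr\{\tau_m\leq\tau_1\}$. The key symmetry step is that under $P_{\bX^\infty Y^\infty}^{\calA^\infty,\Gamma^{-1}(1)}$ the wrong-codeword labels $\Gamma^{-1}(2),\ldots,\Gamma^{-1}(M^d)$ are exchangeable: swapping any two wrong-codeword indices leaves the joint distribution invariant because the query $\calA_t$ (and hence $Y_t$) depends on the wrong codewords only through the unordered collection of their bits at time $t$, while the true codeword $X^\infty(\Gamma^{-1}(1))$ is untouched. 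This gives $\Pr\{\tau_m\leq\tau_1\}=\Pr\{\tau_2\leq\tau_1\}$ for every $m\neq 1$, and hence the claimed $(M^d-1)\Pr\{\tau_1\geq\tau_2\}$ bound.

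The step I expect to be trickiest, and would justify in detail, is precisely this exchangeability of wrong codewords: unlike in the classical VLF analysis, each $Y_t$ depends through $|\calA_t|$ on \emph{all} codewords at time $t$, so the wrong codewords are not independent of $Y^\infty$; what rescues the argument is that this dependence is symmetric across wrong-codeword labels, which is exactly what the permutation argument requires. Since both bounds hold for the randomized procedure (which is allowed in Definition~\ref{def:adaptive:procedure}), no expurgation is needed, and the construction above is a bona fide $(l,d,1/M,\varepsilon)$-adaptive query procedure with the stated parameters.
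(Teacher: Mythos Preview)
Your proposal is correct and follows essentially the same approach as the paper: both use the VLF scheme of \cite{polyanskiy2011feedback} with an i.i.d.\ Bernoulli$(p)$ infinite codebook, the query design of \eqref{def:query:ddim}, the stopping time $\tau^*=\min_m\tau_m$, and the two bounds $\tau^*\le\tau_1$ (for the delay) together with a union bound plus symmetry over wrong codewords (for the error). Two minor differences worth noting: (i) your tie-breaking rule (``first index to cross $\lambda$'') differs from the paper's (largest index attaining $\tau^*$), but both lead to the same error event $\{\exists\,m\neq 1:\tau_m\le\tau_1\}$ and hence the same bound; (ii) you explicitly argue exchangeability of the wrong-codeword labels by observing that $Y_t$ depends on them only through the symmetric function $|\calA_t|$, whereas the paper invokes ``symmetry'' without spelling this out---your justification is the more careful one here and is exactly the point that distinguishes the measurement-dependent setting from the classical VLF analysis. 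Finally, your remark that no expurgation is needed because the randomized procedure already satisfies Definition~\ref{def:adaptive:procedure} is correct; the paper's closing appeal to ``$\mathbb{E}[X]\le a$ implies $\exists\,x\le a$'' is unnecessary (and, taken literally, would not guarantee a single deterministic codebook meeting \emph{both} bounds simultaneously).
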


\begin{proof}[Proof of Theorem \ref{fbl:ach:adaptive}]
The proof of Theorem \ref{fbl:ach:adaptive} is inspired by \cite[Theorem 3]{polyanskiy2011feedback} and is largely similar to the proof for non-adaptive query procedures in Appendix \ref{proof:ach}. Thus, we only emphasize the differences here.

To prove Theorem \ref{fbl:ach:adaptive}, we analyze an adaptive query procedure based on the variable length feedback code in \cite{polyanskiy2011feedback}, which is stated as follows. Let $\bx=\{x^\infty(i_1,\ldots,i_d)\}_{(i_1,\ldots,i_d)\in[M]^d}$ be a sequence of $M^d$ binary codewords with infinite length. Then for any $n\in\bbN$ and any $(i_1,\ldots,i_d)\in[M]^d$, let $X^n(i_1,\ldots,i_d)$ be the first $n$ elements of $X^\infty(i_1,\ldots,i_d)$. Similarly to the proof of Theorem \ref{ach:fbl} in Appendix \ref{proof:ach}, we use the query $\calA_t$ as in \eqref{def:query:ddim} and apply the quantization function $\rmq(\cdot)$ in \eqref{def:qs} to generated quantized targets $\bw=(w_1,\ldots,w_d)$, i.e., $w_i=\rmq(s_i)$ for each $i\in[d]$ given any target variable $\bs=(s_1,\ldots,s_d)\in[0,1]^d$. The noiseless response to the query $\calA_t$ is then $X_t(\bw)$ and the noisy response $y_t$ is obtained by passing $x_t$ through the measurement-dependent channel $P_{Y|X}^{\calA_t}$.

The decoding process is summarized as follows, which includes the design of the stopping time and decoding function. Let $\lambda\in\bbR_+$ be a fixed threshold. Recall the definitions of $\Gamma(\cdot)$ in \eqref{def:Gamma} and $\tau_m(\bx^\infty,y^\infty)$ in \eqref{def:taum}. For any $(M,d)\in\bbN^2$, the stopping time is chosen as
\begin{align}
\tau^*(\bx^\infty,y^\infty):=\min_{m\in[M^d]}\tau_m(\bx^\infty,y^\infty).
\end{align}
The decoder outputs estimates $\hat{\bS}=(\hatS_1,\ldots,\hatS_d)$ via the following  two-stage decoding
\begin{enumerate}
\item the decoder first generates estimates $\hat{\bW}=(\hatW_1,\ldots,\hatW_d)$ as follows:
\begin{align}
\hat{\bW}=\Gamma^{-1}(\hatt),~\hatt=\max\{t\in[M^d]:\tau_j(\bx^\infty,y^\infty)=\tau^*(\bx^\infty,y^\infty)\},
\end{align}
\item the decoder produces estimates $\hat{\bS}=(\hatS_1,\ldots,\hatS_d)$ as
\begin{align}
\hatS_i=\frac{2\hatW_i-1}{2M},~i\in[d].
\end{align}
\end{enumerate}

Using the above adaptive query procedure, we have that the average stopping time satisfies
\begin{align}
\sup_{f_{\bS}\in\calF([0,1]^d)}\mathbb{E}[\tau^*(\bx^\infty,Y^\infty)]
&=\sup_{f_{\bS}\in\calF([0,1]^d)}\int_{\bs\in[0,1]^d}f_{\bS}(\bs)\mathbb{E}[\tau^*(\bx^\infty,Y^\infty)|\bS=\bs]\\
&=\sup_{P_{\bW}\in\calP([M]^d)}\sum_{\bw\in[M]^d}P_{\bW}(\bw)\mathbb{E}[\tau^*(\bx^\infty,Y^\infty)|\bW=\bw]\\
&\leq \sup_{P_{\bW}\in\calP([M]^d)}\sum_{\bw\in[M]^d}P_{\bW}(\bw)\mathbb{E}[\tau_{\Gamma(\bw)}(\bx^\infty,Y^\infty)|\bW=\bw]\label{delay},
\end{align}
and the excess-resolution probability with respect to the resolution $\delta=\frac{1}{M}$ satisfies
\begin{align}
\nn&\sup_{f_{\bS}\in\calF([0,1]^d)}\Pr\{\exists~i\in[d],~|\hatS_i-S_i|>\delta\}\\*
&\leq \sup_{P_{\bW}\in\calP([M]^d)}\Pr\{\hat{\bW}\neq\bW\}\\
&\leq \sup_{P_{\bW}\in\calP([M]^d)}\sum_{\bw\in[M]^d}P_\bW(\bw)\Pr\{\tau_{\Gamma(\bw)}(\bx^\infty,Y^\infty)\geq \tau^*(\bx^\infty,Y^\infty)\}\label{errorp}.
\end{align}
In the following, we will show that there exists binary codewords $\bx^{\infty}$ such that the results in \eqref{delay} and \eqref{errorp} are upper bounded by the desired bounds in Theorem \ref{ach:fbl}.

Let $\bX^{\infty}:=\{X^\infty(i_1,\ldots,i_d)\}_{(i_1,\ldots,i_d)\in[M]^d}$ be a sequence of $M^d$ binary codewords with infinite length where each codeword is generated i.i.d. from the Bernoulli distribution $P_X$ with parameter $p\in(0,1)$. For any $\bw\in[0,1]^d$ and any $n\in\bbN$, using the above adaptive query procedure, the joint distribution of $(\bX^n,Y^n)$ is $P_{\bX^n,Y^n}^{\calA^n,\bw}(\bx^n,y^n)$ as defined in \eqref{def:pxyan}. 

For any $P_{\bW}\in\calP([M]^d)$, we have
\begin{align}
\mathbb{E}_{\bX^\infty}[\tau^*(\bX^\infty,Y^\infty)]
&=\sum_{\bw\in[M]^d}P_{\bW}(\bw)\mathbb{E}_{P_{\bX^\infty,\bY^\infty}^{\calA^n,\bw}}[\tau^*(\bX^\infty,Y^\infty)]\\
&\leq \sum_{\bw\in[M]^d}P_{\bW}(\bw)\mathbb{E}_{P_{\bX^\infty,\bY^\infty}^{\calA^n,\bw}}[\tau_{\Gamma(\bw)}(\bX^\infty,Y^\infty)]\\
&=\sum_{\bw\in[M]^d}P_{\bW}(\bw)\mathbb{E}_{P_{\bX^\infty,\bY^\infty}^{\calA^n,\Gamma^{-1}(1)}}[\tau_1(\bX^\infty,Y^\infty)]\label{symmetry}\\
&=\mathbb{E}_{P_{\bX^\infty,\bY^\infty}^{\calA^n,\Gamma^{-1}(1)}}[\tau_1(\bX^\infty,Y^\infty)],
\end{align}
where \eqref{symmetry} follows since for each $\bw\in[M]^d$, from the definition of $\tau_{\cdot}(\cdot)$ in \eqref{def:taum}, 
\begin{align}
\mathbb{E}_{P_{\bX^\infty,\bY^\infty}^{\calA^n,\bw}}[\tau_{\Gamma(\bw)}(\bX^\infty,Y^\infty)]=\mathbb{E}_{P_{\bX^\infty,\bY^\infty}^{\calA^n,\mathrm{ones}(d)}}[\tau_{\Gamma(\mathrm{ones}(d))}(\bX^\infty,Y^\infty)]=\mathbb{E}_{P_{\bX^\infty,\bY^\infty}^{\calA^n,\Gamma^{-1}(1)}}[\tau_1(\bX^\infty,Y^\infty)],
\end{align}
and we use $\mathrm{ones}(d)$ to denote the all one vector with length $d$.

Similarly, we have
\begin{align}
\mathbb{E}_{\bX^{\infty}}[\Pr[\hat{\bW}\neq \bW]]
&\leq \sum_{\bw\in[M]^d}P_{\bW}(\bw)\Pr_{P_{\bX^\infty,\bY^\infty}^{\calA^n,\bw}}\{\tau_{\Gamma(\bw)}(\bX^\infty,Y^\infty)\geq \tau^*(\bX^\infty,Y^\infty)\}\\
&=\sum_{\bw\in[M]^d}P_{\bW}(\bw)\Pr_{P_{\bX^\infty,\bY^\infty}^{\calA^n,\Gamma^{-1}(1)}}\{\tau_1(\bX^\infty,Y^\infty)\geq \tau^*(\bX^\infty,Y^\infty)\}\label{usesymmetry11}\\
&=\Pr_{P_{\bX^\infty,\bY^\infty}^{\calA^n,\Gamma^{-1}(1)}}\{\tau_1(\bX^\infty,Y^\infty)\geq \tau^*(\bX^\infty,Y^\infty)\}\\
&\leq (M^d-1)\Pr_{P_{\bX^\infty,\bY^\infty}^{\calA^n,\Gamma^{-1}(1)}}\{\tau_1(\bX^\infty,Y^\infty)\geq \tau_2(\bX^\infty,Y^\infty)\}\label{usesymmetry22},
\end{align}
where \eqref{usesymmetry11} follows from the symmetry which implies that $\Pr\{\tau_{\Gamma(\bw)}(\bX^\infty,Y^\infty)\geq \tau^*(\bX^\infty,Y^\infty)\}=\Pr\{\tau_1(\bX^\infty,Y^\infty)\geq \tau^*(\bX^\infty,Y^\infty)\}$ for any $\bw\in[M]^d$ and \eqref{usesymmetry22} follows from the union bound and the symmetry similar to \eqref{usesymmetry11}.

The proof of Theorem \ref{fbl:ach:adaptive} is completed by using the simple fact that $\mathbb{E}[X]\leq a$ implies that there exists $x\leq a$ for any random variable $X$ and constant $a\in\bbR$.
\end{proof}

\subsubsection{Proof of Achievable Second-Order Asymptotics}
\label{proof:second:ach:haha}
The proof of second-order asymptotics for adaptive querying proceeds similarly as \cite{polyanskiy2011feedback} and we only highlight the differences here. Let $q^*\in\calP_{\rm{ca}}$ be a capacity-achieving parameter for measurement-dependent channels $\{P_{Y|X}^q\}_{q\in[0,1]}$. From Theorem \ref{fbl:ach:adaptive}, we have that there exists an $(l,d,\frac{1}{M},\varepsilon)$-adaptive query procedure such that 
\begin{align}
l&\leq \mathbb{E}[\tau_1(\bX^\infty,Y^\infty)],\\
\varepsilon&\leq(M^d-1)\Pr\{\tau_1(\bX^\infty,Y^\infty)\geq \tau_2(\bX^\infty,Y^\infty)\}\label{touppe}.
\end{align}
Unless otherwise stated, the expectation and probability are calculated with respect to slight generalization of the joint distribution $P_{\bX^n Y^n}^{\calA^n,\Gamma^{-1}(1)}$ in \eqref{def:pxyan}.

For subsequent analyses, let $P_X$ be the Bernoulli distribution with parameter $q^*$ and let $\tilP_{XY}$ be the following joint distribution
\begin{align}
\tilP_{XY}(x,y)
&:=\sum_{\barx_1,\ldots,\barx^{M^d-1}}P_X(x)\bigg(\prod_{j\in[M^d-1]}P_X(\barx_j)\bigg)P_{Y|X}^{\frac{x+\sum_{j\in[M^d-1]}\barx_j}{M^d}}(y|x).
\end{align} 
Note that $\tilP_{XY}$ is the marginal distribution of $(X_i(\Gamma^{-1}(1)),Y_i)$ for each $i\in[n]$ induced from $P_{\bX^n\bY^n}^{\calA^n,\Gamma^{-1}(1)}$ under our query procedure.

Furthermore, define the ``mismatched'' version of the capacity.
\begin{align}
C_1&:=\mathbb{E}_{\tilP_{XY}}[\imath_{q^*}(X;Y)]\label{def:C1}.
\end{align}
Finally, for each $n\in\bbN$, let
\begin{align}
U_n&:=\imath_{q^*}(X^n;Y^n)=\sum_{i\in[n]}\imath_{q^*,q^*}(X_i;Y_i).
\end{align}
It can be easily verified that $\{U_n-nC_1\}_{n\in\bbN}$ is a martingale and for each $n\in\bbN$, $\mathbb{E}[U_n-nC_1]=0$. The optional stopping theorem~\cite[Theorem 10.10]{williams1991probability} implies that
\begin{align}
0&=\mathbb{E}[U_{\tau_1(\bX^\infty,\bY^\infty)}-C_1\tau_1(\bX^\infty,\bY^\infty)]\\
&\leq \lambda+a_0-C_1\mathbb{E}[\tau_1(\bX^\infty,\bY^\infty)]]\label{uppl2},
\end{align}
where $a_0$ is a uniform upper bound on the information density $U_1$.
Thus,
\begin{align}
\mathbb{E}[\tau_1(\bX^\infty,\bY^\infty)]
&\leq \frac{\lambda+a_0}{C_1}\label{uppl3}.
\end{align}
We then focus on upper bounding \eqref{touppe}. From \eqref{uppl3}, we have that
\begin{align}
\Pr\{\tau_1(\bX^\infty,\bY^\infty)<\infty\}=1,
\end{align}
since otherwise the expectation value of $\tau_1(\bX^\infty,\bY^\infty)$ would be infinity.

Recall the definition of the typical set $\calT(\cdot)$ in \eqref{def:typical}. For any $\eta\in\bbR_+$, we have
\begin{align}
\nn&\Pr\{\tau_1(\bX^\infty,Y^\infty)\geq \tau_2(\bX^\infty,Y^\infty)\}\\*
&\leq \Pr\{\tau_2(\bX^\infty,Y^\infty)<\infty\}\\
&=\sum_{t\in\bbN}\bbo(t<\infty)\Pr\{\tau_2(\bX^\infty,Y^\infty)=t\}\\
&=\sum_{t\in\bbN}\bbo(t<\infty)\Big\{\Pr\{\tau_2(\bX^\infty,Y^\infty)=t,\bX^t\in\calT^t(M,d,q^*,\eta)\}+\Pr\{\bX^t\notin\calT^t(M,d,q^*,\eta)\}\Big\}\\
&\leq\sum_{t\in\bbN}\bbo(t<\infty)\Big\{\exp(t\eta c(p))\Pr_{P_{\bX^\infty,Y^\infty}^{q^*,\Gamma^{-1}(1)}}\{\tau_2(\bX^\infty,Y^\infty)=t\}+4t\exp(-2M^d\eta^2)\Big\}\label{usetypicalha},
\end{align} 
where \eqref{usetypicalha} follows from \eqref{fromassumption} and the upper bound on the probability of atypicality similar to \eqref{useatypical} and in \eqref{usetypicalha}, we use the change-of-measure technique and the distribution $P_{\bX^\infty,Y^\infty}^{q^*,\Gamma^{-1}(1)}$ is a generalization of $P_{\bX Y^n}^{p,\bw}(\cdot)$ in \eqref{altdis} to an infinite length.

Given any $l'\in\bbR_+$, let $(\lambda,\eta,M)\in\bbR_+^2\times\bbN$ be chosen so that
\begin{align}
\lambda&=l'C_1-a_0,\label{herechooselambdahehe}\\
d\log M&=\lambda-\log l'\label{herechooseMhehe},\\
\eta&:=\sqrt{\frac{d\log M}{2M^d}}=O\left(\frac{\sqrt{l'}}{\exp(l'C_1/2)}\right).
\end{align}
Then from \eqref{uppl3}, we have
\begin{align}
\mathbb{E}[\tau_1(\bX^\infty,Y^\infty)]\leq l'.
\end{align}
Furthermore, similarly to \cite[Section D]{polyanskiy2011feedback}, we have
\begin{align}
\nn&\Pr\{\tau_1(\bX^\infty,Y^\infty)\geq \tau_2(\bX^\infty,Y^\infty)\}\\*
&=\sum_{t\in\bbN}\bbo(t<\infty)\left(1+O\left(l'^{\frac{3}{2}}\exp\left(-\frac{l'C_1}{2}\right)\right)\right)\Pr_{P_{\bX^\infty,Y^\infty}^{q^*,\Gamma^{-1}(1)}}\{\tau_2(\bX^\infty,Y^\infty)=t\}\\
&=\left(1+O\left(l'^{\frac{3}{2}}\exp\left(-\frac{l'C_1}{2}\right)\right)\right)\lim_{t\to\infty}\Pr_{P_{\bX^\infty,Y^\infty}^{q^*,\Gamma^{-1}(1)}}\{\tau_2(\bX^\infty,Y^\infty)<t\}\\
&=\left(1+O\left(l'^{\frac{3}{2}}\exp\left(-\frac{l'C_1}{2}\right)\right)\right)\lim_{t\to\infty}\mathbb{E}_{P_{\bX^\infty,Y^\infty}^{q^*,\Gamma^{-1}(1)}}\big[\exp(-U_t)\bbo(\tau_1(\bX^\infty,Y^\infty)<t)\big]\label{applycof}\\
&=\left(1+O\left(l'^{\frac{3}{2}}\exp\left(-\frac{l'C_1}{2}\right)\right)\right)\mathbb{E}_{P_{\bX^\infty,Y^\infty}^{q^*,\Gamma^{-1}(1)}}\big[\exp(-U_{\tau_1(\bX^\infty,Y^\infty)})\bbo(\tau_1(\bX^\infty,Y^\infty)<\infty)\big]\label{followfblwfb}\\
&\leq \left(1+O\left(l'^{\frac{3}{2}}\exp\left(-\frac{l'C_1}{2}\right)\right)\right)\exp(-\lambda)\label{def:tau1},
\end{align}
where \eqref{applycof} follows from the change-of-measure technique, \eqref{def:tau1} follows from the definition of $\tau_1(\bX^\infty,Y^\infty)$ in \eqref{def:taum} and \eqref{followfblwfb} follows similarly as \cite[Eq. (113) to Eq. (117)]{polyanskiy2011feedback} and the details are as follows:
\begin{align}
\nn&\lim_{t\to\infty}\mathbb{E}_{P_{\bX^\infty,Y^\infty}^{\mathrm{md},1}}\big[\exp(-U_t)\bbo(\tau_1(\bX^\infty,Y^\infty)<t)\big]\\*
&=\lim_{t\to\infty}\mathbb{E}_{P_{\bX^\infty,Y^\infty}^{\mathrm{md},1}}\big[\exp(-U_t)\bbo(\tau_t(\bX^\infty,Y^\infty)<t)\big]\label{def:taut}\\
&=\lim_{t\to\infty}\mathbb{E}_{P_{\bX^\infty,Y^\infty}^{\mathrm{md},1}}\big[\exp(-U_{\tau_t(\bX^\infty,Y^\infty)})\bbo(\tau_t(\bX^\infty,Y^\infty)<t)\big]\label{useost4mar}\\
&=\mathbb{E}_{P_{\bX^\infty,Y^\infty}^{\mathrm{md},1}}\big[\lim_{t\to\infty}\exp(-U_{\tau_t(\bX^\infty,Y^\infty)})\bbo(\tau_t(\bX^\infty,Y^\infty)<t)\big]\\
&=\mathbb{E}_{P_{\bX^\infty,Y^\infty}^{\mathrm{md},1}}\big[\exp(-U_{\tau_1(\bX^\infty,Y^\infty)})\bbo(\tau_1(\bX^\infty,Y^\infty)<\infty)\big]
\end{align}
where in \eqref{def:taut}, we define $\tau_t(\bX^\infty,Y^\infty):=\min\{\tau_1(\bX^\infty,Y^\infty),t\}$, \eqref{useost4mar} follows by applying the optional stopping theorem~\cite[Theorem 10.10]{williams1991probability} to the martingale $\exp(-U_t)$ and the stopping time $\tau_t(\bX^\infty,Y^\infty)$.

Thus, using \eqref{touppe}, for $l'$ sufficient large, the excess-resolution probability satisfies
\begin{align}
\varepsilon
&\leq (M^d-1)\Pr_{P_{\bX^\infty,Y^\infty}^1}\{\tau_1(\bX^\infty,Y^\infty)\geq \tau_2(\bX^\infty,Y^\infty)\}\\*
&\leq \left(1+O\left(l'^{\frac{3}{2}}\exp\left(-\frac{l'C_1}{2}\right)\right)\right)M^d\exp(-\lambda)\\
&=\left(1+O\left(l'^{\frac{3}{2}}\exp\left(-\frac{l'C_1}{2}\right)\right)\right)\frac{1}{l'}\label{usechooseMhehe},
\end{align}
where \eqref{usechooseMhehe} follows from the choice of $M$ in \eqref{herechooseMhehe}.

Recall the definition of the ``capacity'' $C$ of measurement-dependent channels in \eqref{def:capacity}. Using the definition of $C_1$ in \eqref{def:C1}, we have that
\begin{align}
C_1
&=\mathbb{E}_{\barP_{XY}}[\imath_{q^*}(X;Y)]\\
&=\mathbb{E}_{P_{XY}}\bigg[\frac{\barP_{XY}(X,Y)}{P_{XY}(X,Y)}\imath_{q^*}(X;Y)\bigg]\\
&\leq \exp(\eta c(p))\mathbb{E}_{P_{XY}}[\imath_{q^*}(X;Y)]+2\exp(-2M^d\eta^2)\label{changeofmeasureagain2}\\
&=\exp(\eta c(p))C+2\exp(-2M^d\eta^2).
\end{align}
where \eqref{changeofmeasureagain2} follows from the change-of-measure technique and the result in \eqref{fromassumption}. Given the choice of $M$ and $\eta$, we have
\begin{align}
C_1=C+O(l'\exp(-l')).
\end{align}

Thus, till now, we have proved that the above adaptive query procedure is an $(l',d,\exp(-\frac{l'C_1-\log l'-a_0}{d}),\frac{1}{l'})$-adaptive query procedure for sufficiently large $l'$ with proper choice of the parameters $(M,p,\lambda)$. For any $\varepsilon\in[0,1)$, consider the following query procedure with the drop out strategy: with probability $\varepsilon$, we do not pose any query and with the remaining probability, we use the above-constructed $(l',d,\exp(-\frac{l'C_1-\log l'-a_0}{d}),\frac{1}{l'})$-adaptive query procedure. For $l'$ sufficiently large, it is easy to verify that the drop out adaptive query procedure is an $(l,d,\varepsilon',\delta)$-adaptive query procedure where
\begin{align}
l&= (1-\varepsilon)l',\\
\varepsilon'
&=\varepsilon+\frac{1-\varepsilon}{l'}\approx \varepsilon,\\
-d\log \delta
&=l'C_1-\log l'-a_0=l' C+O(\log l')=\frac{Cl}{1-\varepsilon}+O(\log l).
\end{align}

\section*{Acknowledgments}
The authors acknowledge three anonymous reviewers for many helpful comments and suggestions, which significantly improve the quality of the current manuscript.

\bibliographystyle{IEEEtran}
\bibliography{IEEEfull_lin}

\end{document}